\newtheorem{thm}{Theorem}[section]
\newtheorem{cor}[thm]{Corollary}
\newtheorem{pro}[thm]{Proposition}
\theoremstyle{definition}
\newtheorem{ex}[thm]{Example}
\newtheorem{rmk}[thm]{Remark}
\newtheorem{defi}[thm]{Definition}
\newcommand{\nc}{\newcommand}
\newcommand{\delete}[1]{}
\nc{\mlabel}[1]{\label{#1}}  
\nc{\mcite}[1]{\cite{#1}}  
\nc{\mref}[1]{\ref{#1}}  
\nc{\meqref}[1]{\eqref{#1}}  
\nc{\mbibitem}[1]{\bibitem{#1}} 
\nc{\mlabel}[1]{\label{#1}{\hfill \hspace{1cm}{\bf{{\ }\hfill(#1)}}}}
\nc{\mcite}[1]{\cite{#1}{{\bf{{\ }(#1)}}}}  
\nc{\mref}[1]{\ref{#1}{{\bf{{\ }(#1)}}}}  
\nc{\meqref}[1]{\eqref{#1}{{\bf{{\ }(#1)}}}}  
\nc{\mbibitem}[1]{\bibitem[\bf #1]{#1}} 
\DeclareMathOperator{\im}{Im}
\newcommand {\comment}[1]{{\marginpar{*}\scriptsize\textbf{Comments:} #1}}
\newcommand {\emptycomment}[1]{}
\newcommand {\yh}[1]{{\textcolor{purple}{yh: #1}}}
\newcommand{\hl}[1]{\textcolor{blue}{hl: #1}}
\newcommand {\yh}[1]{{yh: #1}}
\newcommand{\hl}[1]{{hl: #1}}
\nc{\oprn}{\theta}
\newcommand{\comment}[1]{{\marginpar{*}\scriptsize\textbf{Comments:} #1}}
\newcommand{\emptycomment}[1]{}
\newcommand{\yh}[1]{{\marginpar{*}\scriptsize\textcolor{purple}{yh: #1}}}
\nc{\calo}{\mathcal{O}}
\nc{\oop}{$\mathcal{O}$-operator\xspace}
\nc{\oops}{$\mathcal{O}$-operators\xspace}
\nc{\mrho}{{\bm{\varrho}}}
\nc{\bfk}{\mathbf{K}}
\nc{\invlim}{\displaystyle{\lim_{\longleftarrow}}\,}
\nc{\ot}{\otimes}
\nc{\eval}[1]{\Big|_{#1}}
\newcommand{\add}{\frka\frkd}
\newcommand{\B }{\mathfrak{B}}
\newcommand{\lon }{\,\rightarrow\,}
\newcommand{\be }{\begin{equation}}
\newcommand{\ee }{\end{equation}}
\newcommand{\g}{\mathfrak g}
\newcommand{\h}{\mathfrak h}
\nc{\RR}{\mathbb{R}}
\nc{\CC}{\mathbb{C}}
\newcommand{\huaB}{\mathcal{B}}
\newcommand{\huaG}{\mathcal{G}}
\newcommand{\huaI}{\mathcal{I}}
\newcommand{\frka}{\mathfrak a}
\newcommand{\frkd}{\mathfrak d}
\newcommand{\Courant}[1]{\left\llbracket  #1\right\rrbracket }
\newcommand{\br}[1]{   [ \cdot,    \cdot  ]   }
\newcommand{\id}{\mathsf{id}}
\newcommand{\Ad}{\mathrm{Ad}}
\newcommand{\gl}{\mathfrak {gl}}
\newcommand{\ad}{\mathrm{ad}}
\nc{\CV}{\mathbf{C}}
\begin{document}

\title[Factorizable Lie bialgebras,   quadratic RB Lie algebras   and RB Lie bialgebras]{Factorizable Lie bialgebras,   quadratic Rota-Baxter Lie algebras   and Rota-Baxter Lie bialgebras}

\author{Honglei Lang}
\address{College of Science, China Agricultural University, Beijing, 100083, China
}
\email{hllang@cau.edu.cn}

\author{Yunhe Sheng}
\address{Department of Mathematics, Jilin University, Changchun 130012, Jilin, China}
\email{shengyh@jlu.edu.cn}


\begin{abstract}
In this paper, first we introduce the notion of quadratic Rota-Baxter Lie algebras of arbitrary weight, and show that there is a one-to-one correspondence between factorizable Lie bialgebras and   quadratic Rota-Baxter Lie algebras of nonzero weight. Then we introduce the notions of matched pairs, bialgebras and Manin triples of Rota-Baxter Lie algebras of arbitrary weight, and show that Rota-Baxter Lie bialgebras,  Manin triples of Rota-Baxter Lie algebras and certain matched pairs  of Rota-Baxter Lie algebras are equivalent. The   coadjoint representations  and quadratic Rota-Baxter Lie algebras play  important roles in the whole study.  Finally we generalize some results to the Lie group context. In particular, we show that there is a one-to-one correspondence between factorizable Poisson Lie groups and   quadratic Rota-Baxter Lie groups.
\end{abstract}


\keywords{factorizable Lie bialgebra,  quadratic Rota-Baxter Lie algebra, Rota-Baxter Lie bialgebra, factorizable Poisson Lie group\\
\quad  2020 \emph{Mathematics Subject Classification.}  17B38,  17B62,   22E60, 53D17}

\maketitle

\tableofcontents

\section{Introduction}

Poisson Lie groups, as semi-classical counterparts of Hopf algebras, or quantum groups, are Lie groups with compatible Poisson structures in the sense that the group multiplication is a Poisson map. The infinitesimals  of Poisson Lie groups are Lie bialgebras. A quasitriangular Hopf algebra \cite{D1} is a Hopf algebra $\mathcal{A}$ with an invertible element $R\in \mathcal{A}\otimes \mathcal{A}$, called quantum $R$-matrices, satisfying some conditions. Particularly, $R$ satisfies the quantum Yang-Baxter equation. The quasiclassical limits of quantum $R$-matrices are  classical $r$-matrices (\cite{STS}), which give rise to  the notion of quasitriangular Lie bialgebras.
Factorizable Lie bialgebras and factorizable Hopf algebras, introduced in \cite{RS},
are important classes of quasitriangular Lie bialgebras and quasitriangular Hopf algebras. For instance, the double of an arbitrary Lie bialgebra is a factorizable Lie bialgebra.
As the name suggests, factorizable Lie bialgebras (Hopf algebras) are used to
 establish the relation between classical $r$-matrices (quantum $R$-matrices) and certain factorization problems in Lie algebras (Hopf algebras).  A key feature of  the factorizable case  is that the symmetric part  of  a classical $r$-matrix $r\in\otimes ^2\g$ (quantum $R$-matrix) identifies the underlying vector space of the Lie algebra $\g$ (the Hopf algebra) with its dual space $\g^*$. Then a Lie bialgebra $(\g,\g^*)$ gives two compatible Lie algebra structures  on $\g$, which leads to a double Lie algebra. Such double Lie algebras generate Hamiltonian systems and can be solved by the factorization; see \cite{STS,S2}.

Rota-Baxter associative algebras were introduced in the probability study of G. Baxter. They have important applications in various areas, including the Connes-Kreimer's algebraic approach to renormalization of quantum field theory \cite{CK}, noncommutative
symmetric functions \cite{Fard,Yu-Guo}, splitting of
operads~\cite{BBGN,PBG}, quantum analogue of Poisson geometry \cite{Uchino08} and double Poisson algebras \cite{ARR,Goncharov,Schedler}. A linear map $B:\g\lon\g$ is called a {\bf Rota-Baxter operator of weight $\lambda$} on a Lie algebra $\g$ if
\begin{equation*}
[B(x), B(y)]_\g=B([B(x),y]_\g+[x,B(y)]_\g+\lambda[x,y]_\g), \quad \forall x,y\in\g.
\end{equation*} There is a close relationship between Rota-Baxter operators on Lie algebras and solutions of the (modified) classical Yang-Baxter equation. More precisely, a Rota-Baxter operator of weight 0 on a Lie algebra is naturally the operator form of a classical $r$-matrix~\cite{STS} under certain conditions;   Rota-Baxter operators of weight 1 on a Lie algebra one-to-one correspond to solutions of the modified classical Yang-Baxter equation.
To better understand such connections, the notion of an $\calo$-operator   (also called a relative Rota-Baxter operator \cite{PBG}
 or a generalized Rota-Baxter operator \cite{Uch})
on a Lie algebra $\g$ with respect to a representation $(V,\rho)$ was introduced by Kupershmidt in \cite{Ku},
which can be traced back to Bordemann's earlier work \cite{Bor}. An $\calo$-operator on a Lie algebra $\g$ with respect to a representation $(V,\rho)$ is a linear operator $T:V\to \g$ satisfying
$$
[Tu,Tv]_\g=T(\rho(Tu)v-\rho(Tv)u),\quad\forall u,v\in V.
$$
A skew-symmetric classical $r$-matrix $r\in\wedge^2\g^*$ naturally gives rise to an $\calo$-operator $r^\sharp:\g^*\to\g$ on $\g$ with respect to the coadjoint representation. On the other hand, any $\calo$-operator $T:V\to\g$ gives rise to a skew-symmetric solution of the classical Yang-Baxter equation in the semidirect product Lie algebra $\g\ltimes_\rho V$, see  \cite{Bai}. Rota-Baxter operators and $\calo$-operators also play important roles in the study of integrable systems \cite{BGN2010,STS}. For further details on
Rota-Baxter operators, see~\cite{Gub}.

 Even though the operator forms of skew-symmetric solutions of the classical Yang-Baxter equation are very clear, namely $\calo$-operators with respect to the coadjoint representation (equivalently Rota-Baxter operators of weight zero \cite{BD,STS} under certain conditions),  the operator forms of non-skew-symmetric solutions of the classical Yang-Baxter equation are still mysterious. Recently, Goncharov established a correspondence between non-skew-symmetric solutions of the classical Yang-Baxter equation and  Rota-Baxter operators of nonzero weight for certain Lie algebras \cite{G1,G2}. On the other hand, Kosmann-Schwarzbach clarified  the relation between factorizable Lie bialgebras and double Lie algebras \cite{K}, while the latter are closely related  to Rota-Baxter Lie algebras of weight 1. Thus it is natural to expect a direct relation between  factorizable Lie bialgebras and Rota-Baxter Lie algebras. This serves as the first purpose of the paper. We introduce the notion of a quadratic Rota-Baxter Lie algebra of weight $\lambda$, and establish a one-to-one correspondence between factorizable Lie bialgebras and quadratic Rota-Baxter Lie algebras of nonzero weight.

It is well known that quadratic Lie algebras play important roles in the theories of Manin triples of Lie algebras and Lie bialgebras. Since we have defined quadratic Rota-Baxter Lie algebras of arbitrary weight, it is natural to develop the bialgebra theory, the Manin triple theory as well as the matched pair theory for Rota-Baxter Lie algebras. To define matched pairs, we introduce the notion of representations of Rota-Baxter Lie algebras of weight $\lambda$. A Rota-Baxter Lie algebra of weight $\lambda$ admits the adjoint representation, as well as the coadjoint representation, while the latter is quite tricky. For a quadratic Rota-Baxter Lie algebra of weight $\lambda$, its adjoint representation and   coadjoint representation are isomorphic. The data in the coadjoint representation also serves as a guidance on how to define a  Rota-Baxter Lie bialgebra of weight $\lambda$. We introduce the notions of matched pairs, bialgebras and Manin triples of Rota-Baxter Lie algebras of weight $\lambda$ and show that they are equivalent.

Recently,  Rota-Baxter operators are defined in the categories of Lie groups \cite{GLS} and cocommutative Hopf algebras \cite{G}, and have found some applications \cite{BaG,BaG2}. Taking the differentiation, a Rota-Baxter operator on a Lie group gives a Rota-Baxter operator   on the corresponding Lie algebra. A Rota-Baxter operator on a cocommutative Hopf algebra gives rise to a Rota-Baxter operator on the Lie algebra of primitive elements and a Rota-Baxter operator on the group of group-like elements. We further generalize some results in the algebraic setting to the Lie group context. In particular, we show that factorizable Poisson Lie groups are in one-to-one correspondence with quadratic Rota-Baxter Lie groups, and a Rota-Baxter Lie group gives a matched pair of Lie groups. It is still mysterious to combine a Rota-Baxter operator and a Poisson structure on a Lie group to define a Rota-Baxter   Poisson   Lie group, and we will study it in the future. Another topic that we would like to explore in the future is the operator forms of solutions of the classical dynamical Yang-Baxter equation, which were first considered in \cite{F} and further studied in \cite{EV}.

The paper is organized as follows. In Section \ref{sec:fac}, we introduce the notion of a quadratic Rota-Baxter Lie algebra  of   weight $\lambda$, and show that   factorizable Lie bialgebras one-to-one correspond to quadratic Rota-Baxter Lie algebras  of nonzero weight (Theorem \ref{FL} and Theorem \ref{converse}). In Section \ref{sec:mp}, first we show that a Rota-Baxter Lie algebra gives a matched pair of Lie algebras. Then we introduce the notion of a representation of a Rota-Baxter Lie algebra of weight $\lambda$. A Rota-Baxter Lie algebra has the adjoint representation and the coadjoint representation, and for a quadratic Rota-Baxter Lie algebra they are isomorphic (Theorem \ref{thm:coadjoint}). Finally we introduce the notion of a matched pair of Rota-Baxter Lie algebras and show that it gives rise to a descendent matched pair of Lie algebras (Theorem \ref{demp}). In Section \ref{sec:rbb}, first we introduce the notion of a Rota-Baxter Lie bialgebra of weight $\lambda$, and show that they are equivalent to certain matched pairs of Rota-Baxter Lie algebras (Theorem \ref{bimp}). The double of a Rota-Baxter Lie bialgebra is still a   Rota-Baxter Lie bialgebra (Proposition \ref{abc}). We show that factorizable Lie bialgebras naturally give Rota-Baxter Lie bialgebras (Theorem \ref{pro:FL}). Then we introduce the notion of Manin triples of Rota-Baxter Lie algebras and show that they are equivalent to Rota-Baxter Lie bialgebras. In Section \ref{sec:rbg}, we show that factorizable Poisson Lie groups   one-to-one correspond to quadratic Rota-Baxter Lie groups (Theorem \ref{GFL} and Theorem \ref{thm:rbg-fac}), and a Rota-Baxter Lie group gives rise to a matched pair of Lie groups (Theorem \ref{thm:mp}).

\vspace{2mm}
\noindent
{\bf Acknowledgements. } This research is supported by NSFC (11922110, 11901568). We give our warmest thanks to Maxim Goncharov for helpful comments.

\section{Factorizable Lie bialgebras and quadratic Rota-Baxter Lie  algebras}\label{sec:fac}

In this section, we establish a one-to-one correspondence between factorizable Lie bialgebras and quadratic Rota-Baxter Lie  algebras of nonzero weight.

\subsection{Preliminaries on factorizable Lie bialgebras}
We first briefly recall the definitions of quasitriangular Lie bialgebras and factorizable Lie bialgebras following \cite{D1, EK, K,Lu, Lu2, RS, WX}.

\begin{defi}
A {\bf Lie bialgebra}  is a pair of Lie algebras $(\g,[\cdot,\cdot]_\g)$ and $(\g^*,[\cdot,\cdot]_{\g^*})$ such that
\[\Delta [x,y]_\g=[\Delta(x),y]_\g+[x,\Delta(y)]_\g,\quad \forall x,y\in \g,\]
where $\Delta:\g\to \wedge^2 \g$ is defined by $\langle \Delta(x),\xi\wedge \eta\rangle:=\langle x,[\xi,\eta]_{\g^*}\rangle$, and called the cobracket.
\end{defi}
A Lie bialgebra is denoted by $(\g,\g^*)$ or $(\g,\Delta)$.  For such a Lie bialgebra, there is a Lie algebra structure $[\cdot,\cdot]_{\bowtie}$ on the double $\g\oplus \g^*$ such that $\g$ and $\g^*$ are Lie subalgebras and
\[[x,\xi]_{\bowtie }=-\add_{\xi}^*x+\ad_x^*\xi,\quad \forall x\in \g,\xi\in \g^*,\]
where $\add_{\xi}^*x\in \g$ and $\ad_x^*\xi\in \g^*$ are given by
\[\langle \add_{\xi}^*x, \eta\rangle=-\langle x,[\xi,\eta]_{\g^*}\rangle,\quad \langle \ad_x^*\xi,y\rangle =-\langle \xi,[x,y]_{\g}\rangle.\]
Denote this Lie algebra by $\g\bowtie \g^*$. We refer the readers to \cite{Lu, Lu2} for the equivalent description of Lie bialgebras by using Manin triples of Lie algebras. One important observation is that if $(\g,\g^*)$ is a Lie bialgebra, then $(\g^*,\g)$ is also a Lie bialgebra.




Let $(\g,[\cdot,\cdot]_\g)$ be a Lie algebra. An element $r=\sum_i x_i\otimes y_i\in \g\otimes \g$  is called a solution of the classical Yang-Baxter equation if it satisfies
\[[r_{12},r_{13}]+[r_{13},r_{23}]+[r_{12},r_{23}]=0,\]
where $r_{12}=\sum_i x_i\otimes y_i\otimes 1$,   $r_{23}=\sum_i 1\otimes x_i\otimes y_i$ and $r_{13}=\sum_i x_i\otimes 1\otimes y_i$, and the bracket $[r_{12},r_{13}]$ is defined by
\[[r_{12},r_{13}]=[\sum_i x_i\otimes y_i\otimes 1,\sum_j x_j\otimes 1\otimes y_j]=\sum_{i,j}[x_i,x_j]_\g\otimes y_i\otimes y_j, \]
and similarly for $[r_{13},r_{23}]$ and $[r_{12},r_{23}]$.

Any $r\in \g\otimes \g$ induces a linear operator $r_+: \g^*\to \g$ defined by $r_+( \xi)= r(\xi,\cdot)$ for all $\xi\in\g^*.$  Define $r_-:=-r_+^*$.
 Let us introduce a bracket $[\cdot,\cdot]_r$ on $\g^*$:
\[[\xi,\eta]_r=\ad_{r_+ \xi}^* \eta-\ad^*_{r_- \eta} \xi,\quad \forall \xi,\eta\in \g^*.\]
If $r$ satisfies the $\ad$-invariant condition
\begin{equation}\label{eq:invr}
[r+\sigma(r),x]_\g=0,\quad \forall x\in\g,
\end{equation}
 where $\sigma$ is the flip operator which interchanges the components in $\g\otimes \g$, and it is a solution of the classical Yang-Baxter equation,  then $(\g^*,[\cdot,\cdot]_r)$ is a Lie algebra, which is denoted by $\g_r^*$. Moreover, $(\g,\g^*_r)$ constitutes a Lie bialgebra, which is called   a {\bf quasitriangular Lie bialgebra}. If $r$ is skew-symmetric, it is called a {\bf triangular Lie bialgebra}.

\begin{thm}{\rm(\cite{K,RS})}\label{thm:rcybe}
  Let  $r\in \g\otimes \g$ satisfy the $\ad$-invariant condition \eqref{eq:invr}. Then $r$ satisfies the classical Yang-Baxter equation  if and only if $(\g^*,[\cdot,\cdot]_r)$ is a Lie algebra and  the linear maps $r_+,r_-:(\g^*,[\cdot,\cdot]_r)\to (\g,[\cdot,\cdot]_\g)$ are both Lie algebra homomorphisms.
 \end{thm}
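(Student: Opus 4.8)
The plan is to translate the entire statement into the operator $r_+\colon\g^*\to\g$ and its adjoint, and then to reduce it to a single ``master formula'' measuring how far $r_+$ is from being a bracket homomorphism in terms of the classical Yang--Baxter tensor. First I would set up the operator dictionary. Writing $r=\sum_i a_i\otimes b_i$, one has $r_+(\xi)=\sum_i\langle\xi,a_i\rangle b_i$ and, since $r_-=-r_+^*$, also $r_-(\xi)=-\sum_i\langle\xi,b_i\rangle a_i$; consequently $r_+-r_-$ is exactly the operator attached to the symmetric element $r+\sigma(r)$. The $\ad$-invariance condition \eqref{eq:invr} is then equivalent to the statement that $r_+-r_-\colon\g^*\to\g$ intertwines the coadjoint and adjoint representations, namely $(r_+-r_-)\circ\ad_x^*=\ad_x\circ(r_+-r_-)$ for all $x\in\g$. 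I would use this equivariance twice: once to prove skew-symmetry of $[\cdot,\cdot]_r$, and later to pass between the homomorphism properties of $r_+$ and $r_-$.

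The computational heart is the master formula. Denote by $\mathrm{CYB}(r)=[r_{12},r_{13}]+[r_{13},r_{23}]+[r_{12},r_{23}]\in\g\otimes\g\otimes\g$ the Yang--Baxter tensor. Expanding $r=\sum_i a_i\otimes b_i$ and pairing $\mathrm{CYB}(r)$ with $\xi\otimes\eta\otimes\zeta$, a direct computation identifies the three summands respectively with $\langle\xi,[r_-\eta,r_-\zeta]_\g\rangle$, $\langle\zeta,[r_+\xi,r_+\eta]_\g\rangle$ and $-\langle\eta,[r_+\xi,r_-\zeta]_\g\rangle$; collecting terms and using $r_-=-r_+^*$ together with the definition of $[\cdot,\cdot]_r$ yields, for all $\xi,\eta,\zeta\in\g^*$,
\[
\big\langle\zeta,\ [r_+\xi,r_+\eta]_\g-r_+([\xi,\eta]_r)\big\rangle=\langle\xi\otimes\eta\otimes\zeta,\ \mathrm{CYB}(r)\rangle .
\]
Because the pairing between $\g\otimes\g\otimes\g$ and $\g^*\otimes\g^*\otimes\g^*$ is nondegenerate, this one identity already yields both implications for $r_+$: the right-hand side vanishes for all $\xi,\eta,\zeta$ precisely when $\mathrm{CYB}(r)=0$, and the left-hand side vanishes for all $\xi,\eta,\zeta$ precisely when $r_+$ is a bracket homomorphism. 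Deriving this formula is mostly bookkeeping, though keeping the three tensor slots and the two operators $r_\pm$ straight is where sign errors tend to creep in.

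With the master formula in hand the theorem assembles as follows. For the backward implication, the hypothesis that $r_+\colon(\g^*,[\cdot,\cdot]_r)\to\g$ is a homomorphism makes the left-hand side vanish identically, so $\mathrm{CYB}(r)=0$; note that $r_+$ alone suffices here. For the forward implication, assume $\mathrm{CYB}(r)=0$. The master formula immediately gives that $r_+$ is a homomorphism, and the symmetric version of the computation (equivalently, invoking the equivariance of $r_+-r_-$ with $r_-=-r_+^*$) gives the same for $r_-$. It then remains to check that $(\g^*,[\cdot,\cdot]_r)$ is a genuine Lie algebra. Skew-symmetry comes only from $\ad$-invariance: the symmetric combination $\ad^*_{(r_+-r_-)\xi}\eta+\ad^*_{(r_+-r_-)\eta}\xi$ vanishes because the symmetric bilinear form $(\xi,\eta)\mapsto\langle\xi,(r_+-r_-)\eta\rangle$ on $\g^*$ is coadjoint-invariant, which is a one-line consequence of the equivariance above.

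The main obstacle is the Jacobi identity for $[\cdot,\cdot]_r$. Unlike the homomorphism identities it is cubic in $r$, and I would express its Jacobiator as a contraction of $\mathrm{CYB}(r)$ together with correction terms that cancel by the $\ad$-invariance condition, so that $\mathrm{CYB}(r)=0$ forces the Jacobiator to vanish. The delicate point is precisely the interplay with $\ad$-invariance: applying $r_+$ and $r_-$ to the Jacobiator and using that both are homomorphisms only shows that the Jacobiator lies in $\ker r_+\cap\ker r_-$, which is not enough to conclude it is zero, so one genuinely needs the direct reduction to $\mathrm{CYB}(r)$ rather than a soft argument through the homomorphism properties.
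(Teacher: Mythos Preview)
The paper does not actually give its own proof of this theorem: it is stated with a citation to \cite{K,RS} and used as a black box thereafter. So there is no in-paper argument to compare against, and your proposal has to be judged on its own merits.

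Your master formula is correct and is indeed the standard way this result is proved in the references: pairing $\mathrm{CYB}(r)$ with $\xi\otimes\eta\otimes\zeta$ gives exactly $\langle\zeta,[r_+\xi,r_+\eta]_\g-r_+[\xi,\eta]_r\rangle$, so the equivalence between the classical Yang--Baxter equation and the homomorphism property of $r_+$ is immediate. Your observation that the equivariance $I\circ\ad_x^*=\ad_x\circ I$ together with $I^*=I$ gives $I[\xi,\eta]_r=[r_+\xi,r_+\eta]_\g-[r_-\xi,r_-\eta]_\g$, hence $r_+$ is a homomorphism iff $r_-$ is, is also correct (and in fact this identity reappears verbatim in the paper's proof of Theorem~\ref{converse}). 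The skew-symmetry argument is right, though the phrasing is slightly misleading: coadjoint-invariance of the form $\langle\xi,I\eta\rangle$ is not by itself the reason; what you actually use is the combination of $I^*=I$ with the equivariance, which gives $\langle\eta,[I\xi,z]_\g\rangle=-\langle\xi,[I\eta,z]_\g\rangle$ directly.

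The only real gap is the Jacobi identity for $[\cdot,\cdot]_r$, which you flag but do not carry out. Your plan---writing the Jacobiator as a contraction of $\mathrm{CYB}(r)$ plus terms that vanish by $\ad$-invariance---is the right one, and you are right that the soft argument (applying $r_\pm$ to the Jacobiator) only traps it in $\ker r_+\cap\ker r_-$, which need not be zero. One clean way to close this: rewrite $[\xi,\eta]_r=\ad^*_{r_+\xi}\eta+\ad^*_{r_+\eta}\xi-\ad^*_{I\eta}\xi$ (using skew-symmetry, already established), recognize the first two terms as the coboundary $\delta_r(x)=(\ad_x\otimes1+1\otimes\ad_x)r$ dualized, and then the Jacobiator paired with $x\in\g$ becomes precisely the pairing of $\xi\otimes\eta\otimes\zeta$ with $(\ad_x\otimes1\otimes1+1\otimes\ad_x\otimes1+1\otimes1\otimes\ad_x)\,\mathrm{CYB}(r)$, up to terms involving the $\ad$-invariant symmetric part which drop out. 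That computation is routine but does need to be written; as it stands your proposal is a correct outline rather than a complete proof.
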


Denote by $I$ the operator
\begin{eqnarray}\label{I}
I=r_+-r_-: \g^*\to \g.
\end{eqnarray}
Note that $I^*=I$, and the $\ad$-invariant condition \eqref{eq:invr} is equivalent to
\begin{equation}\label{eq:invI}
  I\circ \ad_x^*=\ad_x\circ I,\quad \forall x\in \g.
\end{equation} Actually $\frac{1}{2}I$ is the symmetric part of $r$.
If $r$ is skew-symmetric, then $I=0$. Factorizable Lie bialgebras  are however  concerned with the opposite case that $I$ is nondegenerate; see \cite{RS}.
\begin{defi}
A quasitriangular Lie bialgebra $(\g, \g^*_r)$ defined by $r\in \g\otimes \g$ is called {\bf factorizable} if the linear map $I: \g^*\to \g$ defined in \eqref{I} is a linear isomorphism of vector spaces.
\end{defi}

\subsection{Factorizable Lie bialgebras and quadratic Rota-Baxter Lie  algebras}

Let $(\g, [\cdot,\cdot]_\g)$ be a Lie algebra  and $B:\g\lon\g$   a   Rota-Baxter operator of weight $\lambda$ on   $\g$. Then there is a new Lie bracket $[\cdot,\cdot]_B$ on $\g$ defined by
$$
[x,y]_B=[B(x),y]_\g+[x,B(y)]_\g+\lambda[x,y]_\g.
$$
The Lie algebra $(\g,[\cdot,\cdot]_B)$ is called the {\bf descendent Lie algebra}, and denoted by $\g_B$. It is obvious that $B$ is a Lie algebra homomorphism from $\g_B$ to $\g$:
$$
B[x,y]_B=[B(x),B(y)]_\g.
$$

Recall that a nondegenerate symmetric bilinear form $S\in \otimes ^2\g^*$ on a Lie algebra $\g$ is said to be invariant if
\begin{eqnarray}
\label{RBmanin1}S([x,y]_\g,z)+S(y,[x,z]_\g)&=&0,\quad \forall x,y\in \g.
\end{eqnarray}
A quadratic Lie algebra $(\g,S)$ is a Lie algebra $\g$ equipped with a nondegenerate symmetric invariant bilinear form $S\in \otimes ^2\g^*$.
\begin{defi}\label{defi:qua}
Let $(\g,[\cdot,\cdot]_\g,B)$ be a  Rota-Baxter Lie algebra of weight $\lambda$, and $S\in \otimes ^2\g^*$ a nondegenerate symmetric bilinear form.   The triple  $(\g,B,S)$ is called a {\bf quadratic Rota-Baxter Lie algebra of weight $\lambda$} if $(\g,S)$ is a quadratic Lie algebra and the following compatibility condition holds:
\begin{eqnarray}
\label{RBmanin}
S( x, {B} y)+S({B}x, y)+\lambda S(x,y)&=&0,\quad \forall x,y\in \g.
\end{eqnarray}

\end{defi}

A factorizable Lie bialgebra naturally gives rise to a quadratic Rota-Baxter Lie algebra of nonzero weight.

\begin{thm}\label{FL}
Let $(\g,\g^*_r)$ be a factorizable Lie bialgebra with $I=r_+-r_-$. Then $(\g,B,S_I)$ is a quadratic Rota-Baxter Lie algebra of weight $\lambda$, where the linear map $B:\g\to\g$ and $S_I\in\otimes ^2\g^*$ are defined by
\begin{eqnarray}\label{eq:defiB}
B&=&\lambda r_-\circ I^{-1},\\
 \label{eq:defiS_I}S_I(x,y)&=&\langle I^{-1}x,y\rangle,\quad \forall x,y\in\g.
\end{eqnarray}

\end{thm}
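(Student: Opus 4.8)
The plan is to verify in turn the three defining features of a quadratic Rota-Baxter Lie algebra for the triple $(\g,B,S_I)$: that $S_I$ is a nondegenerate symmetric invariant bilinear form, that $B=\lambda r_-\circ I^{-1}$ is a Rota-Baxter operator of weight $\lambda$, and that the compatibility \eqref{RBmanin} holds. The symmetry and nondegeneracy of $S_I$ are immediate: since $I$ is an isomorphism so is $I^{-1}$, giving nondegeneracy, and since $I^*=I$ (noted after \eqref{I}) the inverse satisfies $(I^{-1})^*=I^{-1}$, whence $\langle I^{-1}x,y\rangle=\langle I^{-1}y,x\rangle$.

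For the invariance \eqref{RBmanin1} of $S_I$ I would first rewrite the $\ad$-invariant condition \eqref{eq:invI} in the equivalent form $I^{-1}\circ\ad_x=\ad_x^*\circ I^{-1}$. Then $S_I([x,y]_\g,z)=\langle I^{-1}\ad_x y,z\rangle=\langle \ad_x^*I^{-1}y,z\rangle$, and unwinding the defining relation $\langle\ad_x^*\xi,z\rangle=-\langle\xi,[x,z]_\g\rangle$ turns this into $-\langle I^{-1}y,[x,z]_\g\rangle=-S_I(y,[x,z]_\g)$, which is exactly the invariance identity.

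The main work---and the step I expect to be the real obstacle---is showing $B$ is a Rota-Baxter operator of weight $\lambda$. Here I would combine two inputs: that $r_-\colon\g^*_r\to\g$ is a Lie algebra homomorphism (Theorem \ref{thm:rcybe}), and the $\ad$-invariance in the form $\ad_x^*\circ I^{-1}=I^{-1}\circ\ad_x$. Writing $\xi=I^{-1}x$ and $\eta=I^{-1}y$, the homomorphism property gives $[Bx,By]_\g=\lambda^2[r_-\xi,r_-\eta]_\g=\lambda^2 r_-[\xi,\eta]_r$. Expanding $[\xi,\eta]_r=\ad^*_{r_+\xi}\eta-\ad^*_{r_-\eta}\xi$ and using the invariance to pull $I^{-1}$ outside, one finds $[\xi,\eta]_r=I^{-1}\big([r_+I^{-1}x,y]_\g-[r_-I^{-1}y,x]_\g\big)$, so that $[Bx,By]_\g=\lambda B\big([r_+I^{-1}x,y]_\g-[r_-I^{-1}y,x]_\g\big)$. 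The decisive move is to eliminate $r_+$ in favour of $B$ via the identity $r_+=r_-+I$ coming from \eqref{I}: since $r_+I^{-1}x=r_-I^{-1}x+x=\tfrac1\lambda Bx+x$ and $r_-I^{-1}y=\tfrac1\lambda By$, the argument of $B$ collapses to $\tfrac1\lambda\big([Bx,y]_\g+[x,By]_\g+\lambda[x,y]_\g\big)$, yielding precisely the weight-$\lambda$ Rota-Baxter identity.

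Finally, for the compatibility \eqref{RBmanin} I would again set $\xi=I^{-1}x$, $\eta=I^{-1}y$ and compute $S_I(x,By)+S_I(Bx,y)$ directly: using $r_-=-r_+^*$ one gets $S_I(x,By)=-\lambda\langle r_+\xi,\eta\rangle$ and $S_I(Bx,y)=\lambda\langle r_-\xi,\eta\rangle$, whose sum equals $-\lambda\langle(r_+-r_-)\xi,\eta\rangle=-\lambda\langle I\xi,\eta\rangle=-\lambda\langle x,\eta\rangle=-\lambda S_I(x,y)$, giving \eqref{RBmanin}. The only genuine subtlety throughout is the careful bookkeeping of adjoints---keeping track of $I^*=I$, $r_-=-r_+^*$, and the sign in $\ad^*$---while the Rota-Baxter identity is the one place where the content of Theorem \ref{thm:rcybe} is actually used.
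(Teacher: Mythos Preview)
Your proposal is correct and follows essentially the same path as the paper: symmetry of $S_I$ from $I^*=I$, invariance \eqref{RBmanin1} from $I^{-1}\ad_x=\ad_x^*I^{-1}$, compatibility \eqref{RBmanin} from $r_-^*=-r_+$ and $I=r_+-r_-$, and the Rota-Baxter identity from the homomorphism property in Theorem~\ref{thm:rcybe} together with $r_+=r_-+I$. The only cosmetic difference is in the Rota-Baxter step: the paper uses \emph{both} $r_+$ and $r_-$ as homomorphisms to first derive the intermediate identity $I[I^{-1}x,I^{-1}y]_r=[r_-I^{-1}x,y]_\g+[x,r_-I^{-1}y]_\g+[x,y]_\g$ (labelled \eqref{Ihomo}, and reused in Corollary~\ref{liebi}), whereas you use only the $r_-$ homomorphism together with the explicit formula for $[\cdot,\cdot]_r$ and the invariance of $I$; the computations are equivalent.
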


\begin{proof}
Since $r_+,r_-$ are both Lie algebra homomorphisms, for $x,y\in \g$, we have
\begin{eqnarray}\label{Ihomo}
I([I^{-1}x,I^{-1} y]_r)&=&(r_+-r_-)[I^{-1}x, I^{-1}y]_r\\ &=&\nonumber
[(I+r_-) I^{-1} x,(I+r_-)I^{-1} y]_\g-[r_-I^{-1}x,r_-I^{-1} y]_\g\\ &=&\nonumber
[r_-I^{-1}x, y]_\g+[x,r_-I^{-1}y]_\g+[x,y]_\g.
\end{eqnarray}
Therefore, we have
\begin{eqnarray*}
  B([Bx,y]_\g+[x,By]_\g+\lambda [x,y]_\g)&=&\lambda^2r_-I^{-1}\big([r_-I^{-1}x,y]_\g+[x,r_-I^{-1}y]_\g+[x,y]_\g\big)\\
  &=&\lambda^2 r_-[I^{-1}x,I^{-1}y]_r\\
  &=&\lambda^2 [r_-I^{-1} x,r_-I^{-1}y]_\g\\
  &=&[Bx,By]_\g,
\end{eqnarray*}
which implies  that $B$
 is a Rota-Baxter operator of weight $\lambda$ on $\g$.

Then, we show that $(\g,B,S_I)$ is a quadratic Rota-Baxter Lie algebra. Since $I^*=I$, it is obvious that $S_I\in \otimes^2 \g^*$ is symmetric.  It suffices to check \eqref{RBmanin1} and \eqref{RBmanin}. In fact, by  the invariant property \eqref{eq:invI}, we have $I^{-1}\circ \ad_x=\ad_x^*\circ  I^{-1}$. So
\begin{eqnarray*}
S_I([x,y]_\g,z)+S_I(y,[x,z]_\g)&=&\langle I^{-1}[x,y]_\g,z\rangle+\langle I^{-1} y,[x,z]_\g\rangle\\ &=&\langle I^{-1}\circ \ad_x(y)-\ad_x^*\circ I^{-1}(y),z\rangle\\
&=&0.
\end{eqnarray*}
Moreover, by using $r_-^*=-r_+$ and $I=r_+-r_-$, we have
\begin{eqnarray*}
S_I(x,By)+S_I(Bx,y)+\lambda S_I(x,y)&=&\lambda(\langle I^{-1}x, r_-\circ I^{-1}(y)\rangle+\langle I^{-1}\circ r_-\circ I^{-1}(x),y\rangle+\langle I^{-1}x,y\rangle)\\ &=&\lambda \langle (-I^{-1}\circ r_+\circ I^{-1}  +I^{-1}\circ r_-\circ I^{-1}+I^{-1})(x),y\rangle\\
&=&0.
\end{eqnarray*}
Therefore, $(\g,B,S_I)$ is a quadratic Rota-Baxter Lie algebra of weight $\lambda$.
\end{proof}

\begin{rmk}\label{relationwithG}
Let $(\g, B,S)$  be a quadratic Rota-Baxter Lie algebra of weight $\lambda$. Define   a linear operator $B^*:\g\to \g$ by
$S(x,By)=S(B^*x,y)$ for all $x,y\in \g$. Then   \eqref{RBmanin} is equivalent to $B+B^*=-\lambda \id$. Then by Theorem \ref{FL}, a factorizable Lie bialgebra $(\g,\g_r^*)$ gives rise to a Rota-Baxter operator $B$ on $\g$ such that $B+B^*=-\lambda \id$. Similar  results for simple Lie algebras and quadratic Lie algebras were given earlier by Goncharov; see \cite[Theorem 4]{G1} and \cite[Theorems 1, 3]{G2}.
\end{rmk}

It is well-known that if $B: \g\to \g$ is a Rota-Baxter operator of weight $\lambda$ on a Lie algebra, then
\begin{equation}\label{eq:Bt}
 \widetilde{B}:= -\lambda \id-B
\end{equation}  is also a Rota-Baxter operator of weight $\lambda$. So as a consequence of Theorem \ref{FL}, we have the following result.

\begin{cor}
 Let $(\g,\g^*_r)$ be a factorizable Lie bialgebra with $I=r_+-r_-$.  Then $(\g,\widetilde{B},S_I)$ is also a quadratic Rota-Baxter algebra of weight $\lambda$, where $\widetilde{B}=-\lambda\id-B=-\lambda r_+\circ I^{-1} $ and $S_I\in \otimes^2 \g^*$ is defined in \eqref{eq:defiS_I}.

\end{cor}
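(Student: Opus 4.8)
The plan is to read everything off from Theorem \ref{FL}, the reformulation in Remark \ref{relationwithG}, and the general fact \eqref{eq:Bt}, so that essentially no new computation is needed. First, Theorem \ref{FL} already shows that $B=\lambda r_-\circ I^{-1}$ is a Rota-Baxter operator of weight $\lambda$ on $\g$. Since the passage $B\mapsto \widetilde{B}=-\lambda\id-B$ sends Rota-Baxter operators of weight $\lambda$ to Rota-Baxter operators of weight $\lambda$ by \eqref{eq:Bt}, it follows immediately that $\widetilde{B}$ is a Rota-Baxter operator of weight $\lambda$. Moreover, the bilinear form $S_I(x,y)=\langle I^{-1}x,y\rangle$ does not depend on the chosen Rota-Baxter operator, and Theorem \ref{FL} already established that it is nondegenerate, symmetric and invariant; hence $(\g,S_I)$ is a quadratic Lie algebra. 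Thus the only thing left to verify is the compatibility condition \eqref{RBmanin} for the pair $(\widetilde{B},S_I)$.

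For this I would use the adjoint reformulation of Remark \ref{relationwithG}. Let $B^*:\g\to\g$ denote the $S_I$-adjoint of $B$, defined by $S_I(x,By)=S_I(B^*x,y)$. By Theorem \ref{FL} the form $S_I$ satisfies \eqref{RBmanin} for $B$, which by Remark \ref{relationwithG} is equivalent to $B+B^*=-\lambda\id$. Since taking $S_I$-adjoints is linear and $\id^*=\id$, the operator $\widetilde{B}=-\lambda\id-B$ has adjoint $\widetilde{B}^*=-\lambda\id-B^*$, and therefore
\[
\widetilde{B}+\widetilde{B}^*=-2\lambda\id-(B+B^*)=-2\lambda\id+\lambda\id=-\lambda\id.
\]
Applying Remark \ref{relationwithG} in the reverse direction, this is exactly \eqref{RBmanin} for $\widetilde{B}$. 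Hence $(\g,\widetilde{B},S_I)$ is a quadratic Rota-Baxter Lie algebra of weight $\lambda$.

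Finally, to identify $\widetilde{B}$ with $-\lambda r_+\circ I^{-1}$, I would substitute $B=\lambda r_-\circ I^{-1}$ and use $\id=I\circ I^{-1}=(r_+-r_-)\circ I^{-1}$, which gives
\[
\widetilde{B}=-\lambda\id-\lambda r_-\circ I^{-1}=-\lambda r_+\circ I^{-1}+\lambda r_-\circ I^{-1}-\lambda r_-\circ I^{-1}=-\lambda r_+\circ I^{-1}.
\]
Since every step is either a direct citation of an earlier result or a one-line linear manipulation, there is no genuine obstacle here; the real content of the corollary is just the observation that the symmetry condition \eqref{RBmanin}, rephrased as $B+B^*=-\lambda\id$, is invariant under the involution $B\mapsto\widetilde{B}$.
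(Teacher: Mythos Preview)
Your proof is correct and follows essentially the same approach as the paper: both invoke \eqref{eq:Bt} for the Rota-Baxter property of $\widetilde{B}$, rely on Theorem \ref{FL} for the quadratic Lie algebra structure, and verify \eqref{RBmanin} by a linear manipulation using that $B$ already satisfies it. The only cosmetic difference is that the paper substitutes $\widetilde{B}=-\lambda\id-B$ directly into $S_I(x,\widetilde{B}y)+S_I(\widetilde{B}x,y)+\lambda S_I(x,y)$ and simplifies, whereas you phrase the same computation via the adjoint reformulation $B+B^*=-\lambda\id$ from Remark \ref{relationwithG}; your additional verification that $\widetilde{B}=-\lambda r_+\circ I^{-1}$ is a nice touch that the paper leaves implicit.
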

 \begin{proof}
It is obvious that $\widetilde{B}$ is a Rota-Baxter operator of weight $\lambda$. Since $B$ satisfies the invariant condition  \eqref{RBmanin}, we have
\begin{eqnarray*}
S_I(x,\widetilde{B}y)+S_I(\widetilde{B}x,y)+\lambda S_I(x,y)&=& -S_I(x, {B}y)-\lambda S_I(x,y)-S_I({B}x,y) -\lambda S_I(x,y)  +\lambda S_I(x,y)\\
&=&-S_I(x, {B}y) -S_I({B}x,y) -\lambda S_I(x,y)\\
&=&0,
\end{eqnarray*}
which implies that $(\g,\widetilde{B},S_I)$ is also a quadratic Rota-Baxter algebra of weight $\lambda$.
\end{proof}
\begin{cor}\label{liebi}
 Let $(\g,\g^*_r)$ be a factorizable Lie bialgebra with $I=r_+-r_-$, and $B$ the induced  Rota-Baxter operator of weight $\lambda$ on $\g$. Then   $\big(\g_B, (\g^*,[\cdot,\cdot]_{I})\big)$ is a  Lie bialgebra,  where
\[[\xi,\eta]_{I}:=\lambda I^{-1}([\frac{1}{\lambda} I\xi,\frac{1}{\lambda}I\eta]_\g),\quad \forall \xi,\eta\in \g^*,~ \lambda\neq 0.\]
Moreover, $\frac{1}{\lambda} I: \g^*\to \g$ gives a Lie bialgebra isomorphism from $\big(\g^*_r, \g \big)$ to $ \big(\g_B, (\g^*,[\cdot,\cdot]_{I})\big)$.
 \end{cor}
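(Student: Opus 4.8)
The plan is to exhibit $\frac{1}{\lambda}I$ as a Lie bialgebra isomorphism from $(\g_r^*,\g)$ onto the pair $\big(\g_B,(\g^*,[\cdot,\cdot]_{I})\big)$, and to let the first assertion follow from the second. Indeed, since $(\g,\g_r^*)$ is a Lie bialgebra, its dual $(\g_r^*,\g)$ is again a Lie bialgebra (the observation recorded after Definition~\ref{FL}'s preliminaries that dualizing a Lie bialgebra yields a Lie bialgebra), and transporting a Lie bialgebra structure along a linear isomorphism automatically produces a Lie bialgebra for which that isomorphism is a bialgebra isomorphism. Thus once the isomorphism claim is established, the statement that $\big(\g_B,(\g^*,[\cdot,\cdot]_{I})\big)$ is a Lie bialgebra comes for free. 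I will use the criterion that a linear isomorphism $\phi$ between Lie bialgebras is a Lie bialgebra isomorphism precisely when $\phi$ is a Lie algebra isomorphism on the first factors and its transpose $\phi^*$ is a Lie algebra isomorphism on the (dual) second factors.

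First I would check that $\phi:=\frac{1}{\lambda}I\colon(\g^*,[\cdot,\cdot]_r)\to(\g,[\cdot,\cdot]_B)$ is a Lie algebra homomorphism. The key simplification is that $B=\lambda r_-\circ I^{-1}$ gives $B\circ\frac{1}{\lambda}I=r_-$, so that for $\xi,\eta\in\g^*$
\[
[\tfrac{1}{\lambda}I\xi,\tfrac{1}{\lambda}I\eta]_B=\tfrac{1}{\lambda}\big([r_-\xi,I\eta]_\g+[I\xi,r_-\eta]_\g+[I\xi,I\eta]_\g\big).
\]
Setting $x=I\xi$ and $y=I\eta$ in the identity \eqref{Ihomo} already established in the proof of Theorem~\ref{FL}, the right-hand side is exactly $\frac{1}{\lambda}I([\xi,\eta]_r)=\phi([\xi,\eta]_r)$. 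Hence $\phi$ is a Lie algebra isomorphism from $\g_r^*$ to $\g_B$, which is the essential factorization content.

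It remains to verify the dual condition. Because $I^*=I$, the transpose of $\phi=\frac{1}{\lambda}I$ is again $\frac{1}{\lambda}I$, now regarded as a map $(\g^*,[\cdot,\cdot]_{I})\to(\g,[\cdot,\cdot]_\g)$. That this is a Lie algebra homomorphism is immediate from the very definition $[\xi,\eta]_{I}=\lambda I^{-1}\big([\frac{1}{\lambda}I\xi,\frac{1}{\lambda}I\eta]_\g\big)$, which forces $\frac{1}{\lambda}I([\xi,\eta]_{I})=[\frac{1}{\lambda}I\xi,\frac{1}{\lambda}I\eta]_\g$; in particular $[\cdot,\cdot]_{I}$ is a Lie bracket, being the pushforward of $[\cdot,\cdot]_\g$ along the linear isomorphism $\lambda I^{-1}\colon\g\to\g^*$. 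With both the bracket condition and the cobracket (transpose) condition in hand, $\frac{1}{\lambda}I$ is a Lie bialgebra isomorphism, and the corollary follows.

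The only genuine analytic input is the homomorphism identity \eqref{Ihomo}, so I do not expect a substantial obstacle in the computation. The point that most requires care is keeping the duality conventions straight: deciding which factor of each bialgebra is being dualized, and using $I^*=I$ to identify $\phi^*$ with $\phi$, so that the single map $\frac{1}{\lambda}I$ simultaneously plays the role of the isomorphism on the first factors and of its transpose on the second factors.
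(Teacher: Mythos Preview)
Your proposal is correct and follows essentially the same argument as the paper: verify via \eqref{Ihomo} that $\frac{1}{\lambda}I\colon\g_r^*\to\g_B$ is a Lie algebra isomorphism, observe that $(\frac{1}{\lambda}I)^*=\frac{1}{\lambda}I\colon(\g^*,[\cdot,\cdot]_I)\to\g$ is a Lie algebra isomorphism by the very definition of $[\cdot,\cdot]_I$, and conclude by transporting the Lie bialgebra structure of $(\g_r^*,\g)$. The only cosmetic difference is that you spell out the intermediate computation $[\tfrac{1}{\lambda}I\xi,\tfrac{1}{\lambda}I\eta]_B=\tfrac{1}{\lambda}\big([r_-\xi,I\eta]_\g+[I\xi,r_-\eta]_\g+[I\xi,I\eta]_\g\big)$ before invoking \eqref{Ihomo}, whereas the paper substitutes directly.
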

\begin{proof}
First we check that $\frac{1}{\lambda} I: \g_r^*\to \g_B$ is a Lie algebra isomorphism. In fact, for any $\xi,\eta\in \g^*$, taking $x=I \xi\in \g$ and $y=I \eta\in \g$, the equation \eqref{Ihomo} tells us that
\begin{eqnarray*}
\frac{1}{\lambda} I[\xi,\eta]_r=\frac{1}{\lambda^2} \big([BI\xi,I\eta]_\g+[I\xi,BI\eta]_\g+\lambda [I\xi,I\eta]_\g\big)=[\frac{1}{\lambda} I\xi,\frac{1}{\lambda} I\eta]_B.
\end{eqnarray*}
So $\frac{1}{\lambda}I$ is a Lie algebra isomorphism.

It is obvious that the map $(\frac{1}{\lambda} I)^*=\frac{1}{\lambda} I: (\g^*,[\cdot,\cdot]_I)\to \g$ is also Lie algebra isomorphism. 
Since $(\g_r^*,\g)$ is a Lie bialgebra, it follows that the pair $(\g_B,(\g^*,[\cdot,\cdot]_I))$ is also a Lie bialgebra and $\frac{1}{\lambda}I:\g_r^*\to \g_B$ is a Lie bialgebra isomorphism.
\end{proof}



\begin{ex}\label{doubleex}
Let $(\g,\g^*)$ be an arbitrary Lie bialgebra. Its Drinfeld double $\mathfrak{d}=\g\bowtie \g^*$ with $r=\sum_{i}\xi_i\otimes x_i\in \mathfrak{d}\otimes \mathfrak{d} $ is a quasitriangular Lie bialgebra (\cite{RS}), where $\{x_i\}$ is a basis of $\g$ and $\{\xi_i\}$ is its dual basis for $\g^*$.
Then $r_+,r_-: \mathfrak{d}^*\to \mathfrak{d}$ are given by
\[r_+(x,\xi)=(x,0),\quad r_-(x,\xi)=(0,-\xi),\quad \forall x\in \g,\xi\in \g^*.\]
Observe that  $I(x,\xi)=(x,\xi)$. So $(\mathfrak{d},\mathfrak{d}^*_r)$ is a factorizable Lie bialgebra.  By Theorem \ref{FL}, there associates a quadratic Rota-Baxter Lie algebra  $(\mathfrak{d},B,S_I)$, where
\begin{eqnarray*}
  B(x,\xi)&=&\lambda r_-\circ I^{-1}(x,\xi)=-\lambda (0,\xi),\\
   S_I(x+\xi,y+\eta)&=&\langle I^{-1}(x+\xi),y+\eta\rangle=\xi(y)+\eta(x).
\end{eqnarray*}

In this case, direct calculation shows that $\mathfrak{d}^*=\g\oplus \overline{\g^*}$, namely, the direct sum of the Lie algebra   $\g$ and the Lie algebra $\overline{\g^*}$, where $\overline{\g^*}$ denotes the Lie algebra  with the Lie bracket $-[\cdot,\cdot]_{\g^*}$.
\end{ex}

At the end of this section, we show that a quadratic Rota-Baxter Lie algebra of nonzero weight also gives rise to a factorizable Lie bialgebra.

\begin{thm}\label{converse}
Let $(\g, B,S)$ be a quadratic Rota-Baxter Lie algebra of weight $\lambda$ ($\lambda\neq 0)$, and $\huaI_S:\g^*\to\g$ the induced linear isomorphism given by $\langle \huaI_S^{-1} x,y\rangle:=S(x,y)$. Then  $r\in \g\otimes \g$ determined by
\[r_+:=\frac{1}{\lambda}(B+\lambda \id)\circ \huaI_S:\g^*\to \g, \quad r_+(\xi)=r(\xi,\cdot),\quad \forall\xi\in \g^*\]
satisfies the classical Yang-Baxter equation, and gives rise to a  quasitriangular Lie bialgebra $(\g,\g_r^*)$, which is factorizable.
\end{thm}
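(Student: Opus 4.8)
The plan is to reduce everything to Theorem~\ref{thm:rcybe}. Write $I := \huaI_S$ for brevity; since $S$ is symmetric and nondegenerate, $I\colon\g^*\to\g$ is a linear isomorphism with $I^*=I$. The strategy is to show that $I$ is exactly the operator $r_+-r_-$ of \eqref{I}, so that its invertibility hands us factorizability at the very end for free, and then to verify the hypotheses of Theorem~\ref{thm:rcybe} to get the classical Yang-Baxter equation.

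First I would pin down $r_-$. By definition $r_-=-r_+^*$, and a short transpose computation, combined with the identity $B+B^*=-\lambda\,\id$ of Remark~\ref{relationwithG} (where $B^*$ denotes the $S$-adjoint, related to the dual map $B^{\mathrm{tr}}\colon\g^*\to\g^*$ by $I\circ B^{\mathrm{tr}}=B^*\circ I$), collapses to $r_-=\frac{1}{\lambda}B\circ I$. The weight-$\lambda$ term must be tracked carefully here, but the outcome is clean: $r_+-r_-=\frac{1}{\lambda}(B+\lambda\,\id)I-\frac{1}{\lambda}BI=I=\huaI_S$, so $I$ is indeed the operator of \eqref{I}. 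Next I would verify the $\ad$-invariance \eqref{eq:invI}, namely $I\circ\ad_x^*=\ad_x\circ I$; this is a direct rewriting of the invariance \eqref{RBmanin1} of $S$, identical to the computation already used in the proof of Theorem~\ref{FL}. This licenses the use of Theorem~\ref{thm:rcybe}, reducing the CYBE for $r$ to the statements that $[\cdot,\cdot]_r$ is a Lie bracket on $\g^*$ and that $r_+,r_-$ are both Lie algebra homomorphisms onto $(\g,[\cdot,\cdot]_\g)$.

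The heart of the argument is to transport $[\cdot,\cdot]_r$ along $I$. Using $\ad$-invariance together with $r_+I^{-1}=\frac{1}{\lambda}(B+\lambda\,\id)$ and $r_-I^{-1}=\frac{1}{\lambda}B$, I expect to obtain $I[I^{-1}x,I^{-1}y]_r=\frac{1}{\lambda}[x,y]_B$, where $[\cdot,\cdot]_B$ is the descendent bracket. This one identity does double duty: it shows $[\cdot,\cdot]_r$ is a Lie bracket (it is carried isomorphically to the rescaled descendent bracket $\frac{1}{\lambda}[\cdot,\cdot]_B$), and it sets up the homomorphism check. Since $r_+=\frac{1}{\lambda}(B+\lambda\,\id)\circ I$ and $r_-=\frac{1}{\lambda}B\circ I$, it then remains to see that $\frac{1}{\lambda}(B+\lambda\,\id)$ and $\frac{1}{\lambda}B$ are homomorphisms from $\big(\g,\frac{1}{\lambda}[\cdot,\cdot]_B\big)$ to $\g$. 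For $r_-$ this is immediate from $B[x,y]_B=[Bx,By]_\g$; for $r_+$ it follows from the identity $[(B+\lambda\,\id)x,(B+\lambda\,\id)y]_\g=(B+\lambda\,\id)[x,y]_B$, which unpacks directly from the Rota-Baxter relation. With the Lie bracket and both homomorphism properties in hand, Theorem~\ref{thm:rcybe} yields the CYBE, hence the quasitriangular Lie bialgebra $(\g,\g_r^*)$; factorizability is automatic because $r_+-r_-=\huaI_S$ is invertible.

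The main obstacle is bookkeeping rather than conceptual: correctly deriving $r_-=\frac{1}{\lambda}B\circ I$ from \eqref{RBmanin}, and keeping the factor $\frac{1}{\lambda}$ straight through the transport step so that the descendent bracket appears exactly. Once the identity $I[I^{-1}x,I^{-1}y]_r=\frac{1}{\lambda}[x,y]_B$ is established, the Lie-algebra and homomorphism verifications are routine consequences of the Rota-Baxter axiom, and Theorem~\ref{thm:rcybe} closes the argument.
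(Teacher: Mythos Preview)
Your proposal is correct and follows essentially the same route as the paper's proof: compute $r_-=\frac{1}{\lambda}B\circ\huaI_S$ from \eqref{RBmanin}, deduce $r_+-r_-=\huaI_S$, verify the $\ad$-invariance \eqref{eq:invI} from \eqref{RBmanin1}, establish the transport identity $\frac{1}{\lambda}\huaI_S[\xi,\eta]_r=[\frac{1}{\lambda}\huaI_S\xi,\frac{1}{\lambda}\huaI_S\eta]_B$ (which is your $I[I^{-1}x,I^{-1}y]_r=\frac{1}{\lambda}[x,y]_B$ after the substitution $x=\huaI_S\xi$, $y=\huaI_S\eta$), and then invoke Theorem~\ref{thm:rcybe} together with the homomorphism properties of $B$ and $B+\lambda\,\id$ on the descendent Lie algebra. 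The only cosmetic difference is that the paper writes the homomorphism factorization as $(B+\lambda\,\id)\circ\frac{1}{\lambda}\huaI_S$ and $B\circ\frac{1}{\lambda}\huaI_S$ through $(\g,[\cdot,\cdot]_B)$ rather than your rescaled version, but the content is identical.
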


\begin{proof} Since $S$ is symmetric, it follows that  $\huaI_S^*=\huaI_S$.
By the fact that $S(x,By)+S(Bx,y)+\lambda S(x,y)=0$, we have
\[\langle \huaI_S^{-1} x,By\rangle+\langle \huaI_S^{-1} \circ B(x),y\rangle+\lambda\langle \huaI_S^{-1} x,y\rangle=0,\]
which implies that $B^*\circ \huaI_S^{-1}+\huaI_S^{-1} \circ B+\lambda \huaI_S^{-1}=0$, and then
\[\huaI_S\circ B^*+B\circ \huaI_S+\lambda \huaI_S=0.\]
Hence
\[r_-:=-r_+^*=-\frac{1}{\lambda}(\huaI_S\circ B^*+\lambda \huaI_S)=\frac{1}{\lambda} B\circ \huaI_S,\]
and $\huaI_S=r_+-r_-$.  Define a bracket operation $[\cdot,\cdot]_r$ on $\g^*$ by
\[[\xi,\eta]_r=\ad_{r_+ \xi}^* \eta-\ad_{r_-\eta}^*\xi,\quad \forall \xi,\eta\in \g^*.\]
We first check that
\begin{eqnarray}\label{im}
\frac{1}{\lambda}\huaI_S[\xi,\eta]_r=[\frac{1}{\lambda} \huaI_S\xi,\frac{1}{\lambda} \huaI_S\eta]_B,
\end{eqnarray}
 which would indicate that $[\cdot,\cdot]_r$ is a Lie bracket and $\frac{1}{\lambda}\huaI_S$ is a Lie algebra isomorphism from $(\g^*,[\cdot,\cdot]_r)$ to $(\g,[\cdot,\cdot]_B)$. Actually, following from $S([x,y]_\g,z)+S(y,[x,z]_\g)=0$, we have
 $$\langle \huaI_S^{-1}\circ \ad_x (y),z\rangle-\langle \ad_x^*\circ \huaI_S^{-1} (y),z\rangle=0,$$
 which yields the invariance of $\huaI_S$: \[\huaI_S\circ \ad_x^*=\ad_x\circ \huaI_S.\] Then
 we have
 \begin{eqnarray*}
 \huaI_S[\xi,\eta]_r&=&\huaI_S(\ad_{r_+ \xi}^* \eta-\ad_{r_-\eta}^*\xi)
 \\ &=&[r_+\xi,\huaI_S \eta]_\g-[r_-\eta,\huaI_S\xi]_\g
 \\ &=&[r_+\xi, r_+\eta-r_-\eta]_\g-[r_-\eta,r_+\xi-r_-\xi]_\g
 \\ &=&[r_+\xi,r_+\eta]_\g-[r_-\xi,r_-\eta]_\g.
 \end{eqnarray*}
 On the other hand, we have
 \begin{eqnarray*}
 [\huaI_S\xi,\huaI_S\eta]_B&=&[B\huaI_S\xi, \huaI_S\eta]_\g+[\huaI_S\xi, B\huaI_S\eta]_\g+\lambda [\huaI_S\xi,\huaI_S\eta]_\g
 \\ &=&\lambda [r_-\xi,r_+\eta-r_-\eta]_\g+\lambda [r_+\xi-r_-\xi,r_-\eta]_\g+\lambda [r_+\xi-r_-\xi,r_+\eta-r_-\eta]_\g\\ &=&
\lambda\big([r_+\xi,r_+\eta]_\g-[r_-\xi,r_-\eta]_\g\big),
 \end{eqnarray*}
which implies that \eqref{im} holds.
Finally since  $B+\lambda \id, B: (\g,[\cdot,\cdot]_B)\to (\g,[\cdot,\cdot]_\g)$ are both Lie algebra homomorphisms,
we deduce that
\[r_+:=\frac{1}{\lambda}(B+\lambda \id)\circ \huaI_S~\mbox{and}~ r_-=\frac{1}{\lambda} B\circ \huaI_S:   (\g^*,[\cdot,\cdot]_r)\to   (\g,[\cdot,\cdot]_\g)\]
 are both Lie algebra homomorphisms. Therefore,  by Theorem \ref{thm:rcybe}, $(\g,\g_r^*)$ is a quasitriangular Lie bialgebra. Since $\huaI_S=r_+-r_-$ is an isomorphism,  the Lie bialgebra  $(\g,\g_r^*)$ is factorizable.
\end{proof}

\begin{rmk}
The relation between factorizable Lie bialgebras and double Lie algebras (\cite{STS}) is explained clearly in \cite{K}.
Double Lie algebras relate closely 
with Rota-Baxter Lie algebras.  Let $\g$ be a Lie algebra and $R: \g\to \g$ be a linear map. Define $[x,y]_R=[Rx,y]_\g+[x,Ry]_\g$.  If $[\cdot,\cdot]_R$ is also a Lie bracket, the pair $(\g,R)$ is called a double Lie algebra. In particular, if $R$ satisfies the following modified Yang-Baxter equation:
\[[Rx,Ry]_\g-R([Rx,y]_\g+[x,Ry]_\g)+  [x,y]_\g=0, \]
$[\cdot,\cdot]_R$ satisfies the Jacobi identity and $(\g,R)$ is  a   double Lie algebra.  Indeed, $B$ is a Rota-Baxter operator of weight $1$ on $\g$ if and only if $R=\id+2B:\g\to \g$ is a solution of the modified Yang-Baxter equation.
Here we relate directly  factorizable Lie bialgebras with Rota-Baxter Lie algebras of nonzero weight.
\end{rmk}

\section{Matched pairs of Rota-Baxter Lie algebras}\label{sec:mp}

In this section, first we show that a Rota-Baxter Lie algebra gives rise to a matched pair of Lie algebras. Then we introduce the notion of  matched pairs of Rota-Baxter  Lie algebras of weight $\lambda$ and show that a matched pair  of Rota-Baxter  Lie algebras induces a descendent matched pair of Lie algebras.

\subsection{Rota-Baxter Lie algebras and matched pairs of Lie algebras}

Matched pairs of Lie algebras are also known as  twilled Lie algebras \cite{KM} or double Lie algebras \cite{Lu2}.

A {\bf matched pair of Lie algebras} (\cite{Majid,18}) consists of a pair of Lie algebras  $(\g,\h)$, a  representation $\rho: \g\to\gl(\h)$ of $\g$ on $\h$ and a   representation $\mu: \h\to\gl(\g)$ of $\h$ on $\g$ such that
\begin{eqnarray}
\label{eq:mp1}\rho(x) [\xi,\eta]_{\h}&=&[\rho(x)\xi,\eta]_{\h}+[\xi,\rho(x) \eta]_{\h}+\rho\big((\mu(\eta)x\big)\xi-\rho\big(\mu(\xi)x\big) \eta,\\
\label{eq:mp2}\mu(\xi) [x, y]_{\g}&=&[\mu(\xi)x,y]_{\g}+[x,\mu(\xi) y]_{\g}+\mu\big(\rho(y) \xi\big)x-\mu\big(\rho(x)\xi\big)y,
\end{eqnarray}
 for all $x,y\in \g$ and $\xi,\eta\in \h$. We will denote a matched pair of Lie algebras by $(\g,\h;\rho,\mu)$, or simply by $(\g,\h)$.

The following alternative description of matched pairs of Lie algebras is well-known.
\begin{pro}\label{eqdefi}
Let $(\g,\h;\rho,\mu)$ be a matched pair of Lie algebras. Then there is a Lie algebra structure on the direct sum space $\g\oplus \h$ with the Lie bracket $[\cdot,\cdot]_{\bowtie }$ given by
\[[x+\xi,y+\eta]_{\bowtie }=\big([x,y]_\g+\mu(\xi)y-\mu(\eta) x\big)+\big([\xi,\eta]_\h+\rho(x)\eta-\rho(y) \xi).\]
Denote this Lie algebra by $\g\bowtie \h$.

Conversely, if $(\g\oplus \h,[\cdot,\cdot])$ is a Lie algebra such that $\g$ and $\h$ are Lie subalgebras, then  $(\g,\h;\rho,\mu)$ is a matched pair of Lie algebras, where the representations $\rho: \g\to\gl(\h)$  and $\mu: \h\to\gl(\g)$ are determined  by
\[[x,\xi] =\rho(x) \xi-\mu(\xi) x,\quad \forall x\in \g,\xi\in \h.\]
\end{pro}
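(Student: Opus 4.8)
The plan is to verify both directions by a careful bookkeeping of the Jacobi identity on $\g\oplus\h$, exploiting that every assertion decomposes according to the natural splitting into $\g$- and $\h$-components.

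For the direct implication, I would first observe that the bracket $[\cdot,\cdot]_{\bowtie}$ is manifestly skew-symmetric, so the only thing to check is the Jacobi identity, and by multilinearity it suffices to verify it on homogeneous triples. When all three arguments lie in $\g$ (resp.\ $\h$), the identity reduces immediately to the Jacobi identity of $\g$ (resp.\ $\h$), since $\g$ and $\h$ sit inside $\g\oplus\h$ as Lie subalgebras. The substance lies in the two mixed cases. For a triple $(x,y,\xi)$ with $x,y\in\g$ and $\xi\in\h$, I would expand the cyclic sum and separate its $\h$-part from its $\g$-part: the vanishing of the $\h$-component is exactly the statement $\rho([x,y]_\g)=[\rho(x),\rho(y)]$ that $\rho$ preserves brackets, which holds because $\rho$ is a representation, whereas the vanishing of the $\g$-component is precisely the compatibility condition \eqref{eq:mp2} (after rewriting two terms via skew-symmetry of $[\cdot,\cdot]_\g$). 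The symmetric triple $(x,\xi,\eta)$ with $x\in\g$ and $\xi,\eta\in\h$ is handled dually: its $\g$-component encodes that $\mu$ is a representation and its $\h$-component is \eqref{eq:mp1}.

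For the converse, I would define $\rho(x)\xi\in\h$ and $\mu(\xi)x\in\g$ to be the $\h$- and $\g$-components of $[x,\xi]$, so that $[x,\xi]=\rho(x)\xi-\mu(\xi)x$ holds by construction, with bilinearity of both maps inherited from bilinearity of the ambient bracket. To show $(\g,\h;\rho,\mu)$ is a matched pair I would run the previous computation in reverse: applying the ambient Jacobi identity to $(x,y,\xi)$ and reading off the $\h$-part yields $\rho([x,y]_\g)=[\rho(x),\rho(y)]$, so that $\rho$ is a representation, while its $\g$-part yields \eqref{eq:mp2}; applying it to $(x,\xi,\eta)$ and reading off the $\g$-part shows that $\mu$ is a representation and the $\h$-part gives \eqref{eq:mp1}.

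The only genuine obstacle is organizational rather than conceptual: one must project each cyclic sum cleanly onto $\g$ and $\h$ and match the resulting terms to the correct axiom, taking care of the signs produced by the defining relation $[x,\xi]=\rho(x)\xi-\mu(\xi)x$ and by skew-symmetry. I would fix once and for all the convention that $\rho$ acts on the $\h$-slot and $\mu$ on the $\g$-slot, and then track every term through the splitting; with this convention the four defining identities of a matched pair stand in exact correspondence with the four component-equations extracted from the two mixed Jacobi triples.
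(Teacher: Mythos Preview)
Your proof is correct and is the standard argument; the paper itself does not supply a proof of this proposition, stating it instead as a well-known equivalent description of matched pairs (with references to Majid and Takeuchi). Your decomposition of the Jacobi identity on mixed homogeneous triples, with the $\h$-component of the $(x,y,\xi)$-triple yielding that $\rho$ is a representation and the $\g$-component yielding \eqref{eq:mp2} (and dually for $(x,\xi,\eta)$), is exactly how this result is verified in the literature.
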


Let $(\g,[\cdot,\cdot]_\g)$ be a Lie algebra. We denote by $(\g\oplus \g,[\cdot,\cdot]_{\g\oplus \g})$ the direct sum Lie algebra
with the bracket given by
\[[(x_1,x_2),(y_1,y_2)]_{\g\oplus \g}=([x_1,y_1]_\g,[x_2,y_2]_\g).\]

\begin{pro}\label{doublelie}
Let $(\g,[\cdot,\cdot]_\g,B)$ be a Rota-Baxter Lie algebra of weight $\lambda$ ($\lambda\neq 0$) with the descendent Lie algebra  $\g_B$.  Then there is a Lie algebra structure $[\cdot,\cdot]_D$ on the vector space $\g\oplus \g$ given by
\begin{eqnarray*}
[(0,x),(0,y)]_D&=&(0,[x,y]_\g);\\
{[(\xi,0),(\eta,0)]_D}&=&([\xi,\eta]_B,0);\\
{[(0,x), (\xi,0)]_D}&=&-[(\xi,0),(0,x)]_D=([x,\xi]_\g, [x,B\xi]_\g-B[x,\xi]_\g).
\end{eqnarray*}
 Moreover, there is a Lie algebra isomorphism from $(\g\oplus \g,[\cdot,\cdot]_D )$  to the direct sum Lie algebra $\g\oplus \g$ defined by
\begin{eqnarray}\label{phi}
\phi:\g\oplus \g\to \g\oplus \g,\quad (\xi,x)\mapsto (B\xi+\lambda\xi+x,B\xi+x).
\end{eqnarray}
\end{pro}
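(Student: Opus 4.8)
The plan is to avoid verifying the Jacobi identity for $[\cdot,\cdot]_D$ by brute force, and instead to construct $[\cdot,\cdot]_D$ as the transport of the direct sum bracket along the map $\phi$. Concretely, I would first argue that $\phi$ is a linear isomorphism: since $\lambda\neq 0$, the relations $a=B\xi+\lambda\xi+x$ and $b=B\xi+x$ can be solved to give $\xi=\frac{1}{\lambda}(a-b)$ and $x=b-\frac{1}{\lambda}B(a-b)$, so $\phi$ is bijective with $\phi^{-1}(a,b)=\big(\frac{1}{\lambda}(a-b),\, b-\frac{1}{\lambda}B(a-b)\big)$. Declaring $[u,v]_D:=\phi^{-1}\big([\phi(u),\phi(v)]_{\g\oplus \g}\big)$ then automatically produces a Lie bracket, because the pullback of a Lie bracket along a linear isomorphism is again a Lie bracket, and with this definition $\phi$ is a Lie algebra isomorphism by construction. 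It therefore remains only to check that this pullback bracket agrees with the three displayed formulas.

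To do this I would evaluate $\phi$ on the two families of generators, namely $\phi(0,x)=(x,x)$ and $\phi(\xi,0)=(B\xi+\lambda\xi,\,B\xi)$, compute the three relevant brackets in $\g\oplus \g$, and then apply $\phi^{-1}$. For the pair $(0,x),(0,y)$ one gets $[(x,x),(y,y)]_{\g\oplus \g}=([x,y]_\g,[x,y]_\g)$, whose image under $\phi^{-1}$ is $(0,[x,y]_\g)$, matching the first formula. For the mixed pair one computes $[(x,x),(B\xi+\lambda\xi,B\xi)]_{\g\oplus \g}=([x,B\xi]_\g+\lambda[x,\xi]_\g,\,[x,B\xi]_\g)$; since the difference of the two components is $\lambda[x,\xi]_\g$, applying $\phi^{-1}$ yields exactly $([x,\xi]_\g,\,[x,B\xi]_\g-B[x,\xi]_\g)$, matching the cross-term formula, and skew-symmetry handles the opposite order.

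The computation I expect to be the main obstacle is the pair $(\xi,0),(\eta,0)$, where the Rota-Baxter identity must be used in its sharpest form. Here the second component of $[\phi(\xi,0),\phi(\eta,0)]_{\g\oplus \g}$ is $[B\xi,B\eta]_\g$, which by the weight-$\lambda$ Rota-Baxter relation equals $B[\xi,\eta]_B$; the first component is $[B\xi+\lambda\xi,B\eta+\lambda\eta]_\g$, and the key is to reorganize it, using $[B\xi,\eta]_\g+[\xi,B\eta]_\g=[\xi,\eta]_B-\lambda[\xi,\eta]_\g$, into $[B\xi,B\eta]_\g+\lambda[\xi,\eta]_B=(B+\lambda\,\id)[\xi,\eta]_B$. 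Thus $[\phi(\xi,0),\phi(\eta,0)]_{\g\oplus \g}=\big((B+\lambda\id)[\xi,\eta]_B,\,B[\xi,\eta]_B\big)=\phi([\xi,\eta]_B,0)$, so its $\phi^{-1}$-image is $([\xi,\eta]_B,0)$, matching the remaining formula. With all three brackets verified on generators, bilinearity extends the agreement to all of $\g\oplus \g$, which completes the argument.
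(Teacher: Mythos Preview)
Your proposal is correct and follows essentially the same approach as the paper: both exhibit $\phi^{-1}$ explicitly, define $[\cdot,\cdot]_D$ as the $\phi$-pullback of the direct sum bracket (so that the Jacobi identity and the isomorphism come for free), and then verify agreement with the three displayed formulas on generators using the Rota-Baxter identity $[B\xi,B\eta]_\g=B[\xi,\eta]_B$. The only cosmetic difference is the order in which the three cases are checked.
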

\begin{proof}
First, it is direct to show that $\phi$ is an isomorphism between vector spaces whose inverse is
\[\phi^{-1}:\g\oplus \g\to \g\oplus \g,\qquad (x,y)\mapsto \frac{1}{\lambda}(x-y, \lambda y-B(x-y)).\]
In the sequel, we show that the bracket $[\cdot,\cdot]_D$ is exactly the pull-back of the  direct sum Lie algebra structure on $\g\oplus \g$ by $\phi$, i.e.
$$
[e_1,e_2]_D=\phi^{-1}[\phi(e_1),\phi(e_2)]_{\g\oplus \g},\quad \forall e_1, e_2\in \g\oplus\g.
$$
Consequently, $(\g\oplus \g,[\cdot,\cdot]_D )$ is a Lie algebra and $\phi$ is a Lie algebra isomorphism.

 By the fact that $B[\xi,\eta]_B=[B\xi,B\eta]_\g$, we have\begin{eqnarray*}
\phi^{-1}[\phi(\xi,0),\phi(\eta,0)]_{\g \oplus \g}&=&\phi^{-1}[(B\xi+\lambda \xi,B\xi),(B\eta+\lambda \eta,B\eta)]_{\g \oplus \g}\\
 &=&\phi^{-1}([B\xi+\lambda \xi,B\eta+\lambda \eta]_\g,[B\xi,B\eta]_\g)\\
 &=&\phi^{-1}(B[\xi,\eta]_B+\lambda [\xi,\eta]_B,B[\xi,\eta]_B)\\ &=& [(\xi,0),(\eta,0)]_D,
\end{eqnarray*}
and
\begin{eqnarray*}
&&\phi^{-1}[\phi(0,x),\phi(\xi,0)]_{\g\oplus \g}\\&=&\phi^{-1}[(x,x),(B\xi+\lambda \xi,B\xi)]_{\g\oplus \g}
\\ &=&\phi^{-1}([x,B\xi+\lambda\xi]_\g, [x,B\xi]_\g)
\\ &=&\phi^{-1}(B[x,\xi]_\g+\lambda [x,\xi]_\g+[x,B\xi]_\g-B[x,\xi]_\g, B[x,\xi]_\g+[x,B\xi]_\g-B[x,\xi]_\g)
\\ &=&([x,\xi]_\g,[x,B\xi]_\g-B[x,\xi]_\g)
\\&=&[(0,x),(\xi,0)]_D.
\end{eqnarray*}
It is obvious that $[(0,x),(0,y)]_D=\phi^{-1}[\phi(0,x),\phi(0,y)]_{\g\oplus \g}$. Therefore the bracket $[\cdot,\cdot]_D$ is exactly the pull-back of the  direct sum Lie bracket $[\cdot,\cdot]_{\g\oplus \g}$ by $\phi$.
\end{proof}
 Parallel results of Proposition \ref{doublelie} for associative algebras can be found in \cite{G3}.
\begin{thm}\label{thm:mpB}
 Let $(\g,[\cdot,\cdot]_\g,B)$ be a Rota-Baxter Lie algebra of weight $\lambda$ ($\lambda\neq 0$) with the descendent Lie algebra  $\g_B$.  Then $(\g_B,\g;\rho,\mu)$ is a matched pair of Lie algebras, where $\rho:\g_B\to\gl(\g)$ and $\mu:\g\to \gl(\g_B)$ are given by
\begin{equation}\label{eq:mpr}
 \rho(\xi)(x)= B[x,\xi]_\g -[x,B\xi]_\g,\quad \mu(x)(\xi)=  [x,\xi]_\g,\quad \forall x\in\g, \xi\in\g_B.
\end{equation}
Moreover, the corresponding Lie algebra $\g_B\bowtie \g$ is exactly the Lie algebra $(\g\oplus \g,[\cdot,\cdot]_D)$ given in Proposition \ref{doublelie}.
\end{thm}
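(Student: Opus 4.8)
The plan is to derive the statement almost entirely from Proposition \ref{doublelie} together with the converse direction of Proposition \ref{eqdefi}, so that essentially no new Jacobi-type computation is required. First I would record that, inside the Lie algebra $(\g\oplus\g,[\cdot,\cdot]_D)$ produced by Proposition \ref{doublelie}, the two summands are Lie subalgebras: the relation $[(\xi,0),(\eta,0)]_D=([\xi,\eta]_B,0)$ shows that $\{(\xi,0)\mid\xi\in\g\}$ is a subalgebra isomorphic to the descendent algebra $\g_B$, and the relation $[(0,x),(0,y)]_D=(0,[x,y]_\g)$ shows that $\{(0,x)\mid x\in\g\}$ is a subalgebra isomorphic to $\g$. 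Hence $\g\oplus\g$ equipped with $[\cdot,\cdot]_D$ is, as a vector space, the direct sum of the Lie subalgebras $\g_B$ and $\g$.

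Now I would invoke the converse part of Proposition \ref{eqdefi}, with the first factor taken to be $\g_B$ and the second factor taken to be $\g$. It tells us at once that $(\g_B,\g)$ is a matched pair of Lie algebras, the representations being extracted from the mixed bracket via $[\xi,x]_D=\rho(\xi)x-\mu(x)\xi$ for $\xi\in\g_B$ and $x\in\g$. The only remaining point is to make $\rho$ and $\mu$ explicit. Using skew-symmetry,
\[
[(\xi,0),(0,x)]_D=-[(0,x),(\xi,0)]_D=\big(-[x,\xi]_\g,\ B[x,\xi]_\g-[x,B\xi]_\g\big),
\]
and matching the $\g_B$-component against $-\mu(x)\xi$ and the $\g$-component against $\rho(\xi)x$ yields exactly $\mu(x)\xi=[x,\xi]_\g$ and $\rho(\xi)x=B[x,\xi]_\g-[x,B\xi]_\g$, which are the formulas \eqref{eq:mpr}. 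Finally, since Proposition \ref{eqdefi} guarantees that the bracket of the bowtie Lie algebra $\g_B\bowtie\g$ reproduces the mixed bracket from which $\rho$ and $\mu$ were read off, that matched-pair Lie algebra is literally $(\g\oplus\g,[\cdot,\cdot]_D)$, which is the last assertion.

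I do not expect a genuine obstacle here; the work is bookkeeping. The one thing to watch is consistency of conventions: one must assign $\g_B$ as the first factor (the one carrying $\rho$) and $\g$ as the second (the one carrying $\mu$) in agreement with Proposition \ref{eqdefi}, and track the sign coming from the skew-symmetry of $[\cdot,\cdot]_D$ when splitting the mixed bracket into its two components. A reader wanting a self-contained proof could instead verify directly that $\rho$ is a representation of $\g_B$ on $\g$, that $\mu$ is a representation of $\g$ on $\g_B$, and that the compatibilities \eqref{eq:mp1}--\eqref{eq:mp2} hold; but each of these identities is merely a component of the single Jacobi identity for $[\cdot,\cdot]_D$ already established in Proposition \ref{doublelie}, so routing through Proposition \ref{eqdefi} is the economical route.
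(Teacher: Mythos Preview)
Your proposal is correct and follows essentially the same route as the paper: use Proposition~\ref{doublelie} to see that $\g_B$ and $\g$ sit as Lie subalgebras of $(\g\oplus\g,[\cdot,\cdot]_D)$, read off $\rho$ and $\mu$ from the mixed bracket, and invoke the converse direction of Proposition~\ref{eqdefi}. Your write-up is in fact more explicit than the paper's two-line argument, and your attention to the sign and component conventions is exactly the bookkeeping needed.
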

\begin{proof}
By Proposition \ref{doublelie},  it is obvious that both $\g_B$ and $\g$ are Lie subalgebras of the Lie algebra $(\g\oplus \g,[\cdot,\cdot]_D )$, and
$$
{[(\xi,0),(0,x)]_D} =(\rho(\xi)(x), -\mu(x)(\xi)).
$$ Thus,  $(\g_B,\g;\rho,\mu)$ is a matched pair of Lie algebras. The other conclusion is obvious.
\end{proof}

Since $\phi$ is a Lie algebra isomorphism and $\g_B$ is a Lie subalgebra, it follows that $\im(\phi|_{\g_B})\subset \g\oplus \g$ is a Lie subalgebra, which is isomorphic to $\g_B$. It gives an alternative approach to  prove the factorization theorem of Rota-Baxter Lie algebras. We refer to \cite{STS, GLS} for more details for the factorization theorem of Rota-Baxter Lie algebras of weight $1$.
\begin{cor}
Let $(\g,[\cdot,\cdot]_\g,B)$ be a Rota-Baxter Lie algebra of weight $\lambda$ $(\lambda\neq 0)$. Then for any $x\in \g$, there exists a unique decomposition
$x=x_+-x_-$ with
 $(x_+,x_-)\in \im(\phi|_{\g_B})\subset \g\oplus \g$.
\end{cor}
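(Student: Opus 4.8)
The plan is to make the image $\im(\phi|_{\g_B})$ completely explicit and then observe that the requested decomposition amounts to inverting the map $\xi\mapsto\lambda\xi$, which is exactly where the hypothesis $\lambda\neq0$ enters. Under the identification of $\g_B$ with the first summand $\{(\xi,0):\xi\in\g\}$ of $\g\oplus\g$ used in Proposition \ref{doublelie}, formula \eqref{phi} gives $\phi(\xi,0)=(B\xi+\lambda\xi,\,B\xi)$. Hence $\im(\phi|_{\g_B})=\{(B\xi+\lambda\xi,\,B\xi):\xi\in\g\}$, and since $\phi$ is a linear isomorphism, the restriction $\phi|_{\g_B}\colon\g_B\to\im(\phi|_{\g_B})$ is a linear bijection.

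Next I would translate the decomposition condition into a single equation for $\xi$. A pair $(x_+,x_-)$ lies in $\im(\phi|_{\g_B})$ precisely when $x_+=B\xi+\lambda\xi$ and $x_-=B\xi$ for some $\xi\in\g$, in which case $x_+-x_-=\lambda\xi$. Therefore the requirement $x=x_+-x_-$ is equivalent to $\lambda\xi=x$. Because $\lambda\neq0$, this has the unique solution $\xi=\tfrac1\lambda x$, and consequently the pair $(x_+,x_-)$ is uniquely determined, namely $x_+=x+\tfrac1\lambda Bx$ and $x_-=\tfrac1\lambda Bx$. This simultaneously yields existence (take this $\xi$) and uniqueness (any admissible pair forces the same $\xi$), which is the assertion.

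Conceptually, the cleanest way to package the argument is to introduce the subtraction map $d\colon\g\oplus\g\to\g$, $d(a,b)=a-b$, and note that the composite $d\circ\phi|_{\g_B}$ sends $(\xi,0)$ to $\lambda\xi$; that is, it equals $\lambda\,\id$ after identifying $\g_B$ with $\g$ as vector spaces. Since $\lambda\neq0$, this composite is an isomorphism, so $d$ restricts to an isomorphism $\im(\phi|_{\g_B})\xrightarrow{\sim}\g$, and the unique $d$-preimage of $x$ is exactly the pair $(x_+,x_-)$. I do not expect any genuine obstacle here: all the structural content has already been carried out in Proposition \ref{doublelie}, and what remains is the elementary observation that the difference of the two components of $\phi(\xi,0)$ recovers $\lambda\xi$. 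The only point where the argument could fail is the degenerate case $\lambda=0$, which is explicitly excluded in the hypotheses.
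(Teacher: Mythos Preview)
Your proof is correct and follows essentially the same approach as the paper: both compute $\phi(\xi,0)=(B\xi+\lambda\xi,B\xi)$, observe that the difference of the two components is $\lambda\xi$, and take $\xi=\tfrac1\lambda x$ to produce $(x_+,x_-)=\phi(\tfrac1\lambda x,0)=(x+\tfrac1\lambda Bx,\tfrac1\lambda Bx)$. Your treatment of uniqueness is in fact slightly more explicit than the paper's, which simply invokes that $\phi|_{\g_B}$ is an isomorphism onto its image.
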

\begin{proof}
We have
$x=\frac{1}{\lambda}(Bx+\lambda x)-\frac{1}{\lambda}Bx=x_+-x_-$, where $(x_+,x_-)=\phi(\frac{1}{\lambda} x,0)$. The decomposition is unique since $\phi: \g_B\to \im(\phi|_{\g_B})$ is a Lie algebra isomorphism.
\end{proof}

Let $(\g,\g^*_r)$ be a factorizable Lie bialgebra with $I=r_+-r_-$. Then $(\g,B=\lambda r_-\circ I^{-1})$ is a Rota-Baxter Lie algebra of weight $\lambda$. By Theorem  \ref{thm:mpB}, we have the Lie algebra $\g_B\bowtie \g$,  which is the double of the matched pair $(\g_B,\g;\rho,\mu)$. On the other hand, there is a Lie algebra $\g^*_r\bowtie \g$, which is the double of the Lie bialgebra $(\g^*_r,\g)$.

\begin{pro}\label{algebradiagram}
With the above notations, we have a commutative diagram of Lie algebra isomorphisms:
\begin{eqnarray}\label{diagram}
		\vcenter{\xymatrix{
			\g_r^*\bowtie \g \ar[d]_(0.45){(\frac{1}{\lambda}I)\oplus \id}\ar[r]^(0.45){\psi} &\g\oplus \g\\
			\g_B\bowtie \g,\ar[ur]_{\phi}&
		}}
\end{eqnarray}
where $\phi$ is defined by \eqref{phi} and $\psi$ is given by
\[\psi(\xi,x)=(r_+\xi+x,r_-\xi+x),\quad \forall x\in \g,\xi\in \g^*.\]
\end{pro}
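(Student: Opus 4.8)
The plan is to exploit the fact that $\phi$ is already known to be a Lie algebra isomorphism (Proposition \ref{doublelie} and Theorem \ref{thm:mpB}), so that the whole statement reduces to checking that the triangle commutes together with the fact that one of the two remaining arrows is a Lie algebra isomorphism. First I would verify commutativity directly. For $\xi\in\g^*$ and $x\in\g$, applying $\phi$ to $\big((\tfrac{1}{\lambda}I)\oplus\id\big)(\xi,x)=(\tfrac{1}{\lambda}I\xi,x)$ and using $B=\lambda r_-\circ I^{-1}$ gives $B(\tfrac{1}{\lambda}I\xi)=r_-\xi$, while $\lambda\cdot\tfrac{1}{\lambda}I\xi=I\xi=r_+\xi-r_-\xi$; substituting into the defining formula \eqref{phi} for $\phi$ collapses the first component to $r_+\xi+x$ and the second to $r_-\xi+x$, that is, to $\psi(\xi,x)$. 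Hence $\psi=\phi\circ\big((\tfrac{1}{\lambda}I)\oplus\id\big)$, and it remains only to prove that $\psi$ is a Lie algebra isomorphism, since then $(\tfrac{1}{\lambda}I)\oplus\id=\phi^{-1}\circ\psi$ is automatically one.

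To see that $\psi$ is a Lie algebra isomorphism, I would first note bijectivity via the factorization coming from invertibility of $I=r_+-r_-$: given $(u,v)\in\g\oplus\g$, the unique preimage is $(\xi,x)$ with $\xi=I^{-1}(u-v)$ and $x=u-r_+\xi$, the consistency condition $u-r_+\xi=v-r_-\xi$ being exactly $I\xi=u-v$. For the homomorphism property I would test $\psi$ against the three types of brackets in the double $\g^*_r\bowtie\g$ of the Lie bialgebra $(\g^*_r,\g)$. On the $\g^*_r$-part, $\psi[(\xi,0),(\eta,0)]_{\bowtie}=\big(r_+[\xi,\eta]_r,\,r_-[\xi,\eta]_r\big)=\big([r_+\xi,r_+\eta]_\g,\,[r_-\xi,r_-\eta]_\g\big)$, which matches $[\psi(\xi,0),\psi(\eta,0)]_{\g\oplus\g}$ precisely because $r_+$ and $r_-$ are Lie algebra homomorphisms (Theorem \ref{thm:rcybe}); on the $\g$-part, $\psi(0,x)=(x,x)$ is the diagonal embedding, manifestly a homomorphism.

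The main work, and the step I expect to be the only real obstacle, is the cross term. Writing the double bracket of $(\g^*_r,\g)$ as $[(\xi,0),(0,x)]_{\bowtie}=(-\ad^*_x\xi,\,\ad^{*,r}_\xi x)$, where $\ad^{*,r}_\xi$ denotes the coadjoint action of $\g^*_r$ on $\g$ and $\ad^*_x$ the usual coadjoint action of $\g$ on $\g^*$, a short pairing computation using the definition $[\xi,\eta]_r=\ad^*_{r_+\xi}\eta-\ad^*_{r_-\eta}\xi$ together with $r_-^*=-r_+$ yields the key identity $\ad^{*,r}_\xi x=[r_+\xi,x]_\g+r_+\ad^*_x\xi$. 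Feeding this into $\psi$, the first component of $\psi[(\xi,0),(0,x)]_{\bowtie}$ reduces to $[r_+\xi,x]_\g$, matching the first component of $[\psi(\xi,0),\psi(0,x)]_{\g\oplus\g}=([r_+\xi,x]_\g,[r_-\xi,x]_\g)$, while the second component matches exactly when $I\ad^*_x\xi=\ad_x(I\xi)$, which is nothing but the $\ad$-invariance \eqref{eq:invI} of $I$. Thus $\psi$ respects all three brackets and is a Lie algebra isomorphism, so the commuting triangle consists entirely of Lie algebra isomorphisms.
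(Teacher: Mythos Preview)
Your argument is correct. The paper proceeds in the reverse logical order: it first proves directly that $(\tfrac{1}{\lambda}I)\oplus\id:\g_r^*\bowtie\g\to\g_B\bowtie\g$ is a Lie algebra isomorphism (the only nontrivial case being the cross bracket, handled via $\ad$-invariance of $I$ and $r_-^*=-r_+$), then checks the commutativity $\phi\circ\big((\tfrac{1}{\lambda}I)\oplus\id\big)=\psi$, and deduces that $\psi$ is an isomorphism. You instead verify commutativity first and then prove directly that $\psi$ is a Lie algebra isomorphism, so that $(\tfrac{1}{\lambda}I)\oplus\id=\phi^{-1}\circ\psi$ inherits the property. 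The computational content is the same---both routes hinge on $r_\pm$ being homomorphisms (Theorem \ref{thm:rcybe}), on the identity $r_-^*=-r_+$, and on the invariance $I\ad_x^*=\ad_x I$---and neither is materially shorter. Your choice has the mild conceptual advantage that $\psi$ is the classical ``$(r_+,r_-)$-embedding'' into $\g\oplus\g$, so its homomorphism property is a statement of independent interest; the paper's choice keeps the Rota-Baxter side $\g_B\bowtie\g$ more visibly in play.
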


\begin{proof}
We first show that $\tilde{I}:=(\frac{1}{\lambda} I)\oplus \id: \g_r^*\bowtie \g\to \g_B\bowtie \g$ is a Lie algebra isomorphism. We only need to check the relation
\begin{eqnarray}\label{tildeI}
\tilde{I}([x,\xi]_{\bowtie})=[\tilde{I}(x),\tilde{I}(\xi)]_D,\quad \forall x\in \g,\xi\in \g^*.
\end{eqnarray}
In fact, by the  $\ad$-invariant condition \eqref{eq:invI}, we have
\[\tilde{I}([x,\xi]_{\bowtie})=(\frac{1}{\lambda} I(\ad_x^*\xi),-\add_\xi^* x)=(\frac{1}{\lambda} [x,I\xi]_\g, -\add_\xi^*x ),\]
and
\[ {[\tilde{I}(x),\tilde{I}(\xi)]_D}=[(0,x),(\frac{1}{\lambda} I\xi,0)]_D=(\frac{1}{\lambda} [x,I\xi]_\g, \frac{1}{\lambda} [x,BI\xi]_\g-\frac{1}{\lambda} B[x,I\xi]_\g).\]
Taking pairing with any $\eta\in \g^*$ and since $r_-=-r_+^*$, one finds
\begin{eqnarray*}
\langle -\add_\xi^* x,\eta\rangle &=&-\langle x,\ad_{r_+\eta}^* \xi-\ad_{r_-\xi}^* \eta\rangle
 \\ &=&\langle [x,r_-\xi]_\g,\eta\rangle+\langle [r_+\eta, x]_\g,\xi\rangle\\ &=&
\langle \frac{1}{\lambda}[x,BI\xi]_\g-r_-(\ad_x^*\xi),\eta\rangle\\
&=&\langle \frac{1}{\lambda} [x,BI\xi]_\g-\frac{1}{\lambda}BI(\ad_x^*\xi),\eta\rangle\\ &=&\frac{1}{\lambda}\langle [x,BI\xi]_\g-B[x,I\xi]_\g,\eta\rangle,
\end{eqnarray*}
which implies that \eqref{tildeI} holds.

Then, for $(\xi,x)\in \g_r^*\bowtie \g$, we have
\[\phi\circ (\frac{1}{\lambda}I\oplus \id)(\xi,x)=\phi(\frac{1}{\lambda}I\xi,x)=(\frac{1}{\lambda} BI\xi+I\xi+x,\frac{1}{\lambda} BI\xi+x)=(r_+\xi+x,r_-\xi+x)=\psi(\xi,x).\]
Since both $\phi$ and $\frac{1}{\lambda} I\oplus \id$ are Lie algebra isomorphisms, it follows that $\psi$ is also a Lie algebra isomorphism.
\end{proof}

\subsection{Matched pairs of Rota-Baxter Lie algebras}

In this subsection, we introduce the notion of matched pairs of Rota-Baxter Lie algebras of weight $\lambda$, and show that a matched pair  of Rota-Baxter Lie algebras of weight $\lambda$ gives rise to  a descendent matched pair of Rota-Baxter Lie algebras.

Since in the definition of a matched pair of Lie algebras, we need the concept of representations of Lie algebras. For the purpose of defining matched pairs of Rota-Baxter Lie algebras, we introduce the notion of representations of Rota-Baxter Lie algebras of weight $\lambda$. Note that the concept of representations of Rota-Baxter Lie algebras of weight 0 was already given in \cite{JS} in the study of cohomologies of Rota-Baxter Lie algebras, and the notion of representations of Rota-Baxter associative algebras was introduced in \cite{GouLin}, and further studied in \cite{QP}.

\begin{defi}
A {\bf representation of a Rota-Baxter Lie algebra} $(\g,B)$ of weight $\lambda$ on a vector space $W$ with respect to a linear transformation $T\in\gl(W)$ is a representation $\rho$ of the Lie algebra $\g$ on $W$, satisfying
\begin{equation*}
\rho(Bx)(T u)=T\Big(\rho(Bx) u +   \rho(x)(T u)+\lambda \rho(x) u\Big), \quad \forall x\in\g, u\in W.
\end{equation*}
\end{defi}
We will denote a representation by $(W,T,\rho).$

 Let $(W,T,\rho)$ be a representation of a Rota-Baxter Lie algebra $(\g,B)$ of weight $\lambda$. Since $(W, \rho)$ is a representation of the Lie algebra $ \g $, we have the semidirect product Lie algebra $\g\ltimes W$. Then define the map
\[B\oplus T: \g\ltimes W\to \g\ltimes W,\qquad x+u \mapsto Bx+Tu.\]
The following conclusion is obvious.
\begin{pro}
 With  above notations, $(\g\ltimes W,B\oplus T)$ is a Rota-Baxter Lie algebra of weight $\lambda$, called the semidirect product of $(\g,B)$ and the representation $(W,T,\rho)$.
\end{pro}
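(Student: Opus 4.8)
The plan is to verify directly that the pair $(B\oplus T, \rho_{\ltimes})$ satisfies both defining conditions of a Rota-Baxter Lie algebra of weight $\lambda$, where $\rho_{\ltimes}$ denotes the semidirect product bracket on $\g\ltimes W$. Since the statement is flagged as obvious, the proof is a short computation rather than a conceptual argument; the only subtlety is bookkeeping the semidirect product bracket correctly. Recall that the bracket on $\g\ltimes W$ is
\[
[x+u,y+v]_{\ltimes}=[x,y]_\g+\rho(x)v-\rho(y)u,\quad \forall x,y\in\g,\ u,v\in W,
\]
so that $\g$ is a Lie subalgebra and $W$ is an abelian ideal acted on by $\rho$.

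First I would expand the left-hand side of the Rota-Baxter identity for $B\oplus T$ applied to two generic elements $x+u$ and $y+v$. Since $(B\oplus T)(x+u)=Bx+Tu$, we have
\[
[(B\oplus T)(x+u),(B\oplus T)(y+v)]_{\ltimes}=[Bx,By]_\g+\rho(Bx)(Tv)-\rho(By)(Tu).
\]
The $\g$-component $[Bx,By]_\g$ is handled by the fact that $B$ is already a Rota-Baxter operator of weight $\lambda$ on $\g$, so it equals $B\big([Bx,y]_\g+[x,By]_\g+\lambda[x,y]_\g\big)$, which is exactly the $\g$-part of the right-hand side. For the $W$-component, I would apply the representation axiom $\rho(Bx)(Tv)=T\big(\rho(Bx)v+\rho(x)(Tv)+\lambda\rho(x)v\big)$ to the first term and the analogous identity (with the roles of the two arguments swapped, producing a sign) to $\rho(By)(Tu)$, then collect terms.

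Matching this against the right-hand side $(B\oplus T)\big([(B\oplus T)(x+u),y+v]_{\ltimes}+[x+u,(B\oplus T)(y+v)]_{\ltimes}+\lambda[x+u,y+v]_{\ltimes}\big)$ reduces to checking that the $W$-entries produced by the representation axiom coincide with those produced by expanding the three brackets inside and then applying $T$. Concretely, the target $W$-component is
\[
T\Big(\rho(Bx)v-\rho(y)(Tu)+\rho(x)(Tv)-\rho(By)u+\lambda\big(\rho(x)v-\rho(y)u\big)\Big),
\]
and the terms from the two applications of the representation axiom reassemble precisely into this expression once one uses linearity of $T$ and $\rho$. The main (and only) obstacle is simply organizing the several $\rho$-terms so that the cancellation is transparent; there is no genuine difficulty, since every ingredient—the Rota-Baxter identity on $\g$ and the single representation axiom—is already available, and the $W$-component is linear in each slot so no Jacobi-type identity is needed. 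This is why the proposition is stated without a detailed proof in the text.
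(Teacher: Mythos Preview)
Your proposal is correct and is exactly the direct verification the paper has in mind; the paper itself gives no proof, simply declaring the conclusion obvious, and your expansion of the $\g$- and $W$-components using the Rota-Baxter identity on $\g$ and the representation axiom is the natural (and only reasonable) way to justify it.
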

\begin{ex}
  It is straightforward  to see that $(\g,B,\ad)$ is a representation of a Rota-Baxter Lie algebra  $(\g,B)$ of weight $\lambda$, which is called the {\bf adjoint representation}  of $(\g,B)$.
\end{ex}

\begin{defi}
 Let $(W,T,\rho)$ and $(W',T',\rho')$ be two representations of a  Rota-Baxter Lie algebra  $(\g,B)$ of weight $\lambda$. A homomorphism from $(W,T,\rho)$ to $(W',T',\rho')$ is a linear map $\phi:W\to W'$ such that
 \begin{eqnarray*}
   \phi\circ \rho(x)&=&\rho'(x)\circ \phi,\quad \forall x\in\g,\\
   \phi\circ T&=&T'\circ \phi.
 \end{eqnarray*}
\end{defi}

Rota-Baxter Lie algebras  of weight $\lambda$ not only admit adjoint representations, but also coadjoint representations. The following result plays an important role in our later study of Rota-Baxter Lie bialgebras  of weight $\lambda$ (see Theorem \ref{bimp}).

\begin{thm}\label{thm:coadjoint}
 Let $(\g,B)$ be a Rota-Baxter Lie algebra  of weight $\lambda$.  Then $(\g^*,-\lambda\id-B^*,\ad^*)$ is a representation, which is called the {\bf coadjoint representation} of $(\g,B)$.

 Moreover, if  $(\g,B,S)$ is a quadratic Rota-Baxter Lie algebra  of weight $\lambda$, then the linear map $S^\sharp:\g\to\g^*$ defined by
 $
 \langle S^\sharp(x),y\rangle=S(x,y)
 $
for all $x,y\in\g$, is an isomorphism  from the adjoint representation $(\g,B,\ad)$ to the coadjoint representation $(\g^*,-\lambda\id-B^*,\ad^*)$.
\end{thm}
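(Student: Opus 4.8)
The plan is to handle the two assertions separately, since the first is a compatibility check for the Rota-Baxter structure while the second is a pair of intertwining relations. For the first part, observe that $\ad^*$ is already the coadjoint representation of the underlying Lie algebra $\g$ on $\g^*$, so the only thing needing verification is the Rota-Baxter axiom from the definition of a representation, namely
\[
\ad_{Bx}^*(T\xi)=T\big(\ad_{Bx}^*\xi+\ad_x^*(T\xi)+\lambda\ad_x^*\xi\big),\qquad T:=-\lambda\id-B^*,
\]
for all $x\in\g$ and $\xi\in\g^*$. I would verify this by pairing both sides against an arbitrary $y\in\g$ and using $\langle B^*\xi,z\rangle=\langle\xi,Bz\rangle$ together with $\langle\ad_x^*\xi,z\rangle=-\langle\xi,[x,z]_\g\rangle$ to transport every occurrence of $T=-\lambda\id-B^*$ off $\g^*$ and onto $\g$, turning the identity into a statement about brackets and $B$ applied to $x$ and $y$ inside a single pairing $\langle\xi,\cdot\rangle$.

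Concretely, writing $\eta:=\ad_{Bx}^*\xi+\ad_x^*(T\xi)+\lambda\ad_x^*\xi$, I expect the weight contributions to cancel and the inner term to collapse to $\langle\eta,z\rangle=\langle\xi,\,B[x,z]_\g-[Bx,z]_\g\rangle$ for every $z\in\g$. Applying $T$ once more and pairing with $y$ then produces on the right-hand side the expression $\lambda\langle\xi,[Bx,y]_\g\rangle-\lambda\langle\xi,B[x,y]_\g\rangle+\langle\xi,[Bx,By]_\g\rangle-\langle\xi,B[x,By]_\g\rangle$, while the left-hand side reduces directly to $\lambda\langle\xi,[Bx,y]_\g\rangle+\langle\xi,B[Bx,y]_\g\rangle$. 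The two are reconciled by substituting the Rota-Baxter identity $[Bx,By]_\g=B([Bx,y]_\g+[x,By]_\g+\lambda[x,y]_\g)$ into the right-hand side, after which the $B[x,By]_\g$ terms cancel, the $\lambda B[x,y]_\g$ terms cancel, and the remainder matches the left-hand side exactly. Conceptually one may read this part as the statement that the dual of any representation $(W,T,\rho)$ of $(\g,B)$ is $(W^*,-\lambda\id-T^*,\rho^*)$, applied to the adjoint representation $(\g,B,\ad)$; note that $-\lambda\id-B^*=\widetilde{B}^*$ is the transpose of the complementary Rota-Baxter operator $\widetilde B$ of \eqref{eq:Bt}, which explains the shape of the operator $T$.

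For the second part, I would check that $S^\sharp$ intertwines both layers of structure. Intertwining the Lie algebra actions, $S^\sharp\circ\ad_x=\ad_x^*\circ S^\sharp$, is nothing but the invariance \eqref{RBmanin1} of $S$: pairing $S^\sharp([x,y]_\g)$ and $\ad_x^*(S^\sharp y)$ against an arbitrary $z$ turns the desired equality into $S([x,y]_\g,z)=-S(y,[x,z]_\g)$. Intertwining the operators, $S^\sharp\circ B=(-\lambda\id-B^*)\circ S^\sharp$, is precisely the compatibility condition \eqref{RBmanin}: pairing against $y$ turns it into $S(Bx,y)=-S(x,By)-\lambda S(x,y)$. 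Finally, since $S$ is nondegenerate the map $S^\sharp$ is a linear isomorphism, so it is an isomorphism of representations.

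The genuinely delicate step is the first one. The representation axiom for the coadjoint operator involves three terms and the weight $\lambda$, and the verification only closes because the operator is the specific combination $-\lambda\id-B^*$ rather than $-B^*$: the various $\lambda\id$ contributions must cancel exactly against one another, and the Rota-Baxter identity must absorb the surviving bracket terms. Part two, by contrast, is immediate once the two intertwining relations are recognized as \eqref{RBmanin1} and \eqref{RBmanin} respectively, with bijectivity coming for free from nondegeneracy of $S$.
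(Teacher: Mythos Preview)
Your proposal is correct and follows essentially the same route as the paper: both parts are handled by pairing against an arbitrary $y\in\g$, transporting $T=-\lambda\id-B^*$ onto the $\g$ side, and invoking the Rota-Baxter identity for Part~1, then recognizing the two intertwining relations as \eqref{RBmanin1} and \eqref{RBmanin} for Part~2. Your exposition is somewhat more detailed (you isolate $\eta$ first and compute $T\eta$ separately, and you explicitly note that nondegeneracy of $S$ yields bijectivity of $S^\sharp$, a point the paper leaves implicit), but the argument is the same.
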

\begin{proof}
  For all $\xi\in\g^*$ and $x,y\in\g$, since $B$ is a Rota-Baxter operator of weight $\lambda$ on $\g$, we have
  \begin{eqnarray*}
    &&\langle \ad^*_{Bx}(-\lambda\id-B^*)\xi-(-\lambda\id-B^*)\Big(\ad^*_{Bx}\xi+\ad^*_x(-\lambda\id-B^*)\xi+\lambda\ad^*_x\xi   \Big),y\rangle\\
    &=&\langle\xi, \lambda[Bx,y]_\g+B[Bx,y]_\g-\lambda[Bx,y]_\g-[Bx,By]_\g
    +\lambda B[x,y]_\g+B[x,By]_\g\rangle
    \\ &=&0,
      \end{eqnarray*}
      which implies that $(\g^*,-\lambda\id-B^*,\ad^*)$ is a representation.

      Let  $(\g,B,S)$ be a quadratic Rota-Baxter Lie algebra  of weight $\lambda$. By \eqref{RBmanin1}, we have
      $$
      S^\sharp\circ \ad_x=\ad_x^*\circ S^\sharp,
      $$
      By \eqref{RBmanin}, we have
      $$
      S^\sharp\circ B= (-\lambda\id-B^*)\circ S^\sharp.
      $$
      Therefore, $S^\sharp:\g\to\g^*$
 is a homomorphism from the adjoint representation $(\g,B,\ad)$ to the coadjoint representation $(\g^*,-\lambda\id-B^*,\ad^*)$.
\end{proof}

\begin{rmk}
  We emphasize that in general $(\g^*, B^*,\ad^*)$ is not a representation of $(\g,B)$. 
\end{rmk}

Now we introduce the notion of matched pairs of Rota-Baxter Lie algebras  of weight $\lambda$.

\begin{defi}\label{RBMP}
A {\bf matched pair of Rota-Baxter Lie algebras} of weight $\lambda$ consists of a pair of Rota-Baxter Lie algebras  $((\g,B),(\h,C))$ of weight $\lambda$, a  representation  $\rho: \g\to \gl(\h)$ of the Rota-Baxter Lie algebra $(\g,B)$ on $(\h,C)$ and a   representation $\mu: \h\to \gl(\g)$ of the Rota-Baxter Lie algebra $(\h,C)$ on $(\g,B)$ such that
  $(\g,\h; \rho,\mu)$ is a matched pair of Lie algebras.
\end{defi}

We will denote a matched pair of Rota-Baxter Lie algebras of weight $\lambda$ by $((\g,B),(\h,C);\rho,\mu)$, or simply by $((\g,B),(\h,C))$.

It is straightforward to obtain the following alternative characterization of matched pairs of Rota-Baxter Lie algebras.

\begin{pro}
 Let $(\g,B)$ and $(\h,C)$ be  Rota-Baxter Lie algebras  of weight $\lambda$, $\rho: \g\to \gl(\h)$ a  representation of the   Lie algebra $\g$ on $\h$ and $\mu: \h\to \gl(\g)$ a   representation of the  Lie algebra $\h$ on $\g$. Then $((\g,B),(\h,C);\rho,\mu)$ is a matched pair of Rota-Baxter Lie algebras if and only if $(\g,\h;\rho,\mu)$ is a matched pair of  Lie algebras and the following equalities hold:
 \begin{eqnarray}
\label{RBmp1}C(\rho(Bx)\xi+\rho(x)(C \xi)+\lambda \rho(x)\xi)&=&\rho(Bx) (C \xi);\\
\label{RBmp2} B(\mu(C\xi) x+\mu(\xi)(B x)+\lambda \mu(\xi)x)&=&\mu(C \xi) (Bx),
\end{eqnarray}
for all $x\in \g$ and $\xi\in \h$.
\end{pro}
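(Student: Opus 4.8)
The plan is to recognize that this statement is an unwinding of Definition \ref{RBMP} combined with the definition of a representation of a Rota-Baxter Lie algebra, so that no real computation is needed. By Definition \ref{RBMP}, a matched pair of Rota-Baxter Lie algebras $((\g,B),(\h,C);\rho,\mu)$ amounts to three requirements: that $\rho$ be a representation of the Rota-Baxter Lie algebra $(\g,B)$ on $(\h,C)$, that $\mu$ be a representation of the Rota-Baxter Lie algebra $(\h,C)$ on $(\g,B)$, and that $(\g,\h;\rho,\mu)$ be a matched pair of the underlying Lie algebras. Since the hypotheses already grant that $\rho$ and $\mu$ are representations of the respective Lie algebras, the proof reduces to identifying which extra scalar identity each of the first two requirements contributes.

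First I would spell out the condition that $\rho$ be a representation of $(\g,B)$ on $(\h,C)$. Applying the defining identity for a representation of a Rota-Baxter Lie algebra with representation space $W=\h$ and operator $T=C$, this condition reads
\[
\rho(Bx)(C\xi)=C\big(\rho(Bx)\xi+\rho(x)(C\xi)+\lambda\rho(x)\xi\big),\qquad \forall x\in\g,\ \xi\in\h,
\]
which is exactly \eqref{RBmp1}. Symmetrically, the condition that $\mu$ be a representation of $(\h,C)$ on $(\g,B)$ is obtained from the same defining identity after interchanging the roles of the two Rota-Baxter Lie algebras, i.e. with representation space $W=\g$ and operator $T=B$, while the acting Rota-Baxter operator is now $C$ on $\h$; it becomes
\[
\mu(C\xi)(Bx)=B\big(\mu(C\xi)x+\mu(\xi)(Bx)+\lambda\mu(\xi)x\big),\qquad \forall x\in\g,\ \xi\in\h,
\]
which is exactly \eqref{RBmp2}.

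Combining these two identifications yields both directions of the claimed equivalence simultaneously. For the forward direction, Definition \ref{RBMP} forces $\rho$ and $\mu$ to be Rota-Baxter representations, and the two displayed identities above are precisely \eqref{RBmp1} and \eqref{RBmp2}, with the matched pair of Lie algebras condition supplied directly. For the converse, \eqref{RBmp1} promotes the given Lie algebra representation $\rho$ to a representation of $(\g,B)$ on $(\h,C)$, \eqref{RBmp2} does the same for $\mu$, and together with the matched pair of Lie algebras hypothesis these reassemble the three clauses of Definition \ref{RBMP}. There is no genuine obstacle in this argument; the only point demanding care is the bookkeeping in the symmetric substitution for $\mu$, where one must correctly track that the acting Rota-Baxter operator is $C$ while the representation space $\g$ carries $B$, so that the roles of $B$ and $C$ in the defining identity are interchanged rather than confused.
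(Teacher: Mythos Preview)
Your proposal is correct and matches the paper's approach: the paper itself does not prove this proposition, stating only that ``it is straightforward to obtain the following alternative characterization,'' and your unwinding of Definition~\ref{RBMP} together with the defining identity of a Rota-Baxter representation is exactly that straightforward verification.
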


On the double of a matched pair of Lie algebras, there is a Lie algebra structure. There is also a  Rota-Baxter Lie algebra structure on the double of a matched pair of Rota-Baxter Lie algebras as expected.
\begin{pro}\label{RBon}
Let   $(\g,B)$ and $(\h, C)$ be Rota-Baxter  Lie algebras  of weight $\lambda$, $\rho:\g\to \gl(\h)$ and $\mu:\h\to \gl(\g)$ are linear maps. Then $((\g,B),(\h,C);\rho,\mu)$ is a matched pair of Rota-Baxter Lie algebras  of weight $\lambda$   if and only if $(\g,\h;\rho,\mu)$ is a matched pair of  Lie algebras and
\[B\oplus C:\g\oplus \h\to \g\oplus \h,\qquad x+\xi\mapsto B x+C \xi,\]
is a Rota-Baxter operator of weight $\lambda$ on the Lie algebra $\g\bowtie \h$.

\end{pro}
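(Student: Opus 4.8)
The plan is to verify the Rota-Baxter identity of weight $\lambda$ for $B\oplus C$ directly on $\g\bowtie\h$, and then to read off the resulting conditions, matching them against the characterization recorded just above this proposition (that $((\g,B),(\h,C);\rho,\mu)$ is a matched pair of Rota-Baxter Lie algebras if and only if $(\g,\h;\rho,\mu)$ is a matched pair of Lie algebras and \eqref{RBmp1}, \eqref{RBmp2} hold). Throughout I assume $(\g,\h;\rho,\mu)$ is a matched pair of Lie algebras, so that $\g\bowtie\h$ is a genuine Lie algebra with the bracket of Proposition \ref{eqdefi}; the only content to extract is the Rota-Baxter condition for $B\oplus C$.

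The key observation is that the Rota-Baxter defect
\[
\mathrm{RB}(X,Y):=[(B\oplus C)X,(B\oplus C)Y]_\bowtie-(B\oplus C)\big([(B\oplus C)X,Y]_\bowtie+[X,(B\oplus C)Y]_\bowtie+\lambda[X,Y]_\bowtie\big)
\]
is \emph{bilinear} in $(X,Y)\in(\g\oplus\h)\times(\g\oplus\h)$, since $B\oplus C$ is linear and the bracket is bilinear; moreover it is skew-symmetric because $[\cdot,\cdot]_\bowtie$ is. Hence $B\oplus C$ is a Rota-Baxter operator of weight $\lambda$ if and only if $\mathrm{RB}$ vanishes on pure components, that is, on the three cases $(X,Y)=(x,y)$ with $x,y\in\g$, on $(X,Y)=(\xi,\eta)$ with $\xi,\eta\in\h$, and on the mixed case $(X,Y)=(x,\xi)$ with $x\in\g,\xi\in\h$ (the remaining mixed case then follows by skew-symmetry).

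For the first two cases, $\g$ and $\h$ are Lie subalgebras of $\g\bowtie\h$ on which $B\oplus C$ restricts to $B$ and to $C$; thus $\mathrm{RB}(x,y)=0$ is exactly the Rota-Baxter identity for $B$ on $\g$, and $\mathrm{RB}(\xi,\eta)=0$ is the Rota-Baxter identity for $C$ on $\h$, both of which hold by hypothesis. The substance is the mixed case. Using $[x,\xi]_\bowtie=\rho(x)\xi-\mu(\xi)x$, I would expand both sides of $\mathrm{RB}(x,\xi)=0$ and separate the $\g$- and $\h$-valued parts. The $\g$-component reduces precisely to \eqref{RBmp2} and the $\h$-component reduces precisely to \eqref{RBmp1}. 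Combining the three cases, $B\oplus C$ is a Rota-Baxter operator of weight $\lambda$ on $\g\bowtie\h$ if and only if \eqref{RBmp1} and \eqref{RBmp2} hold (the subalgebra conditions being automatic), which by the characterization above is equivalent to $((\g,B),(\h,C);\rho,\mu)$ being a matched pair of Rota-Baxter Lie algebras.

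There is no genuine obstacle beyond bookkeeping. The one point requiring care is the componentwise separation in the mixed case: inside $[Bx,\xi]_\bowtie$ one must keep the $\g$-valued term $-\mu(\xi)(Bx)$ and the $\h$-valued term $\rho(Bx)\xi$ apart, then apply $B$ to the accumulated $\g$-component and $C$ to the accumulated $\h$-component before comparing with the expansion of $[Bx,C\xi]_\bowtie$. Maintaining a strict split between $\g$- and $\h$-parts throughout makes the identification with \eqref{RBmp2} and \eqref{RBmp1} immediate, and the bilinearity argument then upgrades these pointwise identities to the full Rota-Baxter condition.
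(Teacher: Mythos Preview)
Your proposal is correct and follows essentially the same route as the paper: the paper also reduces the Rota-Baxter identity for $B\oplus C$ on $\g\bowtie\h$ to the mixed case $[Bx,C\xi]_{\bowtie}=(B\oplus C)\big([Bx,\xi]_{\bowtie}+[x,C\xi]_{\bowtie}+\lambda[x,\xi]_{\bowtie}\big)$ and identifies its $\g$- and $\h$-components with \eqref{RBmp2} and \eqref{RBmp1}. Your version is simply more explicit about why the bilinear, skew-symmetric defect need only be checked on pure components, a step the paper leaves implicit.
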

\begin{proof}
Let $((\g,B),(\h,C);\rho,\mu)$ be a matched pair of Rota-Baxter Lie algebras  of weight $\lambda$. By definition, $(\g,\h;\rho,\mu)$ is a matched pair of  Lie algebras. By \eqref{RBmp1} and \eqref{RBmp2}, we have
\[[Bx,C \xi]_{\bowtie}=(B\oplus C)\big([B x,\xi]_{\bowtie}+[x,C \xi]_{\bowtie}+\lambda[x,\xi]_{\bowtie}\big).\]
Therefore $B\oplus C$ is a Rota-Baxter operator of weight $\lambda$ on the Lie algebra $\g\bowtie \h$.

The converse part can be proved similarly.
\end{proof}

Let $((\g,B),(\h,C);\rho,\mu)$ be a matched pair of Rota-Baxter Lie algebras  of weight $\lambda$. Then there are three descendent Lie algebras $\g_{B}, \h_{C}$ and $(\g\bowtie \h)_{B\oplus C}$ coming from the three Rota-Baxter operators $B:\g\to \g, C:\h\to \h$ and $B\oplus C:\g\oplus \h\to \g\oplus \h$ of weight $\lambda$, respectively. In the following theorem, their relation is clarified and a descendent matched pair of Lie algebras is obtained.

\begin{thm}\label{demp}
Let $((\g,B),(\h,C);\rho,\mu)$ be a matched pair of Rota-Baxter Lie algebras  of weight $\lambda$. Then    $(\g_{B},\h_{C};\rho_{(B,C)},\mu_{(B,C)})$ is a   matched pair of Lie algebras, where   $\rho_{(B,C)} $ and $\mu_{(B,C)}$ are  given by
\begin{eqnarray}
\label{eq:rep1}\rho_{(B,C)}(x) \xi&=&\rho(Bx) \xi+\rho(x)(C \xi)+\lambda\rho(x) \xi,\\
\label{eq:rep2}\mu_{(B,C)}(\xi) x&=&\mu(C\xi)x+\mu(\xi)(B x)+\lambda \mu(\xi) x.
\end{eqnarray}
Moreover, we have
\[\g_{B}\bowtie \h_{C}=(\g\bowtie \h)_{B\oplus C}\]
as Lie algebras.
\end{thm}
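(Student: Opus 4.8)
The plan is to avoid checking the matched-pair compatibility conditions \eqref{eq:mp1} and \eqref{eq:mp2} for $(\g_B,\h_C;\rho_{(B,C)},\mu_{(B,C)})$ head-on, and instead to exploit the equivalence recorded in Proposition \ref{eqdefi} between matched pairs of Lie algebras and Lie algebra structures on the direct sum for which the two factors are subalgebras. Concretely, I would begin from Proposition \ref{RBon}, which asserts that $B\oplus C$ is a Rota-Baxter operator of weight $\lambda$ on $\g\bowtie \h$; consequently the descendent Lie algebra $(\g\bowtie \h)_{B\oplus C}$ is a genuine Lie algebra on the underlying space $\g\oplus \h$. The whole theorem then reduces to identifying this descendent bracket with the matched-pair bracket of $\g_B\bowtie \h_C$, so that no independent Jacobi-type verification is required.

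First I would compute the descendent bracket $[\cdot,\cdot]_{B\oplus C}$ on each of the three types of arguments. For $x,y\in \g$, since $\g$ is a subalgebra of $\g\bowtie \h$ and $B$ preserves $\g$, the definition of the descendent bracket gives $[x,y]_{B\oplus C}=[Bx,y]_\g+[x,By]_\g+\lambda[x,y]_\g=[x,y]_B$, so $\g$ sits inside $(\g\bowtie \h)_{B\oplus C}$ precisely as the descendent Lie algebra $\g_B$; the analogous computation with $C$ shows $\h$ sits inside as $\h_C$. In particular both are Lie subalgebras of $(\g\bowtie \h)_{B\oplus C}$.

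The key computation is the cross term, and it is here that a little care with the $\g$- and $\h$-components (and with the sign attached to $\mu$) is needed. For $x\in \g$ and $\xi\in \h$, using $[x,\xi]_{\bowtie}=\rho(x)\xi-\mu(\xi)x$ from Proposition \ref{eqdefi} together with $(B\oplus C)(x)=Bx$ and $(B\oplus C)(\xi)=C\xi$, I would expand
\[
[x,\xi]_{B\oplus C}=[Bx,\xi]_{\bowtie}+[x,C\xi]_{\bowtie}+\lambda[x,\xi]_{\bowtie}
\]
and collect the $\h$-valued and $\g$-valued parts. This yields
\[
[x,\xi]_{B\oplus C}=\big(\rho(Bx)\xi+\rho(x)(C\xi)+\lambda\rho(x)\xi\big)-\big(\mu(C\xi)x+\mu(\xi)(Bx)+\lambda\mu(\xi)x\big)=\rho_{(B,C)}(x)\xi-\mu_{(B,C)}(\xi)x,
\]
which is exactly the prescribed form with $\rho_{(B,C)}$ and $\mu_{(B,C)}$ as in \eqref{eq:rep1} and \eqref{eq:rep2}.

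Finally I would invoke the converse direction of Proposition \ref{eqdefi}: since $(\g\oplus \h,[\cdot,\cdot]_{B\oplus C})$ is a Lie algebra in which $\g_B$ and $\h_C$ are Lie subalgebras, it follows immediately that $(\g_B,\h_C;\rho_{(B,C)},\mu_{(B,C)})$ is a matched pair of Lie algebras, with the representations read off from the cross bracket exactly as computed above. The same identification of brackets gives $\g_B\bowtie \h_C=(\g\bowtie \h)_{B\oplus C}$ as Lie algebras, since both carry one and the same bracket on $\g\oplus \h$. I do not anticipate a genuine obstacle; the only delicate point is the bookkeeping in the cross-term expansion, and the structural input (that the descendent bracket is a Lie bracket at all) is already furnished by Proposition \ref{RBon}.
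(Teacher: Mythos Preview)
Your proposal is correct and follows essentially the same approach as the paper: both start from Proposition \ref{RBon} to get the descendent Lie algebra $(\g\bowtie\h)_{B\oplus C}$, verify that $\g_B$ and $\h_C$ sit inside as Lie subalgebras, compute the cross bracket to identify $\rho_{(B,C)}$ and $\mu_{(B,C)}$, and then invoke Proposition \ref{eqdefi}. Your write-up is slightly more explicit about why $\g_B$ and $\h_C$ are subalgebras, but the argument is otherwise the same.
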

The matched pair $(\g_{B},\h_{C};\rho_{(B,C)},\mu_{(B,C)})$ is called the {\bf descendent matched pair}.
\begin{proof}
By Proposition \ref{RBon}, there is a descendent Lie algebra structure on $\g\oplus \h$, denoted by $(\g\bowtie \h)_{B\oplus C}$, which contains $\g_B$ and $\h_C$ as Lie subalgebras. Let us examine this Lie bracket on the crossing term. For $x\in \g$ and $\xi\in \h$, we have
\begin{eqnarray*}
[x,\xi]_{(\g\bowtie \h)_{B\oplus C}}&=&[Bx,\xi]_{\bowtie}+[x,C\xi]_{\bowtie}+\lambda[x,\xi]_{\bowtie}
\\ &=&\rho(Bx)\xi+\rho(x)(C \xi)+\lambda \rho(x) \xi-\big(\mu(C\xi) x+\mu(\xi)(B x)+\lambda \mu(\xi) x\big)\\
&=&\rho_{(B,C)} (x)\xi-\mu_{(B,C)}(\xi) x.
\end{eqnarray*}
Then by Proposition \ref{eqdefi}, $(\g_{B},\h_{C};\rho_{(B,C)},\mu_{(B,C)})$ forms a matched pair of Lie algebras. Moreover, the induced Lie algebra on its double  $\g_B\bowtie \h_C$ coincides with  $(\g\bowtie \h)_{B\oplus C}$.
\end{proof}

\begin{cor}
Let $((\g,B),(\h,C);\rho,\mu)$ be a matched pair of Rota-Baxter Lie algebras  of weight $\lambda$. Then    $((\g_{B},B),(\h_{C},C);\rho_{(B,C)},\mu_{(B,C)})$ is a   matched pair of Rota-Baxter  Lie algebras of weight $\lambda$.
\end{cor}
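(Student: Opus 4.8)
The plan is to reduce the statement to the combination of Theorem \ref{demp} and Proposition \ref{RBon}, together with one auxiliary observation about descendent Lie algebras. The observation is the following stability property of Rota-Baxter operators: if $D$ is a Rota-Baxter operator of weight $\lambda$ on a Lie algebra $\frkk$, then $D$ is again a Rota-Baxter operator of weight $\lambda$ on the descendent Lie algebra $\frkk_D$. I would prove this by a direct computation, expanding $[Dx,Dy]_D$ and $D([Dx,y]_D+[x,Dy]_D+\lambda[x,y]_D)$ in terms of the original bracket $[\cdot,\cdot]$ and repeatedly applying the Rota-Baxter identity on $\frkk$, most efficiently in the form $D[u,v]_D=[Du,Dv]$ (the homomorphism property of $D:\frkk_D\to\frkk$). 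After cancellation both sides reduce to $[D^2x,Dy]_\g+[Dx,D^2y]_\g+\lambda[Dx,Dy]_\g$.

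Granting this lemma, the argument proceeds in three formal steps. First, applying the lemma to $B$ on $\g$ and to $C$ on $\h$ shows that $(\g_B,B)$ and $(\h_C,C)$ are Rota-Baxter Lie algebras of weight $\lambda$, which is exactly what is required to regard the descendents as the underlying Rota-Baxter Lie algebras of the putative matched pair. Second, Theorem \ref{demp} already supplies that $(\g_B,\h_C;\rho_{(B,C)},\mu_{(B,C)})$ is a matched pair of Lie algebras and, crucially, that the double satisfies $\g_B\bowtie\h_C=(\g\bowtie\h)_{B\oplus C}$ as Lie algebras. Third, Proposition \ref{RBon} applied to the original matched pair gives that $B\oplus C$ is a Rota-Baxter operator of weight $\lambda$ on $\g\bowtie\h$; applying the lemma once more, now to $D=B\oplus C$ on $\frkk=\g\bowtie\h$, upgrades this to the assertion that $B\oplus C$ is a Rota-Baxter operator of weight $\lambda$ on $(\g\bowtie\h)_{B\oplus C}=\g_B\bowtie\h_C$.

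To conclude, I would invoke the converse direction of Proposition \ref{RBon}: since $(\g_B,B)$ and $(\h_C,C)$ are Rota-Baxter Lie algebras, $(\g_B,\h_C;\rho_{(B,C)},\mu_{(B,C)})$ is a matched pair of Lie algebras, and $B\oplus C$ is a Rota-Baxter operator of weight $\lambda$ on their double $\g_B\bowtie\h_C$, the proposition yields precisely that $((\g_B,B),(\h_C,C);\rho_{(B,C)},\mu_{(B,C)})$ is a matched pair of Rota-Baxter Lie algebras of weight $\lambda$. Note that the representation compatibility conditions for $\rho_{(B,C)}$ and $\mu_{(B,C)}$ need not be checked by hand, since they are already encoded in the single condition that $B\oplus C$ be Rota-Baxter on the double. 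The only genuine content is the auxiliary lemma, and I expect the main obstacle to lie in its bookkeeping: tracking the $\lambda^2$-terms and the doubled $[Dx,Dy]$-terms so that the Rota-Baxter identity can be applied in the two mixed positions, after which the whole expression collapses; every remaining step is a formal application of the quoted results.
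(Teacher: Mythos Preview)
Your proposal is correct and follows essentially the same route as the paper's proof: the paper invokes the same ``well-known fact'' that a Rota-Baxter operator of weight $\lambda$ remains Rota-Baxter on its own descendent Lie algebra, applies it to $B\oplus C$ on $\g\bowtie\h$, uses the identification $\g_B\bowtie\h_C=(\g\bowtie\h)_{B\oplus C}$ from Theorem \ref{demp}, and concludes via Proposition \ref{RBon}. Your write-up is slightly more explicit in recording that the same stability lemma also ensures $(\g_B,B)$ and $(\h_C,C)$ are themselves Rota-Baxter Lie algebras, which the paper leaves implicit, but the argument is otherwise identical.
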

\begin{proof}
   Using the well-known fact that a Rota-Baxter operator of weight $\lambda$ on a Lie algebra is also a Rota-Baxter operator on the descendent Lie algebra, it follows that $B\oplus C$ is a  Rota-Baxter operator of weight $\lambda$ on $(\g\bowtie \h)_{B\oplus C}$. By Theorem \ref{demp}, $\g_{B}\bowtie \h_{C}=(\g\bowtie \h)_{B\oplus C}$. Therefore, $B\oplus C$ is a Rota-Baxter operator of weight $\lambda$ on $\g_{B}\bowtie \h_{C}$.    By Proposition \ref{RBon},  $((\g_{B},B),(\h_{C},C);\rho_{(B,C)},\mu_{(B,C)})$ is a   matched pair of Rota-Baxter  Lie algebras of weight $\lambda$.
  \end{proof}
\begin{ex}
Let $(\g,[\cdot,\cdot]_\g,B)$ and $(\h,[\cdot,\cdot]_\h,C)$ be two  Rota-Baxter Lie algebras of weight $\lambda$. Then $((\g,B),(\h,C);\rho=0,\mu=0)$ becomes a matched pair of Rota-Baxter Lie algebras.
\end{ex}

\begin{pro}
Let $(\g,B)$ be a Rota-Baxter Lie algebra of weight $\lambda$.  Then $((\g_B,B),(\g,B);\rho,\mu)$ is   a matched pair  of Rota-Baxter Lie algebras  of weight $\lambda$, where $\rho$ and $\mu$ are given by \eqref{eq:mpr}.
\end{pro}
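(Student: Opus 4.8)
The plan is to reduce the assertion to a single Rota-Baxter condition on the double and then import it via the isomorphism $\phi$ already available from Proposition~\ref{doublelie}. First I would record the two pieces that are essentially free. Since $B$ is a Rota-Baxter operator of weight $\lambda$ on $\g$, it is also one on the descendent Lie algebra $\g_B$ (the well-known fact used in the preceding corollary), so both $(\g_B,B)$ and $(\g,B)$ are Rota-Baxter Lie algebras of weight $\lambda$. Moreover, by Theorem~\ref{thm:mpB} the data $\rho,\mu$ of \eqref{eq:mpr} make $(\g_B,\g;\rho,\mu)$ a matched pair of Lie algebras whose double $\g_B\bowtie\g$ is precisely $(\g\oplus\g,[\cdot,\cdot]_D)$. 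In view of Proposition~\ref{RBon}, it therefore suffices to prove that
$$B\oplus B:\g\oplus\g\to\g\oplus\g,\qquad (\xi,x)\mapsto(B\xi,Bx),$$
is a Rota-Baxter operator of weight $\lambda$ on $(\g\oplus\g,[\cdot,\cdot]_D)$.

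For this I would use the Lie algebra isomorphism $\phi$ of \eqref{phi} from $(\g\oplus\g,[\cdot,\cdot]_D)$ onto the direct sum Lie algebra $\g\oplus\g$. On the latter, $B\oplus B$ is visibly a Rota-Baxter operator of weight $\lambda$, being the componentwise direct sum of the Rota-Baxter operator $B$ on each factor. The key observation is that $\phi$ intertwines the two copies of $B\oplus B$: a direct check gives
$$\phi\big((B\oplus B)(\xi,x)\big)=\big(B^2\xi+\lambda B\xi+Bx,\;B^2\xi+Bx\big)=(B\oplus B)\big(\phi(\xi,x)\big),$$
so $\phi\circ(B\oplus B)=(B\oplus B)\circ\phi$, i.e. $B\oplus B=\phi^{-1}\circ(B\oplus B)\circ\phi$. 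Since conjugating a Rota-Baxter operator of weight $\lambda$ by a Lie algebra isomorphism again yields a Rota-Baxter operator of the same weight, the right-hand side is a Rota-Baxter operator on $(\g\oplus\g,[\cdot,\cdot]_D)$; as it equals $B\oplus B$, this is exactly what we want. Proposition~\ref{RBon} then delivers that $((\g_B,B),(\g,B);\rho,\mu)$ is a matched pair of Rota-Baxter Lie algebras.

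I expect the only delicate point to be the intertwining identity $\phi\circ(B\oplus B)=(B\oplus B)\circ\phi$; everything else is bookkeeping with results already established. It is worth noting the alternative route of checking conditions \eqref{RBmp1} and \eqref{RBmp2} directly: for $\mu(x)(\xi)=[x,\xi]_\g$ the condition \eqref{RBmp2} is immediately the Rota-Baxter identity for $B$, whereas for $\rho(\xi)(x)=B[x,\xi]_\g-[x,B\xi]_\g$ the condition \eqref{RBmp1} unwinds into a longer expression that must be collapsed using the Rota-Baxter relation; the $\phi$-conjugation argument avoids this computation entirely, which is why I would prefer it.
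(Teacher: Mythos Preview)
Your argument is correct. The intertwining $\phi\circ(B\oplus B)=(B\oplus B)\circ\phi$ holds exactly as you check, and together with the obvious fact that $B\oplus B$ is a Rota-Baxter operator of weight $\lambda$ on the direct sum Lie algebra, Proposition~\ref{RBon} then finishes the proof.

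Your route is genuinely different from the paper's. The paper verifies \eqref{RBmp1} and \eqref{RBmp2} by hand: the condition involving $\mu$ is immediately the Rota-Baxter identity for $B$ (as you note in your last paragraph), while the condition involving $\rho$ is expanded and then collapsed using the Rota-Baxter relation twice. Your approach bypasses that computation entirely by transporting structure through $\phi$, which is cleaner and more conceptual. The trade-off is that your argument depends on the invertibility of $\phi$ and hence on $\lambda\neq0$, whereas the direct computation with \eqref{RBmp1}--\eqref{RBmp2} makes no use of $\phi^{-1}$ and in principle survives at $\lambda=0$. That said, the paper itself invokes Proposition~\ref{doublelie}/Theorem~\ref{thm:mpB} for the underlying matched pair of Lie algebras, and those also carry the hypothesis $\lambda\neq0$, so your scope matches the paper's.
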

\begin{proof}
By Proposition \ref{doublelie}, $(\g_B,\g;\rho,\mu)$ is   a matched pair  of  Lie algebras. As $B$ is a Rota-Baxter operator of weight $\lambda$ on both $\g$ and $\g_B$, it suffices to check the Conditions \eqref{RBmp1} and \eqref{RBmp2}. In fact, Condition \eqref{RBmp1} follows directly from the fact that $B$ is a Rota-Baxter operator of weight $\lambda$ on the Lie algebra  $\g$. In order to show Condition \eqref{RBmp2}, we first note that
\begin{eqnarray*}
&&B(\rho(B\xi) x+\rho(\xi)(Bx)+\lambda \rho(\xi)x)\\ &=&B(B[x,B\xi]_\g-[x,B^2\xi]_\g+B[Bx,\xi]_\g-[Bx,B\xi]_\g+\lambda B[x,\xi]_\g-\lambda[x,B\xi]_\g)\\ &=&
B(-[x,B^2\xi]_\g-\lambda[x,B\xi]_\g).
\end{eqnarray*}
On the other hand, as $B$ is a Rota-Baxter operator on $\g$, we have
\begin{eqnarray*}
\rho(B\xi)(Bx)&=&B[Bx,B\xi]_\g-[Bx,B^2\xi]_\g\\ &=&B([Bx,B\xi]_\g-[Bx,B\xi]_\g-[x,B^2\xi]_\g-\lambda[x,B\xi]_\g)\\ &=&
B(-[x,B^2\xi]_\g-\lambda[x,B\xi]_\g).
\end{eqnarray*}
So Condition \eqref{RBmp2} is obtained. Therefore, $((\g_B,B),(\g,B),\rho,\mu)$ is   a matched pair  of Rota-Baxter Lie algebras  of weight $\lambda$.
\end{proof}

\section{Rota-Baxter Lie bialgebras}\label{sec:rbb}

In this section, we introduce the notion of Rota-Baxter Lie bialgebras of weight $\lambda$, and show that Rota-Baxter Lie bialgebras, certain matched pairs of Rota-Baxter Lie  algebras and Manin triples of Rota-Baxter Lie algebras are equivalent.

\subsection{Rota-Baxter Lie bialgebras and matched pairs of Rota-Baxter Lie  algebras}


We introduce the following definition of Rota-Baxter operators on a Lie bialgebra and show that Rota-Baxter   Lie bialgebras and certain  matched pairs of Rota-Baxter Lie  algebras are equivalent.  In particular, we show that the Drinfeld double of a Rota-Baxter   Lie bialgebra is still a Rota-Baxter   Lie bialgebra, and factorizable Lie bialgebras also give rise to Rota-Baxter Lie bialgebras.
\begin{defi}
A {\bf Rota-Baxter operator of weight $\lambda$} on a Lie bialgebra $(\g,\g^*)$ is a linear map $B:\g\to \g$ such that
\begin{itemize}
\item[\rm (i)] $B$ is a Rota-Baxter operator of weight $\lambda$ on $\g$;
\item[\rm (ii)]  $\widetilde {B}^*:=-\lambda\id-B^*$ is a Rota-Baxter operator of weight $\lambda$ on $\g^*$. 
\end{itemize}
A Lie bialgebra with a Rota-Baxter operator of weight $\lambda$ is called a {\bf Rota-Baxter Lie bialgebra of weight $\lambda$}.
\end{defi}
We denote a Rota-Baxter Lie bialgebra of weight $\lambda$ by $(\g,\g^*,B)$.

As a well-known fact of Rota-Baxter operators,  $\widetilde {B}^*=-\lambda\id-B^*:\g^*\to \g^*$ is a Rota-Baxter operator of weight $\lambda$ on $\g^*$  if and only if $B^*:\g^*\to \g^*$ is a Rota-Baxter operator of weight $\lambda$. The descendent Lie brackets of $\widetilde {B}^*$ and $B^*$ on $\g^*$ are related by
\begin{eqnarray*}
[\xi,\eta]_{\widetilde {B}^*}=-[B^*\xi,\eta]_{\g^*}-[\xi,B^*\eta]_{\g^*}-\lambda[\xi,\eta]_{\g^*}=-[\xi,\eta]_{B^*}.
\end{eqnarray*} The reason why we adopt $\widetilde {B}^*$ instead of $B^*$ will become clear from the following Theorem \ref{bimp}, Remark \ref{3.6} and Proposition \ref{abc}.


The following result is straightforward.
\begin{pro}
If $B$ is a Rota-Baxter operator of weight $\lambda$ on the Lie bialgebra $(\g,\g^*)$. Then $\widetilde {B}^*$ is a Rota-Baxter operator of weight $\lambda$ on the Lie bialgebra $(\g^*,\g)$.
\end{pro}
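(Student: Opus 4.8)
The plan is to unwind the definition of a Rota-Baxter operator on a Lie bialgebra, applied now to the pair $(\g^*,\g)$, and to check directly that $\widetilde{B}^*$ satisfies the two defining conditions, reducing each one to a hypothesis already imposed on $B$. Since $(\g,\g^*)$ is a Lie bialgebra, the observation recorded just after the definition of Lie bialgebras guarantees that $(\g^*,\g)$ is again a Lie bialgebra, so the notion of a Rota-Baxter operator of weight $\lambda$ on $(\g^*,\g)$ is meaningful and I only need to test $\widetilde{B}^*$ against its two clauses.

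First I would verify clause (i): a Rota-Baxter operator on $(\g^*,\g)$ must in particular be a Rota-Baxter operator of weight $\lambda$ on the first Lie algebra of the pair, namely $\g^*$. This is exactly hypothesis (ii) in the assumption that $B$ is a Rota-Baxter operator on $(\g,\g^*)$, i.e. that $\widetilde{B}^*=-\lambda\id-B^*$ is a Rota-Baxter operator of weight $\lambda$ on $\g^*$. So clause (i) holds with nothing further to prove.

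Next I would verify clause (ii): I must show that $-\lambda\id-(\widetilde{B}^*)^*$ is a Rota-Baxter operator of weight $\lambda$ on $(\g^*)^*=\g$. The only computation involved is the double-dual identity. Using $(\widetilde{B}^*)^*=(-\lambda\id-B^*)^*=-\lambda\id-B$ (here $B^{**}=B$ in the finite-dimensional setting), I obtain
\[
-\lambda\id-(\widetilde{B}^*)^*=-\lambda\id-(-\lambda\id-B)=B.
\]
Thus clause (ii) for $\widetilde{B}^*$ on $(\g^*,\g)$ asks precisely that $B$ be a Rota-Baxter operator of weight $\lambda$ on $\g$, which is hypothesis (i) for $B$ on $(\g,\g^*)$. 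Hence both clauses hold and $\widetilde{B}^*$ is a Rota-Baxter operator of weight $\lambda$ on the Lie bialgebra $(\g^*,\g)$.

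I do not expect any genuine obstacle here: the statement is essentially the self-duality of the definition under the simultaneous exchange $(\g,\g^*)\leftrightarrow(\g^*,\g)$ and $B\leftrightarrow\widetilde{B}^*$, and its entire content is the involutivity $\widetilde{(\widetilde{B}^*)}{}^*=B$, which is a consequence of $B^{**}=B$. The only point requiring minimal care is bookkeeping—tracking which space each operator acts on and handling the double dual correctly—so the argument is short and formal rather than computational.
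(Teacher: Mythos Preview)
Your proof is correct and matches the paper's approach: the paper simply states that the result is straightforward without giving any details, and your argument is precisely the unwinding of the definition that makes it so. The only content is indeed the involutivity $-\lambda\id-(\widetilde{B}^*)^*=B$, which you handled correctly.
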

A Lie bialgebra $(\g,\g^*)$ is naturally a matched pair of Lie algebras. In this case, the representation of $\g$ on $\g^*$ is given by the coadjoint representation  $\ad^*$ of the Lie algebra $\g$ on $\g^*$ and the representation of $\g^*$ on $\g$ is given by the coadjoint representation $\add^*$ of the Lie algebra $\g^*$ on $\g$.

\begin{thm}\label{bimp}
 With above notations,  $(\g,\g^*,B)$ is a Rota-Baxter Lie bialgebra of weight $\lambda$ if and only if $((\g,B),(\g^*,\widetilde {B}^*);\ad^*,\add^*)$ is a matched pair of Rota-Baxter Lie algebras of the same weight.
\end{thm}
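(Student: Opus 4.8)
The plan is to deduce both implications from the alternative characterization of matched pairs of Rota-Baxter Lie algebras (the Proposition stated just before Proposition \ref{RBon}), combined with the coadjoint representation Theorem \ref{thm:coadjoint}. The opening observation is that, since $(\g,\g^*)$ is a Lie bialgebra, the pair $(\g,\g^*;\ad^*,\add^*)$ is automatically a matched pair of Lie algebras; consequently the ``matched pair of Lie algebras'' clause in Definition \ref{RBMP} is free, and the entire content reduces to the Rota-Baxter compatibility conditions.

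For the forward direction I would assume $(\g,\g^*,B)$ is a Rota-Baxter Lie bialgebra. Condition (i) makes $(\g,B)$ a Rota-Baxter Lie algebra and condition (ii) makes $(\g^*,\widetilde{B}^*)$ a Rota-Baxter Lie algebra, so by the alternative characterization it remains only to verify the two compatibility identities \eqref{RBmp1} and \eqref{RBmp2} with $\rho=\ad^*$, $\mu=\add^*$, $C=\widetilde{B}^*$. The key point is that each of these is exactly a coadjoint representation condition. Identity \eqref{RBmp1} says precisely that $(\g^*,\widetilde{B}^*,\ad^*)$ is a representation of $(\g,B)$, which is the assertion of Theorem \ref{thm:coadjoint} applied to $(\g,B)$. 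For \eqref{RBmp2} I would instead apply Theorem \ref{thm:coadjoint} to the Rota-Baxter Lie algebra $(\g^*,\widetilde{B}^*)$: its coadjoint representation is carried by $(\g^*)^*=\g$, with operator $-\lambda\id-(\widetilde{B}^*)^*$ and action $\add^*$. The decisive computation $-\lambda\id-(\widetilde{B}^*)^*=-\lambda\id-(-\lambda\id-B)=B$ identifies this coadjoint representation with $(\g,B,\add^*)$, and the representation condition it satisfies is exactly \eqref{RBmp2}. Hence both identities hold and the alternative characterization yields the desired matched pair of Rota-Baxter Lie algebras.

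The converse is immediate: if $((\g,B),(\g^*,\widetilde{B}^*);\ad^*,\add^*)$ is a matched pair of Rota-Baxter Lie algebras, then by Definition \ref{RBMP} both $(\g,B)$ and $(\g^*,\widetilde{B}^*)$ are Rota-Baxter Lie algebras of weight $\lambda$, which is nothing but conditions (i) and (ii); thus $(\g,\g^*,B)$ is a Rota-Baxter Lie bialgebra.

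The step requiring the most care, and the conceptual heart of the statement, is the second compatibility identity \eqref{RBmp2}: one must recognize that the operator produced on the $\g$-side by the coadjoint construction applied to $(\g^*,\widetilde{B}^*)$ is $B$ itself, via the involutive relation $-\lambda\id-(\widetilde{B}^*)^*=B$ (here I use $B^{**}=B$, i.e. finite-dimensionality of $\g$). This is exactly why the definition of a Rota-Baxter Lie bialgebra is phrased through $\widetilde{B}^*=-\lambda\id-B^*$ rather than $B^*$: with $B^*$ the coadjoint representation would fail, as recorded in the Remark following Theorem \ref{thm:coadjoint}, and the two operators $B$ and $\widetilde{B}^*$ would no longer be exchanged by the involution. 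Beyond this bookkeeping with duals, I anticipate no genuine obstacle, since the substantive verification has already been carried out in Theorem \ref{thm:coadjoint}.
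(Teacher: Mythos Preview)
Your proposal is correct and follows essentially the same route as the paper: both reduce the statement to Theorem \ref{thm:coadjoint} applied once to $(\g,B)$ and once to $(\g^*,\widetilde{B}^*)$, together with the well-known equivalence between Lie bialgebras and matched pairs of Lie algebras via $(\ad^*,\add^*)$. Your write-up is simply more explicit, spelling out the involution $-\lambda\id-(\widetilde{B}^*)^*=B$ and treating the two implications separately, whereas the paper compresses both directions into a single sentence.
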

\begin{proof}
By Theorem \ref{thm:coadjoint},  $(\g^*,\widetilde {B}^*,\ad^*)$ is a representation of the Rota-Baxter Lie algebra $(\g,B)$, and $(\g,B,\add^*)$ is a representation of the Rota-Baxter Lie algebra $(\g^*,\widetilde {B}^*)$.  Moreover, it is well-known that  $(\g,\g^* )$ is a  Lie bialgebra if and only if $(\g,\g^*;\ad^*,\add^*)$ is a matched pair of  Lie algebras. Therefore, $(\g,\g^*,B)$ is a Rota-Baxter Lie bialgebra of weight $\lambda$ if and only if $((\g,B),(\g^*,\widetilde {B}^*);\ad^*,\add^*)$ is a matched pair of Rota-Baxter Lie algebras of the same weight.
\end{proof}

\begin{rmk}\label{3.6}
Let $(\g,\g^*)$ be a Lie bialgebra, $B:\g\to \g$ and $B^*:\g^*\to \g^*$  Rota-Baxter operators of weight $\lambda$. Then in general $((\g,B),(\g^*,B^*);\ad^*,\add^*)$ is not a matched pair of Rota-Baxter Lie algebras of weight $\lambda$.
In fact, let $C=B^*$ in \eqref{RBmp1} and  take pairing with $y\in \g$. The left hand side of \eqref{RBmp1} equals
\[\langle B^*(\ad_{Bx}^*\xi+\ad_x^*B^*\xi+\lambda\ad_x^*\xi),y\rangle=-\langle \xi,[Bx,By]_\g+B[x,By]_\g+\lambda[x,By]_\g\rangle,\]
and the right hand side of \eqref{RBmp1} amounts to
\[\langle \ad_{Bx}^*B^*\xi,y\rangle=-\langle \xi,B[Bx,y]_\g\rangle.\]
So in general,   the relation \eqref{RBmp1} does not hold, and  $((\g,B),(\g^*,B^*);\ad^*,\add^*)$ is not a matched pair of Rota-Baxter Lie algebras of weight $\lambda$.
\end{rmk}

\begin{rmk} Let $(\g,\g^*,B)$ be  a Rota-Baxter Lie bialgebra of weight $\lambda$. By Theorem \ref{bimp} and  Theorem \ref{demp}, it induces    a descendent matched pair
 $(\g_B,\g^*_{\widetilde {B}^*};\rho_{(B,\widetilde {B}^*)},\mu_{(B,\widetilde {B}^*)})$ of Lie algebras. By \eqref{eq:rep1} and \eqref{eq:rep2}, the representation $\rho_{(B,\widetilde {B}^*)}:\g_B\to\gl(\g^*)$ and $\mu_{(B,\widetilde {B}^*)}:\g^*_{\widetilde {B}^*}\to\gl(\g)$ are given by
$$
\rho_{(B,\widetilde {B}^*)}(x)(\xi)=\ad_{Bx}^*\xi-\ad_x^* B^*\xi,\quad \mu_{(B,\widetilde {B}^*)}(\xi)(x)=\add_{\xi}^*Bx-\add_{B^*\xi}^*x.
$$
It is straightforward to check that $\rho_{(B,\widetilde {B}^*)}$ and $\mu_{(B,\widetilde {B}^*)}$ are not the coadjoint representations of $\g_B$ and $\g^*_{\widetilde {B}^*}$. Therefore, the matched pair
 $(\g_B,\g^*_{\widetilde {B}^*};\rho_{(B,\widetilde {B}^*)},\mu_{(B,\widetilde {B}^*)})$ does not come from a Lie bialgebra.

\end{rmk}

The following result tells us that the Drinfeld double of a Rota-Baxter Lie bialgebra is still a Rota-Baxter Lie bialgebra of the same weight.

\begin{pro}\label{abc}
Let $(\g,\g^*,B)$ be  a Rota-Baxter Lie bialgebra of weight $\lambda$. Then $(\mathfrak{d},\mathfrak{d}^*,\huaB)$ is  a Rota-Baxter Lie bialgebra of weight $\lambda$,
where $(\mathfrak{d},\mathfrak{d}^*)$ is the Lie bialgebra given by Example \ref{doubleex}, and $\huaB: \g\oplus \g^*\to \g\oplus \g^*$ is the linear map defined by
\begin{equation}\label{eq:Bdouble}\huaB( x+\xi) = Bx-\lambda\xi-B^*\xi,\quad x\in \g,~\xi\in \g^*.\end{equation}

\end{pro}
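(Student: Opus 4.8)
The plan is to verify the two defining conditions of a Rota-Baxter operator on the Lie bialgebra $(\mathfrak{d},\mathfrak{d}^*)$ directly for the map $\huaB$ given by \eqref{eq:Bdouble}. By definition I must show (i) $\huaB$ is a Rota-Baxter operator of weight $\lambda$ on the double Lie algebra $\mathfrak{d}=\g\bowtie \g^*$, and (ii) $\widetilde{\huaB}^*=-\lambda\id-\huaB^*$ is a Rota-Baxter operator of weight $\lambda$ on $\mathfrak{d}^*$. The most efficient route to (i) uses Theorem \ref{bimp}: a Rota-Baxter Lie bialgebra $(\g,\g^*,B)$ is equivalent to the matched pair of Rota-Baxter Lie algebras $((\g,B),(\g^*,\widetilde{B}^*);\ad^*,\add^*)$, and by Proposition \ref{RBon} this matched pair structure is precisely the statement that $B\oplus \widetilde{B}^*$ is a Rota-Baxter operator of weight $\lambda$ on $\g\bowtie \g^*$. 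Observing that $\huaB(x+\xi)=Bx+(-\lambda\id-B^*)\xi=Bx+\widetilde{B}^*\xi$, I see that $\huaB=B\oplus\widetilde{B}^*$ on $\mathfrak{d}=\g\bowtie\g^*$, so condition (i) is immediate from Theorem \ref{bimp} and Proposition \ref{RBon}.

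For condition (ii) I would exploit the symmetry between the Lie bialgebra $(\g,\g^*)$ and its dual $(\g^*,\g)$. Recall from Example \ref{doubleex} that the dual double satisfies $\mathfrak{d}^*=\g\oplus\overline{\g^*}$ as a Lie algebra, i.e. it is the matched pair double associated to $(\g^*,\g)$ up to the sign conventions recorded there. The key algebraic fact is that $\huaB^*:\mathfrak{d}^*\to\mathfrak{d}^*$, hence $\widetilde{\huaB}^*=-\lambda\id-\huaB^*$, should again decompose as a direct-sum-type operator $\widetilde{B}^*\oplus B$ (suitably interpreted on $\g^*\oplus\g$). Concretely, since $\huaB=B\oplus\widetilde{B}^*$, a direct dualization gives $\huaB^*=B^*\oplus(\widetilde{B}^*)^*=B^*\oplus(-\lambda\id-B)$ acting on $\g^*\oplus\g$, so that $\widetilde{\huaB}^*=-\lambda\id-\huaB^*=(-\lambda\id-B^*)\oplus B=\widetilde{B}^*\oplus B$. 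Because $(\g,\g^*,B)$ is a Rota-Baxter Lie bialgebra, $\widetilde{B}^*$ is a Rota-Baxter operator on $\g^*$ and $B$ is one on $\g$; applying Theorem \ref{bimp} and Proposition \ref{RBon} to the Lie bialgebra $(\g^*,\g)$ (with its own Rota-Baxter operator $\widetilde{B}^*$, whose dual-side operator is $B$) then shows that $\widetilde{B}^*\oplus B$ is a Rota-Baxter operator on the double $\mathfrak{d}^*$.

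The step I expect to be the main obstacle is bookkeeping the identifications and sign conventions carefully enough to confirm that $\widetilde{\huaB}^*$ really is the operator that Theorem \ref{bimp} requires for the Lie bialgebra $(\mathfrak{d}^*,\mathfrak{d})$, rather than merely an operator of the right shape. In particular I must check that the matched-pair structure on $\mathfrak{d}^*=\g\oplus\overline{\g^*}$ matches the coadjoint representations $\ad^*,\add^*$ of the double in the correct way, and that the flip in passing from $(\g,\g^*)$ to $(\g^*,\g)$ (which sends $B$ to $\widetilde{B}^*$ and $\widetilde{B}^*$ back to $B$, consistent with the Proposition stating that $\widetilde{B}^*$ is a Rota-Baxter operator on $(\g^*,\g)$) is accounted for. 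Once these identifications are pinned down, both (i) and (ii) follow from the equivalence in Theorem \ref{bimp} without any further computation, so the proof reduces to recognizing $\huaB$ and $\widetilde{\huaB}^*$ as the two semidirect-type Rota-Baxter operators coming from the matched pair description of the double.
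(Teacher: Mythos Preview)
Your argument for (i) is correct and matches the paper exactly: recognizing $\huaB=B\oplus\widetilde{B}^*$ and invoking Theorem~\ref{bimp} together with Proposition~\ref{RBon} gives the Rota--Baxter property on $\mathfrak{d}=\g\bowtie\g^*$.

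For (ii), however, there is a misidentification. Your computation $\widetilde{\huaB}^*=\widetilde{B}^*\oplus B$ is right (and indeed, under the paper's identification $\mathfrak{d}^*\cong\g\oplus\g^*$ one simply has $-\lambda\id-\huaB^*=\huaB$). The gap is in the final step: applying Theorem~\ref{bimp} and Proposition~\ref{RBon} to the Lie bialgebra $(\g^*,\g)$ produces a Rota--Baxter operator on the Drinfeld double $\g^*\bowtie\g$, which is canonically the \emph{same} Lie algebra as $\mathfrak{d}=\g\bowtie\g^*$, not $\mathfrak{d}^*$. So your plan, as written, just reproves (i) rather than establishing anything about $\mathfrak{d}^*$.

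The paper's route for (ii) is both correct and simpler. By Example~\ref{doubleex}, $\mathfrak{d}^*=\g\oplus\overline{\g^*}$ is the \emph{direct sum} Lie algebra (the cross actions are zero), so checking that $-\lambda\id-\huaB^*=\huaB=B\oplus\widetilde{B}^*$ is a Rota--Baxter operator of weight $\lambda$ on $\mathfrak{d}^*$ reduces to checking each summand separately. On $\g$ this is the hypothesis; on $\overline{\g^*}$ it follows because a Rota--Baxter operator of weight $\lambda$ on $(\g^*,[\cdot,\cdot]_{\g^*})$ is automatically one on $(\g^*,-[\cdot,\cdot]_{\g^*})$. No appeal to Theorem~\ref{bimp} is needed for this half.
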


\begin{proof}
First by Theorem \ref{bimp}, $((\g,B),(\g^*,\widetilde {B}^*);\ad^*,\add^*)$ is a matched pair of Rota-Baxter Lie algebras. By Proposition \ref{RBon}, the linear map $\huaB$
is a Rota-Baxter operator of weight $\lambda$ on the Drinfeld double  $\mathfrak{d}:=\g\bowtie \g^*$.

\emptycomment{
Since $B$ and $-\lambda\id-B^*$ are Rota-Baxter operators of weight $\lambda$ on $\g$ and $\g^*$, it suffices to check the following relation
\[[Bx,-\lambda\xi-B^*\xi]_{\g\bowtie \g^*}=\tilde{B}([Bx,\xi]_{\g\bowtie \g^*}+[x,-\lambda\xi-B^*\xi]_{\g\bowtie \g^*}+\lambda[x,\xi]_{\g\bowtie \g^*}).\]
Comparing the  $\g$ and $\g^*$ components, the above relation is equivalent to
\begin{eqnarray}
\label{B1}\lambda\ad_{\xi}^* Bx+\ad_{B^*\xi}^* Bx&=&
B(-\ad_{\xi}^*Bx+\ad_{B^*\xi}^* x);\\
\label{B2}-\lambda\ad_{Bx}^*\xi-\ad_{Bx}^*B^*\xi&=&
(-\lambda\id-B^*)(\ad_{Bx}^*\xi-\ad_x^*B^*\xi).
\end{eqnarray}
Taking pairing with $\eta\in \g^*$, we have
\begin{eqnarray*}
\langle \lambda\ad_{\xi}^* Bx+\ad_{B^*\xi}^* Bx,\eta\rangle &=&-\langle x,\lambda B^*[\xi,\eta]_{\g^*}+B^*[B^*\xi,\eta]_{\g^*}\rangle,\\
\langle B(-\ad_{\xi}^*Bx+\ad_{B^*\xi}^* x),\eta\rangle&=&\langle x, B^*[\xi,B^*\eta]_{\g^*}-[B^*\xi,B^*\eta]_{\g^*}\rangle.
\end{eqnarray*}
So \eqref{B1} holds if and only if $B^*$ is Rota-Baxter operator of weight $\lambda$ on $\g^*$. Similarly, \eqref{B2} holds if and only if $B$ is a Rota-Baxter operator of weight $\lambda$ on $\g$. We proved $\mathrm{(a)}$.
}
Since $\widetilde {B}^*=-\lambda \id-B^*$ is a Rota-Baxter operator of weight $\lambda$ on $\g^*$, it is also a Rota-Baxter operator on the Lie algebra $\overline{\g^*}=(\g^*,-[\cdot,\cdot]_{\g^*})$. Since $\mathfrak{d}^*$ is a direct sum Lie algebra, it is obvious that  $-\lambda\id-\huaB^*=\huaB$ is  a Rota-Baxter operator of weight $\lambda$ on the dual Lie algebra $\mathfrak{d}^*=\g\oplus \overline{\g^*}$. Consequently, $(\mathfrak{d},\mathfrak{d}^*,\huaB)$ is  a Rota-Baxter Lie bialgebra of weight $\lambda$.
\end{proof}

Factorizable Lie bialgebras provide a class of examples of Rota-Baxter Lie bialgebras.

\begin{thm}\label{pro:FL}
Let $(\g,\g^*_r)$ be a factorizable Lie bialgebra with $I=r_+-r_-$. Then $(\g,\g_r^*,B)$ is a Rota-Baxter Lie bialgebra of weight $\lambda$, where $B:=\lambda r_-\circ I^{-1}$ is given by \eqref{eq:defiB}.
\end{thm}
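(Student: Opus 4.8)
The plan is to verify the two defining conditions for $B$ to be a Rota-Baxter operator of weight $\lambda$ on the Lie bialgebra $(\g,\g_r^*)$: that $B$ is a Rota-Baxter operator of weight $\lambda$ on $\g$, and that $\widetilde{B}^*=-\lambda\id-B^*$ is a Rota-Baxter operator of weight $\lambda$ on $\g_r^*=(\g^*,[\cdot,\cdot]_r)$. The first condition is nothing new: it is exactly the content of Theorem \ref{FL}, which already establishes that $B=\lambda r_-\circ I^{-1}$ is a Rota-Baxter operator of weight $\lambda$ on $\g$. So all the work lies in the second condition, and the key is to identify $\widetilde{B}^*$ explicitly.

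First I would compute $\widetilde{B}^*$. Using $(r_-\circ I^{-1})^*=(I^{-1})^*\circ r_-^*$ together with $I^*=I$ (so $(I^{-1})^*=I^{-1}$) and $r_-^*=-r_+$, one gets $B^*=-\lambda I^{-1}\circ r_+$, whence
$$\widetilde{B}^*=-\lambda\id-B^*=\lambda I^{-1}\circ r_+-\lambda\id=\lambda I^{-1}\circ(r_+-I)=\lambda I^{-1}\circ r_-,$$
the last equality using $I=r_+-r_-$ from \eqref{I}. Comparing with $B=\lambda r_-\circ I^{-1}$, this says precisely that $\widetilde{B}^*=I^{-1}\circ B\circ I=(\tfrac{1}{\lambda}I)^{-1}\circ B\circ(\tfrac{1}{\lambda}I)$; that is, $\widetilde{B}^*$ is the conjugate of $B$ by the linear isomorphism $\tfrac{1}{\lambda}I\colon\g^*\to\g$.

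The finishing step exploits two facts. On one hand, by Corollary \ref{liebi} (equivalently, by the identity \eqref{Ihomo} established in the proof of Theorem \ref{FL}), the map $\tfrac{1}{\lambda}I\colon\g_r^*\to\g_B$ is a \emph{Lie algebra isomorphism} onto the descendent Lie algebra $\g_B=(\g,[\cdot,\cdot]_B)$. On the other hand, $B$, being a Rota-Baxter operator of weight $\lambda$ on $\g$, is also a Rota-Baxter operator of weight $\lambda$ on its descendent Lie algebra $\g_B$ (the well-known fact already invoked after Theorem \ref{demp}). Since conjugating a Rota-Baxter operator of weight $\lambda$ by a Lie algebra isomorphism again yields a Rota-Baxter operator of the same weight, by a one-line check transporting the defining identity along the isomorphism, it follows that $\widetilde{B}^*=(\tfrac{1}{\lambda}I)^{-1}\circ B\circ(\tfrac{1}{\lambda}I)$ is a Rota-Baxter operator of weight $\lambda$ on $\g_r^*$. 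This is exactly condition (ii), and together with Theorem \ref{FL} it shows that $(\g,\g_r^*,B)$ is a Rota-Baxter Lie bialgebra of weight $\lambda$. The one mild subtlety, and the step most likely to derail a direct attack, is that $I$ itself is \emph{not} a Lie algebra morphism into $(\g,[\cdot,\cdot]_\g)$; it is a morphism only into the descendent bracket $[\cdot,\cdot]_B$, which is why both the descendent-algebra fact and the precise form $\widetilde{B}^*=I^{-1}\circ B\circ I$ are needed to make the conjugation argument go through.
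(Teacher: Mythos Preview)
Your proof is correct and follows essentially the same route as the paper's: both begin by computing $\widetilde{B}^*=\lambda I^{-1}\circ r_-$ and both rely on the Lie algebra isomorphism $\tfrac{1}{\lambda}I\colon\g_r^*\to\g_B$ from Corollary~\ref{liebi}. The only difference is packaging: the paper verifies the Rota-Baxter identity for $\widetilde{B}^*$ by a direct three-line computation using that $r_-\colon\g_r^*\to\g$ is a Lie algebra homomorphism (Theorem~\ref{thm:rcybe}) and then recognizing the descendent bracket, whereas you factor $\widetilde{B}^*=(\tfrac{1}{\lambda}I)^{-1}\circ B\circ(\tfrac{1}{\lambda}I)$ and invoke the well-known fact that $B$ is again Rota-Baxter on $\g_B$, so the result follows by transport along the isomorphism. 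Your conjugation argument is a clean conceptual shortcut for the same computation.
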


\begin{proof}

It is obvious that $\widetilde {B}^*=-\lambda\id-B^*=\lambda I^{-1}\circ r_-$. Moreover,  by the facts that $\frac{1}{\lambda}I: \g_r^*\to (\g,[\cdot,\cdot]_{B})$ is an isomorphism of Lie algebras (see Corollary \ref{liebi}) and $r_-:\g^*_r\to \g$ is a Lie algebra homomorphism, we have
\begin{eqnarray*}
&&\lambda I^{-1}r_-([\lambda I^{-1} r_-\xi,\eta]_r+[\xi,\lambda I^{-1}r_-\eta]_r+\lambda [\xi,\eta]_r)\\&=&
\lambda  I^{-1}([\lambda r_-I^{-1} r_-\xi,r_-\eta]_\g+[r_-\xi,\lambda r_-I^{-1}r_-\eta]_\g+\lambda[r_-\xi,r_-\eta]_\g)\\
&=&\lambda I^{-1}([r_-\xi,r_-\eta]_{B})\\
&=&[\lambda I^{-1} r_-\xi,\lambda I^{-1}r_-\eta]_r,
\end{eqnarray*}
which implies that $\widetilde {B}^*$
  is a Rota-Baxter operator of weight $\lambda$ on $\g^*_r$. Therefore, $(\g,\g_r^*,B)$ is a Rota-Baxter Lie bialgebra of weight $\lambda$.
\end{proof}

\begin{cor}
Let $(\g,\g^*_r)$ be a factorizable Lie bialgebra with $I=r_+-r_-$. Then have the following commutative diagram of Lie algebra homomorphisms:	
\[
		\vcenter{\xymatrix{
			\cdots\g^*_{\widetilde{B^*}^{k}}\ar[d]^{\frac{1}{\lambda} I}_{\cong} \ar[r]^(0.55){-\lambda \id-B^*}\ar[dr]^{r_-} &\cdots \ar[r]^{} &\g^*_{\widetilde{B^*}} \ar[d]^{\frac{1}{\lambda} I}_{\cong} \ar[r]^(0.55){-\lambda \id-B^*} \ar[dr]^{r_-}&\g_r^* \ar[d]^{\frac{1}{\lambda}I}_{\cong} \ar[r]^(0.55){-\lambda \id-B^*} \ar[dr]^{r_-}&\g_I^* \ar[d]^{\frac{1}{\lambda}I}_{\cong}\\ \cdots \g_{B^{k+1}}\ar[r]^{B}&\cdots \ar[r]^{}&
			\g_{B^2}\ar[r]^{B} &\g_B \ar[r]^{B} &\g,
		}}
	\]
where  $B=\lambda r_-\circ I^{-1}$ and $\widetilde{B^*}=\lambda I^{-1}\circ r_-$  and $\g_{B^k}$ is the descendent Lie algebra of the Rota-Baxter operator $B$ on $\g_{B^{k-1}}$.
\end{cor}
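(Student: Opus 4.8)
The plan is to reduce the entire infinite diagram to a single operator identity together with the two base-level isomorphisms already supplied by Corollary~\ref{liebi}. First I would record the key intertwining relation. From $B=\lambda r_-\circ I^{-1}$ in \eqref{eq:defiB} and from $\widetilde{B^*}=-\lambda\id-B^*=\lambda I^{-1}\circ r_-$ (computed at the start of the proof of Theorem~\ref{pro:FL}, using $I^*=I$ and $r_-^*=-r_+$), one obtains at once
\[
B\circ\tfrac{1}{\lambda}I \;=\; r_- \;=\; \tfrac{1}{\lambda}I\circ\widetilde{B^*},
\]
so that $\tfrac1\lambda I$ conjugates $\widetilde{B^*}$ into $B$. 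This single identity makes every triangular cell commute: the two legs from a top node $\g^*_{\widetilde{B^*}^{k}}$ to the bottom node $\g_{B^{k}}$, namely the right-then-down composite $\tfrac1\lambda I\circ\widetilde{B^*}$ and the down-then-right composite $B\circ\tfrac1\lambda I$, both equal the diagonal $r_-$; consequently each square commutes as well.

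Next I would upgrade the base isomorphism to an isomorphism of Rota-Baxter Lie algebras and propagate it up the tower. By Corollary~\ref{liebi}, $\tfrac1\lambda I\colon\g^*_r\to\g_B$ is a Lie algebra isomorphism; by Theorem~\ref{pro:FL}, $\widetilde{B^*}$ is a Rota-Baxter operator of weight $\lambda$ on $\g^*_r$; and $B$ is a Rota-Baxter operator of weight $\lambda$ on $\g_B$ by the standard fact that a Rota-Baxter operator restricts to one on its descendent algebra. Together with the intertwining relation above, this says precisely that $\tfrac1\lambda I\colon(\g^*_r,\widetilde{B^*})\to(\g_B,B)$ is an \emph{isomorphism of Rota-Baxter Lie algebras}. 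The elementary lemma that such an isomorphism descends to an isomorphism of descendent Lie algebras (immediate from $\varphi[x,y]_P=[\varphi x,\varphi y]_{P'}$ whenever $\varphi$ is a Lie isomorphism with $\varphi\circ P=P'\circ\varphi$) then lets me induct: at every level $k$ the same underlying linear map $\tfrac1\lambda I$ is a Lie algebra isomorphism $\g^*_{\widetilde{B^*}^{k}}\to\g_{B^{k+1}}$. The rightmost column is the second base case $\tfrac1\lambda I\colon\g_I^*\to\g$, again furnished by Corollary~\ref{liebi} (with $\g_{B^{0}}=\g$).

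Finally I would check that all remaining arrows are Lie algebra homomorphisms. The bottom horizontal maps $B\colon\g_{B^{k+1}}\to\g_{B^{k}}$ and the top horizontal maps $\widetilde{B^*}\colon\g^*_{\widetilde{B^*}^{k+1}}\to\g^*_{\widetilde{B^*}^{k}}$ are homomorphisms precisely because a Rota-Baxter operator is a homomorphism from its descendent algebra to the original one; the diagonals $r_-=B\circ\tfrac1\lambda I$ are then homomorphisms as composites of homomorphisms (equivalently, $r_-\colon\g^*_r\to\g$ is a homomorphism by Theorem~\ref{thm:rcybe}, and the higher diagonals follow by the same composite formula). Assembling the three ingredients — commuting triangles from the operator identity, vertical isomorphisms from the descent of the Rota-Baxter isomorphism, and the homomorphism property of the horizontal and diagonal maps — yields the asserted commutative diagram.

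I do not expect a genuine obstacle here, since everything rests on earlier results; the one point needing care is the inductive step up the infinite tower, where I must be sure that the relation $B\circ\tfrac1\lambda I=\tfrac1\lambda I\circ\widetilde{B^*}$ is level-independent. This is clear because $\tfrac1\lambda I$, $B$, $\widetilde{B^*}$ and $r_-$ are one and the same family of linear maps at every level — only the Lie bracket is modified by the descendent construction — so the algebraic identity transports verbatim, and the index shift ($\widetilde{B^*}^{k}$ on top matching $B^{k+1}$ below, with $\g_I^*$ matching $\g$) is merely a bookkeeping convention inherited from Corollary~\ref{liebi}.
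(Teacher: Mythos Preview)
Your argument is correct and follows essentially the same route as the paper: both prove the vertical isomorphisms $\tfrac1\lambda I\colon\g^*_{\widetilde{B^*}^{k}}\to\g_{B^{k+1}}$ by induction on $k$, anchored at Corollary~\ref{liebi}, and both rely (implicitly in the paper, explicitly in your write-up) on the intertwining identity $B\circ\tfrac1\lambda I=r_-=\tfrac1\lambda I\circ\widetilde{B^*}$. The paper carries out the inductive step by a direct three-line computation of $\tfrac1\lambda I[\xi,\eta]_{\widetilde{B^*}^{k}}$, which is exactly your ``Rota-Baxter isomorphisms descend to descendent algebras'' lemma unpacked; it then dismisses the commutativity of the cells and the homomorphism property of the horizontal and diagonal arrows with ``all the other facts are obvious,'' whereas you spell these out. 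Your conceptual repackaging is cleaner but not a genuinely different argument.
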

\begin{proof}
 We prove that $\frac{1}{\lambda} I: \g^*_{\widetilde{B^*}^{k}}\to \g_{B^{k+1}}$ is a Lie algebra isomorphism by induction on $k$ $(k\geq 0)$. First observe that $\frac{1}{\lambda} I: \g^*_r\to \g_B$ is a Lie algebra isomorphism, which is the case of $k=0$. Assuming the claim holds for $k-1$,
   then we have
\begin{eqnarray*}
\frac{1}{\lambda}I[\xi,\eta]_{\widetilde{B^*}^{k}}&=&\frac{1}{\lambda}I\big([\lambda I^{-1} r_- \xi,\eta]_{\widetilde{B^*}^{k-1}}+[\xi, \lambda I^{-1} r_-\eta]_{\widetilde{B^*}^{k-1}}+\lambda [\xi,\eta]_{\widetilde{B^*}^{k-1}}\big)\\ &=&[r_-\xi, \frac{1}{\lambda}I \eta]_{B^{k}}+[\frac{1}{\lambda} I \xi, r_-\eta]_{B^{k}}+\lambda[\frac{1}{\lambda} I \xi, \frac{1}{\lambda} I \eta]_{B^{k}}\\ &=&[\frac{1}{\lambda} I \xi, \frac{1}{\lambda} I \eta]_{B^{k+1}},
\end{eqnarray*}
which implies  that it holds for $k$.  All the other facts are obvious.
\end{proof}

\begin{ex}
Let $(\g,\g^*)$ be a Lie bialgebra. Then $(\g,\g^*,B)$ is a Rota-Baxter Lie bialgebra of weight $\lambda$, where  the linear map $B:\g\to \g$ is defined by $B(x)=-\lambda x$. Note that $-\lambda \id-B^*=0$.
\end{ex}

\begin{ex}\label{fre}
Consider the Lie bialgebra  $(\mathfrak{d},\mathfrak{d}^*)$ given in Example \ref{doubleex}, where $\frkd= \g\bowtie \g^*$. Then the linear map
\[B: \g\bowtie \g^*\mapsto \g\bowtie \g^*,\qquad x+\xi\mapsto -\lambda \xi,\]
is a Rota-Baxter operator of weight $\lambda$ on the Lie bialgebra $(\mathfrak{d},\mathfrak{d}^*)$. In fact, it is obvious that $B$ is a Rota-Baxter operator of weight $\lambda$ on the Lie  algebra $ \mathfrak{d}$. On the other hand,  we have
\[(-\lambda\id-B^*)( x+\xi)=-\lambda\xi,\]
which implies that $-\lambda\id-B^*$ is also a Rota-Baxter operator of weight $\lambda$ on the Lie algebra $\frkd^*$. Therefore, $(\mathfrak{d},\mathfrak{d}^*,B)$ is a Rota-Baxter Lie bialgebra   of weight $\lambda$.
\end{ex}

\subsection{Rota-Baxter Lie bialgebras and Manin triples of Rota-Baxter Lie  algebras}

A well-known result regrading Lie bialgebras is that there is a one-one correspondence between Lie bialgebras and Manin triples. A {\bf Manin triple} of Lie algebras  is a triple $((\mathfrak{d},S),\g,\h)$, where $(\mathfrak{d},S)$ is a quadratic Lie algebra, $\g$ and $\h$ are Lie algebras such that
\begin{itemize}
\item [\rm (i)] $\g$ and $\h$ are Lie subalgebras;
\item [\rm (ii)]  $\mathfrak{d}=\g\oplus \h$ as vector spaces;
\item [\rm (iii)] both $\g$ and $\h$ are isotropic with respect to the nondegenerate invariant symmetric bilinear form $S$.
\end{itemize}
Given a Lie bialgebra $(\g,\g^*)$, the triple $((\g\bowtie \g^*,S),\g,\g^*)$   is a Manin triple, where $S$ is given by
\begin{equation}\label{eq:sp}
  S(x+\xi,y+\eta)=\xi(y)+\eta(x).
\end{equation} Conversely, given a Manin triple $((\mathfrak{d},S),\g,\h )$, identifying $\h$ with $\g^*$ by using the nondegenerate invariant symmetric bilinear form $S$, we obtain a Lie bialgebra $(\g,\g^*)$.

Now we    introduce the notion of   Manin triples of Rota-Baxter Lie algebras of weight $\lambda$ using quadratic Rota-Baxter Lie algebras of weight $\lambda$ given in Definition \ref{defi:qua}.

\begin{defi}
A {\bf Manin triple of Rota-Baxter Lie algebras of weight $\lambda$} consists of a triple   $((\huaG,\huaB,S), (\g,B), (\h,C))$, where $(\huaG,\huaB,S)$ is a quadratic Rota-Baxter Lie algebra of weight $\lambda$,  $(\g,B)$ and $(\h,C)$ are Rota-Baxter Lie algebras of weight $\lambda$ such that
\begin{itemize}
\item [\rm (i)]  $(\g,B)$ and $(\h,C)$ are Rota-Baxter Lie subalgebras, i.e. $\g$ and $\h$ are Lie subalgebras of $\huaG$ and $\huaB|_{\g}=B,~\huaB|_{\h}=C$;
\item [\rm (ii)]  $\huaG=\g\oplus \h$ as vector spaces;
\item [\rm (iii)] both $\g$ and $\h$ are isotropic with respect to the nondegenerate symmetric invariant bilinear form $S$.
\end{itemize}
\end{defi}

Similar to the classical case, we have the following result.
\begin{thm}
There is a one-one correspondence between Manin triples of Rota-Baxter Lie algebras of weight $\lambda$ and Rota-Baxter Lie bialgebras
of the same weight.
\end{thm}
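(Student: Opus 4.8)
The plan is to upgrade the classical equivalence between Lie bialgebras and Manin triples, recalled just above, by keeping track of the Rota-Baxter operators on all three spaces. Given a Rota-Baxter Lie bialgebra $(\g,\g^*,B)$ of weight $\lambda$, I would associate the triple $\big((\g\bowtie\g^*,\huaB,S),(\g,B),(\g^*,\widetilde{B}^*)\big)$, where $S$ is the canonical pairing \eqref{eq:sp}, $\widetilde{B}^*=-\lambda\id-B^*$, and $\huaB=B\oplus\widetilde{B}^*$ is the operator \eqref{eq:Bdouble} on the Drinfeld double. Conversely, given a Manin triple $((\huaG,\huaB,S),(\g,B),(\h,C))$, I would use the nondegenerate isotropic pairing $S$ to identify $\h$ with $\g^*$, recover the Lie bialgebra $(\g,\g^*)$ from the classical correspondence, and keep $B=\huaB|_{\g}$ as the Rota-Baxter operator on $\g$.

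For the forward direction, the only genuinely new point is to check that $(\g\bowtie\g^*,\huaB,S)$ is a quadratic Rota-Baxter Lie algebra in the sense of Definition \ref{defi:qua}. That $(\g\bowtie\g^*,S)$ is a quadratic Lie algebra is classical (it underlies the Manin triple attached to $(\g,\g^*)$), and that $\huaB$ is a Rota-Baxter operator of weight $\lambda$ on $\g\bowtie\g^*$ is exactly what was established in the proof of Proposition \ref{abc} via Theorem \ref{bimp} and Proposition \ref{RBon}. It remains to verify the compatibility \eqref{RBmanin}, namely $S(u,\huaB v)+S(\huaB u,v)+\lambda S(u,v)=0$. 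Writing $u=x+\xi$ and $v=y+\eta$ and using $\huaB(y+\eta)=By+\widetilde{B}^*\eta$ together with the explicit pairing $S(a+\alpha,b+\beta)=\alpha(b)+\beta(a)$, every resulting term is a pairing of the form $\xi(By)$, $\eta(Bx)$, $\lambda\xi(y)$ or $\lambda\eta(x)$, and these cancel in pairs; so \eqref{RBmanin} holds. The subalgebra conditions $\huaB|_{\g}=B$, $\huaB|_{\g^*}=\widetilde{B}^*$ and the isotropy of $\g,\g^*$ are immediate from the construction.

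For the converse, I would run the argument that pins down $C$. Since $S$ is nondegenerate, symmetric and invariant and $\g,\h$ are isotropic complementary subalgebras, $S$ restricts to a nondegenerate pairing between $\g$ and $\h$, giving an identification of $\h$ with $\g^*$; the classical correspondence then turns $(\huaG,S)$ into the Lie bialgebra $(\g,\g^*)$. Applying the quadratic condition \eqref{RBmanin} to $x\in\g$ and $\xi\in\h$ and using $\huaB x=Bx\in\g$, $\huaB\xi=C\xi\in\h$ yields $S(x,C\xi)+S(Bx,\xi)+\lambda S(x,\xi)=0$; translated through the identification $\h\cong\g^*$ this reads precisely $C\xi=-\lambda\xi-B^*\xi=\widetilde{B}^*\xi$. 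Hence, under this identification, the Rota-Baxter operator $C$ on $\h$ becomes $\widetilde{B}^*$ on $\g^*$, so $\widetilde{B}^*$ is a Rota-Baxter operator of weight $\lambda$ on $\g^*$ and $(\g,\g^*,B)$ is a Rota-Baxter Lie bialgebra.

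Finally I would note that the two assignments are mutually inverse: starting from $(\g,\g^*,B)$, the triple built in the forward direction has $C=\widetilde{B}^*$, which the converse recovers unchanged, and conversely the forward construction reproduces a given Manin triple on the nose. I expect the main obstacle to be conceptual rather than computational, namely the identity $C=\widetilde{B}^*$ forced by \eqref{RBmanin}: it is exactly this step that singles out $-\lambda\id-B^*$ rather than $B^*$ as the correct operator on the dual (compare Remark \ref{3.6} and the discussion preceding Theorem \ref{bimp}), and getting the signs and the weight term to match is where the definition of a quadratic Rota-Baxter Lie algebra earns its keep.
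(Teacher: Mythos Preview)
Your proposal is correct and follows essentially the same approach as the paper: build the Manin triple on the Drinfeld double via Proposition \ref{abc} and verify \eqref{RBmanin} directly, and in the converse use the classical identification $\h\cong\g^*$ together with \eqref{RBmanin} to force $C=-\lambda\id-B^*$. Your write-up is in fact more explicit than the paper's (you spell out the cancellation in \eqref{RBmanin} and the mutual-inverse check), but the strategy is identical.
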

\begin{proof}
Let $(\g,\g^*,B)$ be a Rota-Baxter Lie bialgebra. Then $(\g,B)$ and $(\g^*,\widetilde {B}^*)$ are Rota-Baxter Lie algebras of weight $\lambda$. By Proposition \ref{abc}, $(\g\bowtie \g^*,\huaB)$
is a Rota-Baxter Lie algebra of weight $\lambda$, where $\huaB$ is defined by \eqref{eq:Bdouble}. Moreover, it is straightforward to deduce that $(\g\bowtie \g^*,\huaB,S)$ is a quadratic Rota-Baxter Lie algebra of weight $\lambda$, where $S$ is given by \eqref{eq:sp}. Consequently,  $((\g\bowtie \g^*,\huaB,S),(\g,B),(\g^*,-\lambda\id-B^*))$ is a Manin triple of Rota-Baxter Lie algebras of weight $\lambda$.

Conversely, let $((\huaG,\huaB,S), (\g,B), (\h,C))$ be a Manin triple of Rota-Baxter Lie algebras of weight $\lambda$. Then similar to the classical argument, first we can  identify   $\h$ with $\g^*$ by the nondegenerate bilinear form $S$, and  we obtain a Lie bialgebra $(\g,\g^*)$. Then using the invariant condition \eqref{RBmanin}, we identify $C$ with $-\lambda\id-B^*$. Consequently, both $(\g,B)$ and $(\g^*,-\lambda\id-B^*)$ are Rota-Baxter Lie algebras of weight $\lambda$, i.e. $(\g,\g^*,B)$ is a Rota-Baxter Lie bialgebra.
 \end{proof}

 For a Manin triple $((\mathfrak{d},S),\g,\h)$ of Lie algebras, define $\huaB:\mathfrak{d}\to \mathfrak{d}$ by $\huaB(x+\xi)=-\lambda \xi$, where $x \in \g,~\xi\in\h$. Then $\huaB$ is a Rota-Baxter operator of weight $\lambda$ on $\mathfrak{d}$. Furthermore, it is direct to show that $\big((\mathfrak{d},\huaB,S),(\g,0),(\h,-\lambda \id)\big)$ is a Manin triple of Rota-Baxter Lie algebras of weight $\lambda$. Two explicit examples coming from this construction are given as follows.
\begin{ex}
Let $sl(n,\mathbb{C})$ be the Lie algebra of  $n\times n$ traceless complex matrices. Consider its Iwasawa decomposition
\[sl(n,\mathbb{C})=su(n)\oplus sb(n,\mathbb{C}),\]
where $su(n):=\{X\in sl(n,\mathbb{C}); X+\bar{X}^T=0\}$ and $sb(n,\mathbb{C})$ is the Lie algebra of  all $n\times n$ traceless upper triangular complex matrices with real diagonal entries. With the scalar product $S (X,Y)=\mathrm{Im}(tr(XY))$ on the Lie algebra $sl(n,\mathbb{C})$, i.e.  the imaginary part of the trace of $XY$, the triple $((sl(n,\mathbb{C}),S),su(n),sb(n,\mathbb{C}))$ forms a Manin triple (\cite{Lu2}).

The linear map \[\huaB:sl(n,\mathbb{C})\to sl(n,\mathbb{C}),\qquad X=(x_{ij})\mapsto -\lambda A=-\lambda(a_{ij}),\]
where $a_{ii}=\frac{x_{ii}+\overline{x_{ii}}}{2}, a_{ij}=x_{ij}+\overline{x_{ji}}, i< j$ and $a_{ij}=0, i> j$,
is a Rota-Baxter operator of weight $\lambda$ on $sl(n,\mathbb{C})$. In fact, we have
\[\huaB(X)=-\lambda A, \]
if $X\in sl(n,\mathbb{C})$ has the decomposition $X=C+A$ for $C\in su(n)$ and $A\in sb(n,\mathbb{C})$. In fact, $C=(c_{ij})$ is defined by $c_{ii}=\frac{x_{ii}-\overline{x_{ii}}}{2}, c_{ij}=-\overline{x_{ji}}, i<j$ and $c_{ij}=x_{ij}, i>j$. It is obvious $\huaB|_{su(n)}=0$ and $\huaB|_{ sb(n,\mathbb{C})}=-\lambda \id$ are Rota-Baxter operators of weight $\lambda$. Therefore, $((sl(n,\mathbb{C}),\huaB,S),(su(n),0),(sb(n,\mathbb{C}),-\lambda \id))$ is a Manin triple of Rota-Baxter Lie algebras of weight $\lambda$.
This example can be generalized to the Iwasawa decomposition of   an arbitrary semi-simple Lie algebra.
\end{ex}
\begin{ex}
Let $\g=sl(n,\mathbb{C})$ and consider   the direct sum Lie algebra $\g\oplus \g=sl(n,\mathbb{C})\oplus sl(n,\mathbb{C})$.
Define \begin{eqnarray*}\g_{diag}&=&\{(X,X);X\in sl(n,\mathbb{C})\};\\ \g^*_{st}&=&\{(Y+X_+,-Y+X_-);Y\in \h,X_+\in n_+,X_-\in n_-\},\end{eqnarray*}
where $\h,n_+,n_-$ are the Lie algebras of diagonal, strictly upper and strictly lower triangular matrices in $sl(n,\mathbb{C})$. With respect to the scalar product
\[\langle(X_1,Y_1),(X_2,Y_2)\rangle_{\g\oplus \g}=\mathrm{Im}(tr(X_1X_2))-\mathrm{Im}(tr(Y_1Y_2)),\]
the triple $(\g\oplus \g,\g_{diag},\g^*_{st})$ consists a Manin triple. Moreover, the map
\begin{eqnarray*}
\huaB: sl(n,\mathbb{C})\oplus sl(n,\mathbb{C})&\to& sl(n,\mathbb{C})\oplus sl(n,\mathbb{C}),\\
(X,Y)&\mapsto& -\lambda\big(\frac{1}{2}(X-Y)_0+(X-Y)_+,-\frac{1}{2}(X-Y)_0-(X-Y)_-\big),
\end{eqnarray*}
is a Rota-Baxter operator of weight $\lambda$ on this Manin triple, where $(X-Y)_0,(X-Y)_+$ and $(X-Y)_-$ are the diagonal part, strictly upper triangular part and strictly lower triangular part of the matrix $X-Y$, respectively. In fact, taking into consideration of the decomposition $\g\oplus \g=\g_{diag}\oplus \g_{st}^*$, the linear map $\huaB$ is actually the projection to the $\g_{st}^*$- component multiplied by $-\lambda$.
\end{ex}

\emptycomment{

\yh{Maybe in this case it is easier to study coboundary case?}

\hl{In terms of $\Delta: \g\to \wedge^2 \g$, $B$ is a Rota-Baxter operator on the Lie bialgebra $(\g,\Delta)$  if and only if $B:\g\to \g$ is a Rota-Baxter operator of weight $\lambda$ and
\[(B\otimes B)\circ \Delta(x)=(B\otimes \id+\id\otimes B+\lambda \id\otimes \id)\circ \Delta(B(x)),\qquad \forall x\in \g.\]
If in particular, $\Delta(x)=[x,r]_\g$ for $r\in \wedge^2 \g$ such that $[r,r]_\g=0$, then
we have
\[(B\otimes B)([x,r]_\g)=(B\otimes \id+\id\otimes B+\lambda \id\otimes \id)\circ [Bx,r]_\g,\qquad \forall x\in \g.\]
It does not look simpler for the triangular case when $\Delta(x)=[x,r]$.}

\textcolor{red}{Problem 1: this also looks mysterious.}

\comment{If you view $B$ as Nijenhuis (very similar), is it possible using Poisson-Nijenhuis idea to analysis this problem? i.e. for example, is possible $Br^\sharp$ is a relative RB of weight 0? }

}

\section{Rota-Baxter Lie groups and factorizable Poisson Lie groups}\label{sec:rbg}
In this section, we show that factorizable Poisson Lie groups are in one-to-one correspondence with quadratic Rota-Baxter Lie groups. Moreover, a Rota-Baxter Lie group gives rise to a matched pair of Lie groups.

\subsection{Factorizable Poisson Lie groups and quadratic Rota-Baxter Lie groups}
The notion of Rota-Baxter Lie groups was introduced in \cite{GLS}, whose differentiations are Rota-Baxter Lie algebras of weight 1.

\begin{defi}{\rm(\cite{GLS})}
 A {\bf Rota-Baxter operator} on a Lie group $G$ is a smooth map $\B:G\longrightarrow G$ such that
\begin{eqnarray}\mlabel{RB}
\B(g_1)\B(g_2)=\mathfrak{B}(g_1\Ad_{\B(g_1)} g_2),\quad \forall g_1,g_2\in G.
\end{eqnarray}
The pair $(G,\B)$ is called a {\bf Rota-Baxter Lie group}.
\end{defi}

Let $(G,\B)$ be a Rota-Baxter Lie group. Then there is a new group structure $\star$ on $G$ given by
$$
g\star h=g\Ad_{\B(g)} h.
$$
The new Lie group $(G,\star)$ is called  the descendent Lie group, and  denoted by $G_{\B}$.
Moreover, $\B$ is a Lie group homomorphism:
\begin{equation}\label{eq:Bhomo}
  \B(g\star h)=\B(g)\B(h).
\end{equation}

\begin{thm}{\rm(\cite{GLS})}
  Let $(G,\B)$ be a Rota-Baxter Lie group, and $\g$ the Lie algebra of the Lie group $G$. Then $(\g,B:=\B_{*e})$ is a Rota-Baxter Lie algebra of weight $1$, where $e$ is the identity of $G.$
\end{thm}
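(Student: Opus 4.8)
The plan is to substitute one-parameter subgroups into the defining identity \eqref{RB} and read off the coefficient of $st$ in exponential coordinates. First, setting $g_1=g_2=e$ in \eqref{RB} and using $\Ad_{\B(e)}e=e$ gives $\B(e)\B(e)=\B(e)$, hence $\B(e)=e$; therefore $B=\B_{*e}$ is a well-defined linear map $\g\to\g$, and it remains to verify the weight-$1$ Rota-Baxter identity
\[
[Bx,By]_\g=B\big([Bx,y]_\g+[x,By]_\g+[x,y]_\g\big),\qquad \forall x,y\in\g.
\]
I would fix $x,y\in\g$, set $g_1=\exp(sx)$ and $g_2=\exp(ty)$, substitute into \eqref{RB}, and compare the two sides as smooth functions of $(s,t)$ near the origin. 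Since $\exp$ is a local diffeomorphism at $0$, equality of the two group elements is equivalent to equality of their logarithms, and I only need the coefficient of $st$ (the pure powers $s^k$ and $t^k$ carry no mixed information).

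Write $\beta:=\exp^{-1}\circ\B\circ\exp$ near $0\in\g$, so $\beta(z)=Bz+Q(z)+O(|z|^3)$ with $Q$ the quadratic part of $\beta$. For the left-hand side I would use the Baker-Campbell-Hausdorff formula,
\[
\exp^{-1}\!\big(\B(g_1)\B(g_2)\big)=\beta(sx)+\beta(ty)+\tfrac12[\beta(sx),\beta(ty)]_\g+\cdots.
\]
Because $\beta(sx)$ involves only powers of $s$ and $\beta(ty)$ only powers of $t$, the quadratic term $Q$ contributes only to the pure $s^2$ and $t^2$ coefficients, so the coefficient of $st$ on the left is simply $\tfrac12[Bx,By]_\g$.

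For the right-hand side I would first expand the argument $w:=g_1\,\Ad_{\B(g_1)}g_2$. Using $\Ad_{\exp(sBx)}y=y+s[Bx,y]_\g+O(s^2)$ together with BCH once more gives
\[
Z:=\exp^{-1}(w)=sx+ty+st\big([Bx,y]_\g+\tfrac12[x,y]_\g\big)+(\text{pure }s^2,t^2\text{ terms})+\cdots,
\]
whence, collecting the coefficient of $st$ in $\beta(Z)=BZ+Q(Z)+\cdots$, the right-hand side contributes $B\big([Bx,y]_\g+\tfrac12[x,y]_\g\big)+2Q(x,y)$, where $Q(x,y)$ is the symmetric bilinear polarization of $Q$. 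Equating the two $st$-coefficients yields
\[
\tfrac12[Bx,By]_\g=B\big([Bx,y]_\g+\tfrac12[x,y]_\g\big)+2Q(x,y).
\]

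The main obstacle is exactly the appearance of the second-order jet $Q$ of $\B$, which cannot be removed from this single second-order identity and is not controlled a priori. The key observation that resolves it is that $Q(x,y)$ is symmetric in $x$ and $y$, while the target identity is governed by the antisymmetric part: subtracting the displayed identity from its image under $x\leftrightarrow y$ cancels the two $2Q(x,y)$ terms and, using antisymmetry of $[\cdot,\cdot]_\g$, leaves precisely $[Bx,By]_\g=B([Bx,y]_\g+[x,By]_\g+[x,y]_\g)$. This shows that $(\g,B)$ is a Rota-Baxter Lie algebra of weight $1$. As a byproduct, the symmetric part of the identity even determines the Hessian of $\B$, namely $Q(x,y)=-\tfrac14 B([Bx,y]_\g-[x,By]_\g)$, although this is not needed for the statement.
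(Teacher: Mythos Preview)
Your argument is correct. The paper does not prove this theorem; it is quoted from \cite{GLS}, so there is no ``paper's own proof'' to compare against directly. Your extraction of the $st$-coefficient via BCH is carried out accurately, and the antisymmetrization trick that kills the Hessian term $Q(x,y)$ is the right idea: since the polarization of any quadratic map is symmetric, subtracting the identity from its $x\leftrightarrow y$ image removes $Q$ and leaves exactly the weight-$1$ Rota-Baxter relation.

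A slicker route that avoids ever meeting the second-order jet is worth knowing, and it is implicit in this paper's Proposition~\ref{doublegroup}. The Rota-Baxter identity \eqref{RB} is equivalent to the statement that
\[
\Phi: (G,\star)\longrightarrow G\times G,\qquad \Phi(g)=(g\B(g),\B(g))
\]
is a Lie group homomorphism into the direct product, where $g\star h=g\Ad_{\B(g)}h$. Its differential $\Phi_{*e}(x)=(x+Bx,Bx)$ is then a Lie algebra homomorphism, so its image is a subalgebra of $\g\oplus\g$. Writing $[\Phi_{*e}(x),\Phi_{*e}(y)]=\Phi_{*e}(z)$, the second component forces $Bz=[Bx,By]_\g$ while the difference of components forces $z=[Bx,y]_\g+[x,By]_\g+[x,y]_\g$, which is precisely the identity you want. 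This approach trades the BCH bookkeeping and the antisymmetrization step for a single structural observation; yours has the virtue of being completely self-contained and, as you note, of determining the Hessian of $\B$ along the way.
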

   Recall from Theorem \ref{converse} that given a quadratic Rota-Baxter Lie algebra $(\g,B,S)$, one has a linear isomorphism $\huaI_S: \g^*\to \g$ defined by $\langle \huaI_S^{-1}(x),y\rangle=S(x,y)$. Using $\huaI_S$, we pull-back the Lie algebra structure $[\cdot,\cdot]_B$ on $\g$ to define a Lie algebra structure on $\g^*$. Denote it by $\g^*_{S}$. Then $\huaI_S:\g^*_S\to \g_B$ becomes a Lie algebra isomorphism. Let $G^*$ be the simply connected Lie group whose Lie algebra is $\g^*_S$. Now we introduce the notion of quadratic  Rota-Baxter Lie groups, which will be used to characterize factorizable Poisson Lie groups.
\begin{defi}\label{qrblg}
A {\bf quadratic Rota-Baxter Lie group} is a triple   $(G,\B,S)$, where  $(G,\B)$ is a Rota-Baxter Lie group,   $S\in \otimes^2 \g^*$ such that $(\g,B:=\B_{*e},S)$ is a quadratic Rota-Baxter Lie algebra and the linear isomorphism $\huaI_S: \g^*_S\to \g$ can be lifted to a global diffeomorphism $J_S: G^*\to G$.
\end{defi}

A Poisson Lie group $G$ is called {\bf quasitriangular} (\cite{WX}) if its corresponding Lie bialgebra $(\g,\g^*)$ is quasitriangular. Let $R_+,R_-: G^*\to G$ be the lifted   Lie group homomorphisms of the Lie algebra homomorphisms $r_+,r_-:\g^*\to \g$, where $G^*$ is the simply connected dual Poisson Lie group of $G$. For simplicity, we always suppose $G$ is connected and simply connected.

Since $r_+^*=-r_-$ is a Lie algebra anti-homomorphism,  the map $R_+$ is an anti-Poisson map. In summary, both $R_+$ and $R_-$ are Lie group homomorphisms and anti-Poisson maps.

Define
\begin{eqnarray}\label{J}
J: G^*\to G,\quad u\mapsto R_+(u)R_-(u)^{-1},\quad u\in G^*.
\end{eqnarray}
A quasitriangular Poisson Lie group $G$ is called {\bf factorizable} if $J$ is a global diffeomorphism. Denote a factorizable Poisson Lie group  by $(G, J)$.

\begin{thm}\label{GFL}
Let $(G,J)$ be a factorizable Poisson Lie group, $(\g,\g^*_r)$ the corresponding factorizable Lie bialgebra with $I=r_+-r_-$. Then $(G,\B:=R_-\circ J^{-1},S_I)$ is a quadratic Rota-Baxter Lie group, where $S_I\in \otimes^2 \g^*$ is defined by $S_I(x,y):=\langle I^{-1}x, y\rangle$ for $x,y\in \g$.

   Moreover, $J: G^*\to G_\B$ is a Lie group isomorphism.

\end{thm}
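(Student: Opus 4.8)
The plan is to verify, in turn, the three requirements in Definition~\ref{qrblg} and to extract the ``Moreover'' claim as a byproduct of the Rota-Baxter identity.

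First I would differentiate $\B=R_-\circ J^{-1}$ at the identity. Since $R_\pm(e)=e$ and inversion differentiates to $-\id$, the Leibniz rule applied to $J(u)=R_+(u)R_-(u)^{-1}$ gives $J_{*e}=r_+-r_-=I$, hence $(J^{-1})_{*e}=I^{-1}$ and $\B_{*e}=r_-\circ I^{-1}$. With $\lambda=1$ this is exactly the operator $B$ of Theorem~\ref{FL}, so $(\g,B:=\B_{*e},S_I)$ is a quadratic Rota-Baxter Lie algebra of weight $1$ directly by that theorem; this settles the second requirement of Definition~\ref{qrblg}.

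The heart of the argument is checking the group-level Rota-Baxter identity~\eqref{RB}. Writing $u_i=J^{-1}(g_i)$, so that $g_i=R_+(u_i)R_-(u_i)^{-1}$ and $\B(g_i)=R_-(u_i)$, I would compute $g_1\Ad_{\B(g_1)}g_2=R_+(u_1)R_-(u_1)^{-1}R_-(u_1)R_+(u_2)R_-(u_2)^{-1}R_-(u_1)^{-1}$; cancellation of the middle pair leaves $R_+(u_1)R_+(u_2)R_-(u_2)^{-1}R_-(u_1)^{-1}$, which, since $R_+$ and $R_-$ are group homomorphisms, is precisely $J(u_1u_2)$. Applying $\B=R_-\circ J^{-1}$ then yields $\B(g_1\Ad_{\B(g_1)}g_2)=R_-(u_1u_2)=R_-(u_1)R_-(u_2)=\B(g_1)\B(g_2)$, which is \eqref{RB}. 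The same computation shows $J(u_1u_2)=g_1\star g_2=J(u_1)\star J(u_2)$ for the descendent multiplication, so $J\colon G^*\to G_\B$ is a group homomorphism; being a diffeomorphism by factorizability, it is a Lie group isomorphism, which is the ``Moreover'' claim.

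Finally, for the lifting condition I would identify the isomorphism attached to the quadratic Rota-Baxter Lie algebra: from $S_I(x,y)=\langle I^{-1}x,y\rangle$ one reads off $\huaI_{S_I}=I$. By Corollary~\ref{liebi} with $\lambda=1$, $I\colon\g^*_r\to\g_B$ is a Lie algebra isomorphism, so the bracket on $\g^*$ pulled back from $\g_B$ along $\huaI_{S_I}$ coincides with $[\cdot,\cdot]_r$; thus $\g^*_{S_I}=\g^*_r$ and the group $G^*$ of Definition~\ref{qrblg} is exactly the dual Poisson Lie group. The lift of $\huaI_{S_I}=I$ is then $J\colon G^*\to G$ itself, a global diffeomorphism satisfying $J_{*e}=I$ by the first step. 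I expect the main obstacle to be bookkeeping rather than conceptual: one must keep the factorization $g=R_+(u)R_-(u)^{-1}$ and the adjoint action in \eqref{RB} correctly aligned, and confirm that the group ``$G^*$'' with Lie algebra $\g^*_{S_I}$ appearing in Definition~\ref{qrblg} is genuinely the dual Poisson Lie group with Lie algebra $\g^*_r$, so that $J$ legitimately serves as the lift $J_S$.
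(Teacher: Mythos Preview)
Your proposal is correct and follows essentially the same route as the paper: the core computation showing $g_1\Ad_{\B(g_1)}g_2=J(u_1u_2)$ via cancellation of $R_-(u_1)^{-1}R_-(u_1)$ and the homomorphism property of $R_\pm$ is exactly the paper's identity~\eqref{Jhomo}, and both deduce~\eqref{RB} and the isomorphism $J\colon G^*\to G_\B$ from it. You are in fact slightly more careful than the paper in explicitly verifying the lifting condition of Definition~\ref{qrblg} (identifying $\huaI_{S_I}=I$, $\g^*_{S_I}=\g^*_r$, and hence $J$ as the required $J_S$), which the paper subsumes in the sentence ``follows from Theorem~\ref{FL}''.
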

\begin{proof}
 Since $R_+,R_-:G^*\to G$ are Lie group homomorphisms, we have
\begin{eqnarray}\label{Jhomo}
 J(J^{-1}g J^{-1} h)&=&R_+(J^{-1} g J^{-1} h)R_-(J^{-1} g J^{-1} h)^{-1} \\ &=&\nonumber
R_+(J^{-1} g)R_+(J^{-1}h) R_-(J^{-1} h)^{-1} R_-(J^{-1} g)^{-1}
\\ &=& \nonumber J(J^{-1} g)  R_-(J^{-1} g) J(J^{-1} h)  R_-(J^{-1} h) R_-(J^{-1} h)^{-1} R_-(J^{-1} g)^{-1}
\\ \nonumber&=& g R_-(J^{-1} g) h R_-(J^{-1} g)^{-1}.
\end{eqnarray}
Then we have
  \begin{eqnarray*}
   \B(g\B(g)h\B(g)^{-1})&=&R_-J^{-1}(gR_-(J^{-1} g)h R_-(J^{-1}g)^{-1})\\
   &=&R_-(J^{-1}gJ^{-1} h)\\
   &=&R_-(J^{-1}g)R_-(J^{-1}h)\\
   &=&\B(g)\B(h).
  \end{eqnarray*}
 Therefore, $\B$ is Rota-Baxter operator  on $G$. The fact that $(G,\B,S_I)$ is a quadratic Rota-Baxter Lie group (Definition \ref{qrblg})    follows from Theorem \ref{FL}.

Moreover, for any $u,v\in G^*$, setting $g=Ju$ and $h=Jv$, by \eqref{Jhomo} we get
\[J(uv)=Ju \B(Ju) Jv \B(Ju)^{-1}=(Ju)\star (Jv).\]
Since $J$ is a global diffeomorphism, it is a Lie group isomorphism from the Lie group $G^*$ to the descendent Lie group $G_\B$.
\end{proof}

By  \cite[Proposition 2.4]{GLS}, if $\B:G\to G$ is a Rota-Baxter operator on the Lie group $G$, then $\widetilde{\B}:G\to G$ defined by $\widetilde{\B}(g)=g^{-1}\B(g^{-1})$ is also a Rota-Baxter operator on $G$.
\begin{cor}
Let $(G,J)$ be a factorizable Poisson Lie group. Then
 the map $\widetilde{\B}:G\to G$  given by
 $$ \widetilde{\B}(g) = R_+(J^{-1}(g^{-1}))$$ is a Rota-Baxter operator  on $G$.
\end{cor}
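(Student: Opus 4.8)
The plan is to recognize $\widetilde{\B}$ as the image of the Rota-Baxter operator $\B=R_-\circ J^{-1}$ (produced in Theorem \ref{GFL}) under the standard involution on Rota-Baxter operators recalled immediately before the statement, and then to match the two formulas using the factorization built into the definition \eqref{J} of $J$.

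First I would invoke Theorem \ref{GFL} to record that $\B:=R_-\circ J^{-1}:G\to G$ is a Rota-Baxter operator on $G$. By \cite[Proposition 2.4]{GLS}, the map $g\mapsto g^{-1}\B(g^{-1})$ is then automatically a Rota-Baxter operator on $G$ as well, so no fresh Rota-Baxter identity has to be checked by hand. It therefore suffices to identify this map with the explicit assignment $g\mapsto R_+(J^{-1}(g^{-1}))$ given in the statement.

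For that identification I would unwind the definition \eqref{J}. Setting $u:=J^{-1}(g^{-1})\in G^*$, so that $J(u)=g^{-1}$, the definition $J(u)=R_+(u)R_-(u)^{-1}$ reads $R_+(u)R_-(u)^{-1}=g^{-1}$; multiplying on the right by $R_-(u)$ yields the key identity $g^{-1}R_-(u)=R_+(u)$. Since $\B(g^{-1})=R_-(J^{-1}(g^{-1}))=R_-(u)$, this gives
\[\widetilde{\B}(g)=g^{-1}\B(g^{-1})=g^{-1}R_-(u)=R_+(u)=R_+(J^{-1}(g^{-1})),\]
which is precisely the claimed formula, and so $\widetilde{\B}$ is a Rota-Baxter operator on $G$.

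I do not expect a genuine obstacle here: the content is just the known stability of Rota-Baxter operators under $\B\mapsto\bigl(g\mapsto g^{-1}\B(g^{-1})\bigr)$, transcribed into the factorized language $J=R_+R_-^{-1}$. The only point demanding a little care is the bookkeeping of the inverses and the order of multiplication in the (generally noncommutative) group $G$, which must be tracked faithfully when passing between $J(u)$ and $g^{-1}$.
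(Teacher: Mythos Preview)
Your proposal is correct and follows essentially the same route as the paper: both start from $\B=R_-\circ J^{-1}$ given by Theorem \ref{GFL}, apply the involution $\B\mapsto\bigl(g\mapsto g^{-1}\B(g^{-1})\bigr)$ from \cite[Proposition 2.4]{GLS}, and then use $J(u)=R_+(u)R_-(u)^{-1}$ to rewrite $g^{-1}R_-(J^{-1}(g^{-1}))$ as $R_+(J^{-1}(g^{-1}))$. Your substitution $u=J^{-1}(g^{-1})$ makes the bookkeeping slightly tidier, but the argument is the same.
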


\begin{proof}
 We have
\begin{eqnarray*}
\widetilde{\B}(g)&=&g^{-1}R_-(J^{-1}(g^{-1}))\\
&=&J(J^{-1}(g^{-1}))R_-(J^{-1}(g^{-1}))\\
&=&R_+(J^{-1}(g^{-1}))R_-(J^{-1}(g^{-1}))^{-1}R_-(J^{-1}(g^{-1}))\\
&=&R_+(J^{-1}(g^{-1})).
\end{eqnarray*}
By Theorem \ref{GFL}, $\B$ is a Rota-Baxter operator. Therefore $\widetilde{\B}:G\to G$ is also a Rota-Baxter operator.
\end{proof}
\begin{cor}
Let $(G,J)$ be a factorizable Poisson Lie group, and $\B$ the induced Rota-Baxter operator on $G$. Then the descendent Lie group $G_{\B}$   is a Poisson Lie group with the Poisson structure $\pi_{G_{\B}}=J_*(\pi_{G^*})$. Its infinitesimal Lie bialgebra is $\big(\g_B,(\g^*,[\cdot,\cdot]_{I})\big)$ given in Corollary \ref{liebi}.

\end{cor}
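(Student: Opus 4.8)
The plan is to obtain the Poisson Lie group structure on $G_\B$ by transport of structure along the isomorphism $J$, and then to read off its infinitesimal Lie bialgebra using Corollary \ref{liebi}. First I would invoke Theorem \ref{GFL}, which supplies a Lie group isomorphism $J\colon G^*\to G_\B$, where $G^*$ is the dual Poisson Lie group of $G$ and carries its canonical Poisson structure $\pi_{G^*}$. Since $J$ is a global diffeomorphism, the pushforward $\pi_{G_\B}:=J_*(\pi_{G^*})$ is a well-defined smooth bivector field on $G_\B$, and by construction $J\colon (G^*,\pi_{G^*})\to (G_\B,\pi_{G_\B})$ is a Poisson diffeomorphism.

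Next I would verify that $\pi_{G_\B}$ is multiplicative, i.e. that $(G_\B,\pi_{G_\B})$ is a Poisson Lie group. This is where the bulk of the work lies, and I expect it to be the main obstacle, though it amounts to bookkeeping. The point is that the multiplication $m_{G^*}$ of $G^*$ is a Poisson map, and $J$ intertwines $m_{G^*}$ with the descendent multiplication $\star$ of $G_\B$; since $J$ and $J\times J$ are Poisson maps, the map $\star$ is a Poisson map as well, so $\pi_{G_\B}$ is multiplicative. Concretely, this is the standard fact that a Lie group isomorphism which is simultaneously a Poisson diffeomorphism carries one Poisson Lie group structure to another, and it upgrades $J$ to an isomorphism of Poisson Lie groups.

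Finally I would differentiate. By functoriality of the correspondence between Poisson Lie groups and Lie bialgebras, the Poisson Lie group isomorphism $J$ induces a Lie bialgebra isomorphism $J_{*e}$ from the infinitesimal Lie bialgebra of $G^*$ onto that of $G_\B$. The infinitesimal Lie bialgebra of the dual Poisson Lie group $G^*$ is the dual Lie bialgebra $(\g^*_r,\g)$ of $(\g,\g^*_r)$. To identify $J_{*e}$, I would use that $R_+,R_-\colon G^*\to G$ integrate $r_+,r_-$, so $(R_+)_{*e}=r_+$ and $(R_-)_{*e}=r_-$; differentiating $J(u)=R_+(u)R_-(u)^{-1}$ from \eqref{J} at $e$, where $R_+(e)=R_-(e)=e$, gives $J_{*e}=r_+-r_-=I$ (recall $\lambda=1$ in the Lie group setting). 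By Corollary \ref{liebi}, $I\colon (\g^*_r,\g)\to \big(\g_B,(\g^*,[\cdot,\cdot]_I)\big)$ is precisely a Lie bialgebra isomorphism, so transporting $(\g^*_r,\g)$ along $J_{*e}=I$ yields exactly $\big(\g_B,(\g^*,[\cdot,\cdot]_I)\big)$. This is therefore the infinitesimal Lie bialgebra of $G_\B$, completing the proof.
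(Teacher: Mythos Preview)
Your argument is correct and is exactly the intended one: the paper states this corollary without proof, treating it as an immediate consequence of Theorem~\ref{GFL} (which provides the Lie group isomorphism $J\colon G^*\to G_\B$) and Corollary~\ref{liebi} (which identifies the transported Lie bialgebra), and you have filled in precisely those details, including the computation $J_{*e}=r_+-r_-=I$ (with $\lambda=1$ in the Lie group setting).
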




At the end of this section, we show that a quadratic Rota-Baxter Lie group gives rise to a factorizable Poisson Lie group.

\begin{thm}\label{thm:rbg-fac}
Let $(G,\B,S)$ be a quadratic Rota-Baxter Lie group and $J_S:G^*\to G$ the induced diffeomorphism.
Then $G$ is a factorizable Poisson Lie group with
\[R_-:=\B\circ J: G^*\to G,\qquad R_+:=\B_+\circ J:G^*\to G,\]
where $\B_+(g)=g\B(g)$.
\end{thm}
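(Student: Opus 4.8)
The plan is to run, at the Lie group level, the converse construction that mirrors Theorem \ref{converse}, exactly as Theorem \ref{GFL} is the group-level analogue of Theorem \ref{FL}. First I would differentiate the given data: by the theorem of \cite{GLS} recalled above, $(\g,B:=\B_{*e})$ is a Rota-Baxter Lie algebra of weight $1$, and together with $S$ it is a quadratic Rota-Baxter Lie algebra of weight $1$. Applying Theorem \ref{converse} with $\lambda=1$ then produces a factorizable Lie bialgebra $(\g,\g_r^*)$ whose $r$-matrix satisfies
\[
r_-=B\circ \huaI_S,\qquad r_+=(\id+B)\circ \huaI_S,\qquad \huaI_S=r_+-r_-.
\]
Since $G$ is connected and simply connected, this Lie bialgebra integrates to a unique Poisson Lie group structure on $G$, which is quasitriangular because $(\g,\g_r^*)$ is quasitriangular. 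It then remains only to exhibit the factorization map of \eqref{J} and show it is a global diffeomorphism.

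Next I would reconcile the two candidate duals. By \eqref{im} (with $\lambda=1$) the map $\huaI_S$ is a Lie algebra isomorphism $\g^*_r\to\g_B$, so the pulled-back bracket $\g^*_S$ coincides with $[\cdot,\cdot]_r$ and the group $G^*$ of Definition \ref{qrblg} is precisely the dual Poisson Lie group. Hence the given diffeomorphism $J:=J_S\colon G^*\to G$ is the integration of $\huaI_S$, that is, a Lie group isomorphism $J\colon G^*\to G_\B$ satisfying $J(uv)=J(u)\star J(v)$. The key observation is that both $\B$ and $\B_+$ are Lie group homomorphisms from the descendent group $G_\B$ to $G$: for $\B$ this is \eqref{eq:Bhomo}, while for $\B_+$ one computes, using $g\star h=g\B(g)h\B(g)^{-1}$ and $\B(g\star h)=\B(g)\B(h)$,
\[
\B_+(g\star h)=(g\star h)\,\B(g\star h)=g\B(g)h\B(g)^{-1}\B(g)\B(h)=g\B(g)\,h\B(h)=\B_+(g)\B_+(h).
\]
Consequently $R_-=\B\circ J$ and $R_+=\B_+\circ J$ are Lie group homomorphisms $G^*\to G$, and differentiating at the identity gives $(R_-)_{*e}=B\circ\huaI_S=r_-$ and $(R_+)_{*e}=(\id+B)\circ\huaI_S=r_+$. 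By simple connectedness of $G^*$ these are exactly the lifts of $r_\pm$ that appear in the definition of a factorizable Poisson Lie group.

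Finally I would evaluate the factorization map directly: for $u\in G^*$, writing $g=J(u)$,
\[
R_+(u)R_-(u)^{-1}=\B_+(g)\,\B(g)^{-1}=g\B(g)\B(g)^{-1}=g=J(u),
\]
so the factorization map of \eqref{J} coincides with the given $J=J_S$, which is a global diffeomorphism by hypothesis. Therefore $G$ is factorizable, completing the argument.

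I expect the genuinely delicate points to be bookkeeping rather than hard analysis: carrying the weight-$1$ normalization consistently out of Theorem \ref{converse}, verifying that the explicitly defined $R_\pm$ really are the canonical lifts of $r_\pm$ (which is precisely why simple connectedness of $G^*$ together with the homomorphism property of $\B_+$ is needed), and invoking the integration of a Lie bialgebra to a Poisson Lie group on the connected, simply connected $G$. The (anti-)Poisson property of $R_\pm$ is automatic once the quasitriangular structure is in place, as noted before \eqref{J}, so it need not be re-verified.
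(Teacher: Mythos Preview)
Your proof is correct and follows essentially the same route as the paper: differentiate to a quadratic Rota-Baxter Lie algebra, apply Theorem~\ref{converse} to obtain the factorizable Lie bialgebra, and integrate. Your version is in fact more explicit than the paper's, which merely records $(R_\pm)_{*e}=r_\pm$ and $(J_S)_{*e}=\huaI_S$ and then invokes integration; you additionally verify that $\B_+$ is a group homomorphism $G_\B\to G$, that the proposed $R_\pm$ are therefore the canonical lifts of $r_\pm$, and that $R_+(u)R_-(u)^{-1}=J_S(u)$ directly, which cleanly pins down why the factorization map of \eqref{J} is exactly the hypothesized diffeomorphism $J_S$.
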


\begin{proof}
By definition, the infinitesimal $(\g,B:=\B_{*e},S)$ is a quadratic Rota-Baxter  Lie algebra of weight 1. By Theorem \ref{converse}, we obtain that $(\g,\g^*_r)$ is a factorizable Lie bialgebra. Note that
\begin{eqnarray*}
      (R_+)_{*e}&=&(B+\id)\circ \huaI_S=r_+,\\
       (R_-)_{*e}&=&B\circ \huaI_S=r_-,\\
      (J_S)_{*e}&=&\huaI_S.
   \end{eqnarray*}
    So integrating this factorizable Lie bialgebra (\cite{Lu2}), we get a factorizable Poisson Lie group structure on $G$.
\end{proof}

As the Lie group level of Example \ref{doubleex}, we have
\begin{ex}
With notations in Example \ref{doubleex}, let $G$ and $G^*$ be the simply connected Lie groups of $\g$ and $\g^*$. Suppose that $G$ is complete. Then the Lie group integrating the Lie algebra  $\mathfrak{d}$ is $D=G\bowtie G^*$. Denote by $D^*$ the simply connected dual Poisson Lie group. In fact, $D^*=G\times \overline{G^*}$, the direct Lie group of $G$ and $\overline{G^*}$, where $\overline{G^*}$ is the manifold $G^*$ with the group structure being $s\cdot_{\overline{G^*}} t=ts$ for $s,t\in G^*$.
Note that $r_+,r_-$ lift naturally to $R_+,R_-: D^*\to D$, which is given by
\[R_+(g,u)=g,\qquad R_-(g,u)=u^{-1},\qquad \forall g\in G,u\in G^*.\]
So $D$ is a factorizable Poisson Lie group.  The induced Rota-Baxter operator $\B $ on the Lie group $D$ is given by $$\B(g,u)=u^{-1}.$$
\end{ex}

\subsection{Rota-Baxter Lie groups and matched pairs of Lie groups}
In this subsection, we obtain a matched pair of Lie groups from a Rota-Baxter Lie group.  Matched pairs of Lie groups, also called double Lie groups, were explored in \cite{Lu2, Majid}.  For matched pair of groups, we refer the readers to \cite{18}. 

A pair of Lie groups $(P,Q)$ is called a {\bf matched pair of Lie groups} (\cite{Majid}) if there is a left action of $P$ on $Q$ and a right action of $Q$ on $P$:
\[P\times Q\to Q,\qquad (p,q)\mapsto  p\triangleright q;\qquad P\times Q\to P\qquad (p,q)\mapsto p\triangleleft q,\]
such that
\begin{eqnarray}
\label{mpg1} p\triangleright (q_1q_2)&=&(p\triangleright q_1)\big((p\triangleleft q_1)\triangleright q_2\big);\\
\label{mpg2} (p_1p_2)\triangleleft q&=&\big(p_1\triangleleft (p_2\triangleright q)\big)   (p_2\triangleleft q).
\end{eqnarray}
Denote a Lie group matched pair by $(P,Q;\triangleright,\triangleleft)$.

The following equivalent characterization of matched pairs of Lie groups is well-known.

\begin{pro}\label{pro:mpg}
  Let $(P,Q;\triangleright,\triangleleft)$ be a matched pair of Lie groups. Then there is a Lie group structure $\cdot_{\bowtie}$ on $Q\times P$ defined by
\[(q_1,p_1)\cdot_{\bowtie}(q_2,p_2)=\big(q_1(p_1\triangleright q_2),(p_1\triangleleft q_2) p_2\big).\]

Conversely, if  $(Q\times P,\cdot)$ is a Lie group such that $Q$ and $P$ are Lie subgroups, then $(P,Q;\triangleright,\triangleleft)$ is a matched pair of Lie groups, where the  left action $\triangleright: P\times Q\to Q$ of $P$ on $Q$ and the right action $\triangleleft: P\times Q\to P$ of $Q$ on $P$ are determined by
\[(1,p)\cdot(q,1)=(p\triangleright q,p\triangleleft q).\]
\end{pro}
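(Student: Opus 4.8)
The plan is to treat the two directions separately; both reduce to a careful interplay between the action axioms and the compatibility conditions \eqref{mpg1}--\eqref{mpg2}. For the first direction I would define $\cdot_{\bowtie}$ on $Q\times P$ by the stated formula and verify the group axioms. Smoothness of $\cdot_{\bowtie}$ is immediate from smoothness of $\triangleright$, $\triangleleft$ and the group operations of $P$ and $Q$. Before checking associativity I would record the normalizations $p\triangleright 1_Q=1_Q$ and $1_P\triangleleft q=1_P$: putting $q_1=q_2=1_Q$ in \eqref{mpg1} and using that $\triangleleft$ is a right action gives $p\triangleright 1_Q=(p\triangleright 1_Q)(p\triangleright 1_Q)$, whence $p\triangleright 1_Q=1_Q$, and symmetrically $1_P\triangleleft q=1_P$ from \eqref{mpg2}. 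These identities show at once that $(1_Q,1_P)$ is a two-sided unit. For inverses, solving $(q,p)\cdot_{\bowtie}(q',p')=(1_Q,1_P)$ forces $q'=p^{-1}\triangleright q^{-1}$ and $p'=(p\triangleleft q')^{-1}$; this is an explicit smooth formula, and a direct check (or the standard monoid argument that one-sided inverses in an associative unital structure are two-sided) shows $Q\times P$ is a Lie group.

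The heart of the first direction is associativity, which I would prove by expanding $\big((q_1,p_1)\cdot_{\bowtie}(q_2,p_2)\big)\cdot_{\bowtie}(q_3,p_3)$ and $(q_1,p_1)\cdot_{\bowtie}\big((q_2,p_2)\cdot_{\bowtie}(q_3,p_3)\big)$ and comparing $Q$-components and $P$-components. Writing $\tilde q_3=p_2\triangleright q_3$, the left-associated $Q$-component is $q_1(p_1\triangleright q_2)\big(\big((p_1\triangleleft q_2)p_2\big)\triangleright q_3\big)$, which the left-action axiom rewrites as $q_1(p_1\triangleright q_2)\big((p_1\triangleleft q_2)\triangleright\tilde q_3\big)$; the right-associated $Q$-component is $q_1\big(p_1\triangleright(q_2\tilde q_3)\big)$, which \eqref{mpg1} rewrites to the same value. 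Dually, the $P$-components are matched by expanding $\big((p_1\triangleleft q_2)p_2\big)\triangleleft q_3$ via \eqref{mpg2} and then contracting $(p_1\triangleleft q_2)\triangleleft\tilde q_3=p_1\triangleleft(q_2\tilde q_3)$ using that $\triangleleft$ is a right action. Thus \eqref{mpg1} governs the $Q$-slot and \eqref{mpg2} the $P$-slot.

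For the converse I would exploit the factorization encoded in the hypothesis: identifying $Q$ with $Q\times\{1_P\}$ and $P$ with $\{1_Q\}\times P$, every $g\in(Q\times P,\cdot)$ has a \emph{unique} expression $\bar q\,\bar p$ with $\bar q\in Q$, $\bar p\in P$, and this is precisely the manifold coordinate. I would state this uniqueness explicitly, then \emph{define} $p\triangleright q$ and $p\triangleleft q$ as the two factors of $(1_Q,p)\cdot(q,1_P)$; smoothness of $\triangleright$, $\triangleleft$ follows from smoothness of the two projections. All four identities then fall out of associativity in $G$ by factoring a triple product in two ways and invoking uniqueness: factoring $(1_Q,p_1p_2)\cdot(q,1_P)=(1_Q,p_1)\cdot\big((1_Q,p_2)\cdot(q,1_P)\big)$ yields $(p_1p_2)\triangleright q=p_1\triangleright(p_2\triangleright q)$ (the left-action axiom) and $(p_1p_2)\triangleleft q=\big(p_1\triangleleft(p_2\triangleright q)\big)(p_2\triangleleft q)$, which is \eqref{mpg2}; factoring $(1_Q,p)\cdot(q_1q_2,1_P)=\big((1_Q,p)\cdot(q_1,1_P)\big)\cdot(q_2,1_P)$ yields $p\triangleleft(q_1q_2)=(p\triangleleft q_1)\triangleleft q_2$ (the right-action axiom) and $p\triangleright(q_1q_2)=(p\triangleright q_1)\big((p\triangleleft q_1)\triangleright q_2\big)$, which is \eqref{mpg1}. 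The normalizations $1_P\triangleright q=q$ and $p\triangleleft 1_Q=p$ come from $(1_Q,1_P)\cdot(q,1_P)=(q,1_P)$ and $(1_Q,p)\cdot(1_Q,1_P)=(1_Q,p)$.

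The main obstacle is organizational rather than conceptual: the associativity computation must thread \eqref{mpg1} through the $Q$-slot and \eqref{mpg2} through the $P$-slot while using the action axioms to unwind products $\big(p_1p_2\big)\triangleright(\cdot)$ and $(\cdot)\triangleleft(q_1q_2)$, and a single misplaced factor breaks the match. The converse is cleaner but rests entirely on the uniqueness of the factorization $g=\bar q\,\bar p$, which is what lets one read off two identities from one associativity relation. Since this is the classical smooth Zappa--Sz\'ep (double-group) construction, I expect no genuinely hard step beyond this bookkeeping.
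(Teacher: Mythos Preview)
Your proof is correct and follows the standard Zappa--Sz\'ep argument. Note, however, that the paper does not actually prove this proposition: it is introduced with the phrase ``the following equivalent characterization of matched pairs of Lie groups is well-known'' and stated without proof, with references to \cite{Lu2,Majid,18} given in the surrounding text. So there is nothing to compare against; you have supplied the details the paper omits.

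One small point worth making explicit in the converse direction: you assert that every $g=(q,p)$ factors as $(q,1_P)\cdot(1_Q,p)$, i.e.\ that the manifold coordinate \emph{is} the group factorization. This is implicitly part of the hypothesis (it is how one reads ``$Q$ and $P$ are Lie subgroups of the Lie group $Q\times P$''), but as written the proposition does not quite say it, and without it the definition of $\triangleright,\triangleleft$ via $(1_Q,p)\cdot(q,1_P)=(p\triangleright q,\,p\triangleleft q)$ would not be well-posed. Since this is a standard convention and the paper treats the result as folklore, it is not a defect in your argument, just something to flag if you were writing this up carefully.
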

 Denote the Lie group $(Q\times P,\cdot_{\bowtie})$ by $Q\bowtie P$.

Let $G$ be a Lie group. Denote by $(G\times G,\cdot_{G\times G})$  the direct product Lie group
with the group structure  given by
\[(g_1,g_2)\cdot_{G\times G} (h_1,h_2)=(g_1h_1,g_2h_2).\]

\begin{pro}\label{doublegroup}
Let $(G,\B)$ be a Rota-Baxter Lie group. Then there is a Lie group structure $\cdot_D$ on   $G\times G$ determined by
\begin{eqnarray*}
(1,g)\cdot_D (1,h)&=&(1, gh);\\
(s,1)\cdot_D (t,1)&=&(s\star t,1)=(s\B(s)t\B(s)^{-1},1);\\
(s,1)\cdot_D (1,g)&=&(s,g);\\
(1,g)\cdot_D (s,1)&=&(gsg^{-1}, \B(gsg^{-1})^{-1}g\B(s)),
\end{eqnarray*}
where $s,t$ are elements in the first component, and $g,h$ are elements in the second component.  Moreover, there is a Lie group isomorphism $\Phi$ from $(G\times G,\cdot_D)$ to the direct product Lie group $(G\times G,\cdot_{G\times G})$  defined by
\begin{eqnarray}\label{Phi}
\Phi (s,g)=(s \B(s)g,\B(s)g).
\end{eqnarray}
\end{pro}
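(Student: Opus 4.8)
The plan is to follow the strategy used for the Lie algebra analogue in Proposition \ref{doublelie}: instead of checking the group axioms for $\cdot_D$ by hand, I would realize $\cdot_D$ as the pull-back of the direct product structure $\cdot_{G\times G}$ along $\Phi$. The first step is to check that $\Phi$ is a diffeomorphism by exhibiting its inverse. Writing $(a,b)=\Phi(s,g)=(s\B(s)g,\B(s)g)$ and solving, one gets $s=ab^{-1}$ and $g=\B(ab^{-1})^{-1}b$, so that
\[
\Phi^{-1}(a,b)=\big(ab^{-1},\,\B(ab^{-1})^{-1}b\big).
\]
Since $\B$ is smooth, both $\Phi$ and $\Phi^{-1}$ are smooth, hence $\Phi$ is a diffeomorphism of $G\times G$.

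Next I would simply \emph{define} $\cdot_D$ to be the transported structure,
\[
e_1\cdot_D e_2:=\Phi^{-1}\big(\Phi(e_1)\cdot_{G\times G}\Phi(e_2)\big),\qquad e_1,e_2\in G\times G,
\]
so that $\cdot_D$ is automatically a Lie group multiplication and $\Phi$ is automatically a Lie group isomorphism onto the direct product group; this disposes of both assertions at once and leaves only the identification of this structure with the four displayed formulas. Two preliminary facts feed into that identification. First, setting $g_1=g_2=1$ in \eqref{RB} gives $\B(1)\B(1)=\B(1)$, whence $\B(1)=1$, so that $\Phi(1,g)=(g,g)$ and $\Phi(s,1)=(s\B(s),\B(s))$. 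Second, I would invoke the homomorphism property \eqref{eq:Bhomo}, namely $\B(s\star t)=\B(s)\B(t)$ with $s\star t=s\B(s)t\B(s)^{-1}$. With these, each formula is a short computation: for the second line one has $\Phi(s,1)\cdot_{G\times G}\Phi(t,1)=(s\B(s)t\B(t),\B(s)\B(t))$, and applying $\Phi^{-1}$ yields first coordinate $s\B(s)t\B(s)^{-1}=s\star t$ and second coordinate $\B(s\star t)^{-1}\B(s)\B(t)=1$, where the vanishing of the second coordinate uses precisely $\B(s)\B(t)=\B(s\star t)$.

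I expect the only delicate-looking point to be the mixed term $(1,g)\cdot_D(s,1)$, whose asymmetric output $(gsg^{-1},\B(gsg^{-1})^{-1}g\B(s))$ appears unmotivated. In fact it falls out immediately: $\Phi(1,g)\cdot_{G\times G}\Phi(s,1)=(gs\B(s),g\B(s))$, and reading off $\Phi^{-1}$ gives first coordinate $gs\B(s)\B(s)^{-1}g^{-1}=gsg^{-1}$ and second coordinate exactly $\B(gsg^{-1})^{-1}g\B(s)$ by the formula for $\Phi^{-1}$; the remaining cases $(1,g)\cdot_D(1,h)$ and $(s,1)\cdot_D(1,g)$ are even simpler. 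There is no genuine obstacle, since the group structure and the isomorphism are produced simultaneously by transport of structure and all the content lies in the correct choice of $\Phi$ together with careful bookkeeping of the group manipulations. Finally, the word \emph{determined} in the statement is justified because every element factors as $(s,g)=(s,1)\cdot_D(1,g)$ by the third formula, so the four identities together with associativity reduce any product to the listed cases.
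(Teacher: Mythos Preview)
Your proposal is correct and follows essentially the same approach as the paper: transport the direct product structure along $\Phi$, exhibit $\Phi^{-1}(a,b)=(ab^{-1},\B(ab^{-1})^{-1}b)$, and verify the four displayed formulas case by case using the homomorphism property $\B(s\star t)=\B(s)\B(t)$. Your explicit observation that $\B(1)=1$ and your closing remark justifying the word ``determined'' via the factorization $(s,g)=(s,1)\cdot_D(1,g)$ are small expository additions not spelled out in the paper but entirely in its spirit.
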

\begin{proof}It is simple to see that $\Phi$ is a diffeomorphism whose inverse is
 given by
\[\Phi^{-1} (g,h)= (gh^{-1},\B(gh^{-1})^{-1}h).\]
Then we claim that the product $\cdot_D$ on $G\times G$ is the pull-back of the direct product Lie group structure on $G\times G$ by $\Phi$, i.e.,
\[e_1\cdot_D e_2=\Phi^{-1}\big(\Phi(e_1)\cdot_{G\times G} \Phi(e_2)\big),\quad \forall e_1,e_2\in G\times G.\]
  First, it is obvious that $(1,g)\cdot_D (1,h)=\Phi^{-1}\big(\Phi(1,g)\cdot_{G\times G} \Phi(1,h)\big)$. By \eqref{eq:Bhomo}, we have
\begin{eqnarray*}
\Phi^{-1}\big(\Phi(s,1)\cdot_{G\times G}\Phi(t,1)\big)&=&\Phi^{-1}\big((s\B(s), \B(s))\cdot_{G\times G}(t\B(t), \B(t))\big)\\ &=&\Phi^{-1}
(s\B(s)t\B(s)^{-1}\B(s)\B(t),\B(s)\B(t))\\ &=&\Phi^{-1}(s\star t \B(s\star t), \B(s\star t))\\ &=&(s,1)\cdot_D (t,1).
\end{eqnarray*}
Moreover, by definition, we have
\begin{eqnarray*}
\Phi^{-1}\big(\Phi(s,1)\cdot_{G\times G}\Phi(1,g)\big)&=&\Phi^{-1}\big((s\B(s),\B(s))\cdot_{G\times G}(g,g)\big)\\ &=&\Phi^{-1}(s\B(s)g,\B(s)g)=(s,g)\\
&=&(s,1)\cdot_D (1,g),
\end{eqnarray*}
and
\begin{eqnarray*}
\Phi^{-1}\big(\Phi(1,g)\cdot_{G\times G} \Phi(s,1)\big)=\Phi^{-1}(gs\B(s),g\B(s))=(gsg^{-1},\B(gsg^{-1})^{-1}g\B(s))=
(1,g)\cdot_D (s,1).
\end{eqnarray*}
Therefore, the Lie group structure $\cdot_D$ on $G\times G$ is the pull-back of the direct product Lie group structure on $G\times G$ by $\Phi$, and $\Phi$ is a Lie group isomorphism.
\end{proof}

By using the associativity, the group structure $\cdot_D$   is given by
\begin{eqnarray}\label{dg}
(s,g)\cdot_D(t,h)&=& (s,1)\cdot_D (1,g)\cdot_D (t,1)\cdot_D (1,h)\\ &=&\nonumber
(s,1)\cdot_D(gtg^{-1}, \B(gtg^{-1})^{-1}g \B(t))\cdot_D(1,h)\\ &=&\nonumber (s,1)\cdot_D(gtg^{-1},1)\cdot_D(1,\B(gtg^{-1})^{-1}g \B(t))\cdot_D(1,h)\\ \nonumber&=&(sgtg^{-1}, \B(gtg^{-1})^{-1}g\B(t)h).
\end{eqnarray}

\begin{thm}\label{thm:mp}
Let $(G,\B)$ be a Rota-Baxter Lie group. Then we have a matched pair of Lie groups $(G,G_{\B};\triangleright, \triangleleft)$,
where the actions are given by
\[g\triangleright s=gsg^{-1},\quad g\triangleleft s=\B(gsg^{-1})^{-1}g\B(s),\quad \forall g\in G,s\in G_\B.\]
Moreover, the corresponding Lie  group   $G_\B\bowtie G$ is exactly the Lie group
$(G\times G, \cdot_D)$ given in Proposition \ref{doublegroup}.
\end{thm}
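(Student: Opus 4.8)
The plan is to deduce the theorem directly from Proposition~\ref{doublegroup} together with the converse direction of Proposition~\ref{pro:mpg}, so that no new analytic verification is required: all of the work establishing that $\cdot_D$ is a genuine Lie group multiplication was already carried out in Proposition~\ref{doublegroup} via the isomorphism $\Phi$, and here one only needs to recognize the two factors as Lie subgroups and read off the resulting actions.

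First I would note that in $(G\times G,\cdot_D)$ the first factor $\{(s,1):s\in G\}$ is a Lie subgroup, since $(s,1)\cdot_D(t,1)=(s\star t,1)$, and that as a Lie group it is precisely the descendent Lie group $G_\B$; likewise the second factor $\{(1,g):g\in G\}$ is a Lie subgroup, since $(1,g)\cdot_D(1,h)=(1,gh)$, and it is precisely $G$. Thus $(G\times G,\cdot_D)$ realizes $G\times G$ as a product of the Lie subgroups $G_\B$ (first slot) and $G$ (second slot). Applying the converse part of Proposition~\ref{pro:mpg} with $Q=G_\B$ and $P=G$ then immediately yields a matched pair of Lie groups $(G,G_\B;\triangleright,\triangleleft)$, whose actions are determined by the crossing relation $(1,g)\cdot_D(s,1)=(g\triangleright s,\,g\triangleleft s)$. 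Comparing components with the explicit formula $(1,g)\cdot_D(s,1)=(gsg^{-1},\,\B(gsg^{-1})^{-1}g\B(s))$ from Proposition~\ref{doublegroup} gives exactly $g\triangleright s=gsg^{-1}$ and $g\triangleleft s=\B(gsg^{-1})^{-1}g\B(s)$, as claimed. The final assertion $G_\B\bowtie G=(G\times G,\cdot_D)$ is then automatic, since by construction the double of the matched pair reassembles the very Lie group from which the actions were extracted.

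The only real obstacle is bookkeeping: one must keep track of which slot plays the role of $Q$ and which plays $P$ in Proposition~\ref{pro:mpg}, and be careful that the actions come from the nontrivial crossing $(1,g)\cdot_D(s,1)$ rather than from $(s,1)\cdot_D(1,g)$, which by the third relation of Proposition~\ref{doublegroup} merely assembles $(s,g)$ with no interaction. Once the factors are correctly matched, nothing further needs checking, because Proposition~\ref{pro:mpg} guarantees that the action axioms and the compatibility conditions~\eqref{mpg1}--\eqref{mpg2} hold automatically. This is the Lie group analogue of the purely algebraic argument used for Theorem~\ref{thm:mpB}.
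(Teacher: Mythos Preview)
Your proposal is correct and follows essentially the same route as the paper's own proof: both arguments reduce Theorem~\ref{thm:mp} to Proposition~\ref{doublegroup} and the converse direction of Proposition~\ref{pro:mpg}, identifying $G_\B$ and $G$ as Lie subgroups of $(G\times G,\cdot_D)$ and reading off $\triangleright$ and $\triangleleft$ from the crossing relation $(1,g)\cdot_D(s,1)$. The paper phrases this via the general product formula~\eqref{dg}, whereas you invoke only the crossing relation itself, but the content is the same.
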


\begin{proof}

By \eqref{dg}, we get
$$
(s,g)\cdot_D(t,h)=(s(g\triangleright t), (g\triangleleft t) h).
$$
By Proposition \ref{pro:mpg}, we deduce that $(G,G_{\B};\triangleright, \triangleleft)$ is a matched pair of Lie groups.
\end{proof}

Let $(G,J)$ be a factorizable Poisson Lie group. Then $(G,\B=R_-\circ J^{-1})$ is a Rota-Baxter Lie group. By Theorem \ref{thm:mp}, we have a Lie group $G_{\B}\bowtie G$. On the other hand, there is a Lie group $G^*\bowtie G$ on the double of a Poisson Lie group.
\begin{pro}
With the notations above, we have a commutative diagram of Lie group isomorphisms:
\begin{eqnarray}\label{diagram2}
		\vcenter{\xymatrix{
			G^*\bowtie G \ar[d]_(0.45){J\times \id}\ar[r]^(0.45){\Psi} &G\times G\\
			G_{\B}\bowtie G,\ar[ur]_{\Phi}&
		}}
\end{eqnarray}
where $\Phi$ is defined by \eqref{Phi} and $\Psi$ is given by
\[\Psi(s,g)=(R_+(s) g,R_-(s) g),\qquad \forall s\in G^*,g\in G.\]
Moreover, the differential of this diagram is the diagram \eqref{diagram} in the case of $\lambda=1$.
\end{pro}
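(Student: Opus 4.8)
The plan is to lift the algebraic argument of Proposition \ref{algebradiagram} to the Lie group level. Two of the three arrows are already available: $\Phi$ is a Lie group isomorphism by Proposition \ref{doublegroup} combined with Theorem \ref{thm:mp} (which identifies $G_{\B}\bowtie G$ with $(G\times G,\cdot_D)$), and $J:G^*\to G_{\B}$ is a Lie group isomorphism by Theorem \ref{GFL}. So the work splits into three parts: verifying that the triangle \eqref{diagram2} commutes, showing that $J\times\id$ is an isomorphism of the two matched pair Lie groups, and identifying the differentials with \eqref{diagram} at $\lambda=1$.

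First I would check commutativity directly. For $s\in G^*$ and $g\in G$ we have $(J\times\id)(s,g)=(J(s),g)$, so by the defining formula \eqref{Phi},
\[
\Phi(J(s),g)=\big(J(s)\,\B(J(s))\,g,\ \B(J(s))\,g\big).
\]
Using $\B=R_-\circ J^{-1}$ gives $\B(J(s))=R_-(s)$, and from $J(s)=R_+(s)R_-(s)^{-1}$ in \eqref{J} we get $J(s)\B(J(s))=R_+(s)R_-(s)^{-1}R_-(s)=R_+(s)$. Hence $\Phi(J(s),g)=(R_+(s)g,R_-(s)g)=\Psi(s,g)$, which is the group counterpart of the identity $\phi\circ\big((\frac{1}{\lambda}I)\oplus\id\big)=\psi$ established in Proposition \ref{algebradiagram}.

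The main obstacle is to show that $J\times\id$ is a homomorphism for the two matched pair products, since a priori $J$ and $\id$ are only known to be homomorphisms of $G^*$ and $G$ separately, not of the doubles. I would avoid computing the dressing actions of the factorizable Poisson Lie group double by arguing through simple connectivity instead. All three groups $G^*\bowtie G$, $G_{\B}\bowtie G$ and $G\times G$ are connected and simply connected (each is diffeomorphic to a product of copies of the simply connected groups $G$ and $G^*$), and the Lie algebra map $\tilde I=(\frac{1}{\lambda}I)\oplus\id$ of \eqref{diagram} is an isomorphism from the double $\g_r^*\bowtie\g$ onto $\g_B\bowtie\g$ by Proposition \ref{algebradiagram}. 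Hence $\tilde I$ integrates to a unique Lie group homomorphism $F:G^*\bowtie G\to G_{\B}\bowtie G$, which is an isomorphism because $\tilde I$ is. Its restriction to the subgroup $G^*$ is the unique homomorphism integrating $\frac{1}{\lambda}I=I$; since $J=R_+R_-^{-1}$ has $J_{*e}=r_+-r_-=I$, that restriction is $J$. Likewise $F|_{G}$ integrates $\id$ and so equals $\id$. As $(s,g)=(s,e)\cdot_{\bowtie}(e,g)$ in $G^*\bowtie G$, the homomorphism $F$ agrees with the set map $J\times\id$, so $J\times\id$ is a Lie group isomorphism. With the commutativity of the previous step, $\Psi=\Phi\circ(J\times\id)$ is then an isomorphism as well.

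Finally, differentiating the three maps at the identity recovers \eqref{diagram} at $\lambda=1$. From \eqref{Phi} and $\B_{*e}=B$ one obtains $\Phi_{*e}(\xi,x)=(\xi+B\xi+x,B\xi+x)=\phi(\xi,x)$; from $\Psi(s,g)=(R_+(s)g,R_-(s)g)$ together with $(R_\pm)_{*e}=r_\pm$ one obtains $\Psi_{*e}(\xi,x)=(r_+\xi+x,r_-\xi+x)=\psi(\xi,x)$; and $J_{*e}=r_+-r_-=I$ yields $(J\times\id)_{*e}=I\oplus\id=(\frac{1}{\lambda}I)\oplus\id$ at $\lambda=1$. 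The commutativity of \eqref{diagram} is then the differential of the commutativity proved above, which completes the proof.
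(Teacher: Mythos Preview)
Your proof is correct and follows essentially the same route as the paper: both verify $\Phi\circ(J\times\id)=\Psi$ by the direct computation using $\B=R_-\circ J^{-1}$ and $J=R_+R_-^{-1}$, invoke Theorem \ref{GFL} for $J:G^*\to G_\B$, and identify the differentials with $\phi,\psi,I\oplus\id$ from \eqref{diagram}. Your argument is in fact more careful than the paper's on one point---the paper simply asserts that ``it is direct to see'' $J\times\id$ is a Lie group isomorphism of the doubles, whereas you justify this explicitly via uniqueness of integration on simply connected groups and the factorization $(s,g)=(s,e)\cdot_{\bowtie}(e,g)$.
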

\begin{proof}
It is known that $\Phi$ is a Lie group homomorphism that integrates $\phi$. By  Theorem \ref{GFL}, $J:G^*\to G_\B$ is the unique Lie group isomorphism integrating $I: \g^*_r\to \g_B$, as $G^*$ is simply connected. So it is direct to see $J\times \id:G^*\bowtie G\to G_\B\bowtie G$ is a Lie group isomorphism whose infinitesimal is $I\oplus \id: \g^*_r\bowtie \g\to \g_B\bowtie \g$ in \eqref{diagram}.
Moreover, by the calculation
\[\Phi\circ (J\times \id)(s,g)=\Phi(Js,g)=(Js R_-J^{-1}(Js)g,R_-J^{-1}(Js)g)=(R_+(s)g,R_-(s)g)=\Psi(s,g),\]
we see that $\Phi\circ (J\times \id)=\Psi$, which is also a Lie group isomorphism such that $\Psi_{*e}=\psi$. Thus we conclude that diagram \eqref{diagram2} is the integration of the diagram \eqref{diagram} in the case of $\lambda=1$.
\end{proof}

Since $\Phi$ is a Lie group isomorphism, it follows that $\im(\Phi|_{G_{\B}})\subset G\times G$ is a Lie subgroup, which is isomorphic to $G_{\B}$. It gives an alternative approach to the factorization of Rota-Baxter Lie groups given in \cite{GLS}. 

\begin{cor}
Let $(G,\B)$ be a Rota-Baxter Lie group. Then for any $g\in G$, there exists a unique decomposition
$g=g_+g_-^{-1}$ with
 $(g_+,g_-)\in \im(\Phi|_{G_{\B}})\subset G\times G$.
\end{cor}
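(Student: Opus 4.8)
The plan is to imitate the proof of the analogous corollary for Rota-Baxter Lie algebras (the decomposition $x=x_+-x_-$ associated to $\phi$), transporting the additive factorization to the multiplicative one $g=g_+g_-^{-1}$ by means of the group isomorphism $\Phi$. First I would pin down the image $\im(\Phi|_{G_\B})$ explicitly. Under the identification from Theorem \ref{thm:mp}, the subgroup $G_\B$ sits inside $(G\times G,\cdot_D)$ as the first factor, i.e.\ as the elements $(s,1)$ with $s\in G_\B$. Applying the formula \eqref{Phi} for $\Phi$ then gives
\[\Phi(s,1)=(s\B(s)\cdot 1,\B(s)\cdot 1)=(s\B(s),\B(s)),\]
so that $\im(\Phi|_{G_\B})=\{(s\B(s),\B(s)) : s\in G\}$.

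For existence, given $g\in G$ I would simply set $s=g$ and take $(g_+,g_-):=\Phi(g,1)=(g\B(g),\B(g))$, which lies in $\im(\Phi|_{G_\B})$ by construction. A one-line computation yields
\[g_+g_-^{-1}=g\B(g)\,\B(g)^{-1}=g,\]
so this is a valid decomposition. This is precisely the group analogue of the identity $x=\frac{1}{\lambda}(Bx+\lambda x)-\frac{1}{\lambda}Bx$ used in the linear case, specialized to weight $\lambda=1$ (which is the weight forced by the Lie group setting).

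For uniqueness, I would use that $\Phi$ is a Lie group isomorphism (Proposition \ref{doublegroup}), so $\Phi|_{G_\B}\colon G_\B\to\im(\Phi|_{G_\B})$ is a bijection; in particular every pair $(g_+,g_-)\in\im(\Phi|_{G_\B})$ equals $\Phi(s,1)=(s\B(s),\B(s))$ for a unique $s\in G$. For such a pair the product $g_+g_-^{-1}$ equals $s$ after the cancellation $\B(s)\B(s)^{-1}=1$, so if $g=g_+g_-^{-1}$ then necessarily $s=g$, forcing $(g_+,g_-)=(g\B(g),\B(g))$. This shows the decomposition is unique.

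I do not anticipate a serious obstacle here: once the image $\im(\Phi|_{G_\B})$ is correctly identified, both existence and uniqueness follow at once from the cancellation $\B(s)\B(s)^{-1}=1$ together with the bijectivity of $\Phi|_{G_\B}$. The only points requiring a little care are the bookkeeping of which factor of $(G\times G,\cdot_D)$ the descendent group $G_\B$ occupies, and keeping in mind that—in contrast with the weight-$\lambda$ algebraic statement—the group context automatically fixes $\lambda=1$.
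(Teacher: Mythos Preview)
Your proposal is correct and follows essentially the same approach as the paper: you exhibit the decomposition $g=g\B(g)\cdot\B(g)^{-1}$ with $(g\B(g),\B(g))=\Phi(g,1)\in\im(\Phi|_{G_\B})$, and deduce uniqueness from the bijectivity of $\Phi$. The paper's proof is the same one-line argument, stated more tersely.
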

\begin{proof}
We have $g=g\B(g) \B(g)^{-1}$, where $(g\B(g),\B(g))=\Phi(g,1)$ for $g\in G_{\B}$ and this decomposition is unique since $\Phi$ is a diffeomorphism.
\end{proof}

At the end of this subsection, we establish the relation between the Lie group $G_\B\bowtie G$ and the Lie algebra $\g_B\bowtie \g$, where $\g$ is the Lie algebra of $G$ and $B=\B_{*e}$ is the differentiated Rota-Baxter operator of weight 1 on $\g.$
\begin{pro}
Let $(G,\B)$ be a Rota-Baxter Lie group whose infinitesimal Rota-Baxter Lie algebra is $(\g,B=\B_{*e})$. Then the Lie algebra of the Lie group $G_\B\bowtie G$ given in Theorem \ref{thm:mp} is the Lie algebra $\g_B\bowtie \g$ given in Theorem \ref{thm:mpB} and $\Phi_{*e}=\phi$.
\end{pro}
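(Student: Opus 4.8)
The plan is to deduce everything from the fact that $\Phi$ is a Lie group isomorphism onto the direct product group, together with a direct computation showing that its differential at the identity is precisely $\phi$. First I would record that $\B(e)=e$: setting $g_1=g_2=e$ in \eqref{RB} gives $\B(e)^2=\B(e)$, so $\B(e)=e$, and hence $B=\B_{*e}:\g\to\g$ is a well-defined linear endomorphism fixing the base point. This is exactly what is needed to make the product-rule computation below clean.

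Next I would compute $\Phi_{*e}$ directly. Identify the tangent space of $G\times G$ at $(e,e)$ with $\g\oplus\g$, and for $(\xi,x)\in\g\oplus\g$ choose paths $s(t),g(t)$ in $G$ with $s(0)=g(0)=e$, $\dot s(0)=\xi$, $\dot g(0)=x$. Applying $\Phi$ from \eqref{Phi} yields the path $t\mapsto\big(s(t)\B(s(t))g(t),\,\B(s(t))g(t)\big)$, all of whose factors pass through $e$ at $t=0$. Using the standard fact that the differential of group multiplication at $(e,e)$ is addition on $\g$, so that the derivative at $t=0$ of a product of paths through the identity is the sum of the individual derivatives, together with $\frac{d}{dt}\big|_{t=0}\B(s(t))=\B_{*e}\dot s(0)=B\xi$, I obtain
\begin{equation*}
\Phi_{*e}(\xi,x)=(\xi+B\xi+x,\;B\xi+x).
\end{equation*}
Since in the Lie group setting the weight is $\lambda=1$, this is exactly $\phi(\xi,x)=(B\xi+\lambda\xi+x,B\xi+x)$ of \eqref{phi}; hence $\Phi_{*e}=\phi$.

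Finally I would invoke naturality. By Proposition \ref{doublegroup}, $\Phi:(G\times G,\cdot_D)\to(G\times G,\cdot_{G\times G})$ is an isomorphism onto the direct product Lie group, whose Lie algebra is the direct sum Lie algebra $(\g\oplus\g,[\cdot,\cdot]_{\g\oplus\g})$. Therefore $\Phi_{*e}$ is an isomorphism of Lie algebras from the Lie algebra of $(G\times G,\cdot_D)$ onto $(\g\oplus\g,[\cdot,\cdot]_{\g\oplus\g})$, so the bracket of the former is the pullback of the direct sum bracket along $\Phi_{*e}=\phi$, namely $[e_1,e_2]=\phi^{-1}[\phi(e_1),\phi(e_2)]_{\g\oplus\g}$. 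But by Proposition \ref{doublelie} this pullback bracket is precisely $[\cdot,\cdot]_D$, so the Lie algebra of $G_\B\bowtie G=(G\times G,\cdot_D)$ equals $(\g\oplus\g,[\cdot,\cdot]_D)=\g_B\bowtie\g$, which together with $\Phi_{*e}=\phi$ is the assertion. The only genuinely computational step is the differentiation of $\Phi$; everything else is functoriality, so I do not expect a real obstacle—the only care needed is in tracking $\B(e)=e$ and $B=\B_{*e}$ so that the product rule applies term by term.
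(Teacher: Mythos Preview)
Your proof is correct and takes a genuinely different route from the paper's. The paper computes the Lie bracket of $G_\B\bowtie G$ directly: it differentiates the group conjugation $(1,\exp^{\epsilon x})\cdot_D(\exp^{\epsilon'\xi},1)\cdot_D(1,\exp^{-\epsilon x})$ using the explicit formula \eqref{dg} for $\cdot_D$, obtaining the cross term $([x,\xi]_\g,-B[x,\xi]_\g+[x,B\xi]_\g)$ by hand, and only at the end remarks that $\Phi_{*e}=\phi$ is obvious. You instead front-load the computation of $\Phi_{*e}$ and then let functoriality do the work: since $\Phi$ is already known (Proposition~\ref{doublegroup}) to be a Lie group isomorphism onto the direct product, its differential transports the bracket, and Proposition~\ref{doublelie} identifies the pullback of $[\cdot,\cdot]_{\g\oplus\g}$ along $\phi$ as $[\cdot,\cdot]_D=\g_B\bowtie\g$. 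Your approach is cleaner and avoids the second-derivative exponential computation entirely; the paper's approach has the minor virtue of making the cross bracket visible without reference to $\Phi$, but at the cost of a longer calculation. Your care in noting $\B(e)=e$ and $\lambda=1$ is exactly what is needed to make the product-rule step and the identification $\Phi_{*e}=\phi$ rigorous.
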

\begin{proof}
Denote by $\Courant{\cdot,\cdot}$ the induced Lie bracket on $\g\oplus \g$  of the Lie group $G_\B\bowtie G$. It is simple to see  that
\[\Courant{(0,x),(0,y)}=(0,[x,y]_\g),\qquad \Courant{(\xi,0),(\eta,0)}=([\xi,\eta]_B,0).\]
Then, by \eqref{dg}, we have
\begin{eqnarray*}
&&\Courant{(0,x),(\xi,0)}\\&=&\frac{d}{d\epsilon}\frac{d}{d\epsilon'}|_{\epsilon,\epsilon'=0}(1, \exp^{\epsilon x})\cdot_D (\exp^{\epsilon' \xi},1)\cdot_D(1,\exp^{-\epsilon x})\\ &=&\frac{d}{d\epsilon}\frac{d}{d\epsilon'}|_{\epsilon,\epsilon'=0}(1,\exp^{\epsilon x})\cdot_D(\exp^{\epsilon' \xi},\exp^{-\epsilon x})
\\ &=&\frac{d}{d\epsilon}\frac{d}{d\epsilon'}|_{\epsilon,\epsilon'=0}
(\exp^{\epsilon x} \exp^{\epsilon' \xi}\exp^{-\epsilon x}, \B(\exp^{\epsilon x} \exp^{\epsilon' \xi}\exp^{-\epsilon x})^{-1}\exp^{\epsilon x}\B(\exp^{\epsilon' \xi})\exp^{-\epsilon x})\\ &=&([x,\xi]_\g, \frac{d}{d\epsilon}\frac{d}{d\epsilon'}|_{\epsilon,\epsilon'=0}\B(\exp^{\epsilon x} \exp^{\epsilon' \xi}\exp^{-\epsilon x})^{-1}+\frac{d}{d\epsilon}\frac{d}{d\epsilon'}|_{\epsilon,\epsilon'=0}\exp^{\epsilon x}\B(\exp^{\epsilon' \xi})\exp^{-\epsilon x})\\ &=&([x,\xi]_\g, -B[x,\xi]_\g+[x,B\xi]_\g)\\
&=&[(0,x),(\xi,0)]_{\g_B\bowtie \g}.
 \end{eqnarray*}
 So we proved $\Courant{\cdot,\cdot}=[\cdot,\cdot]_{\g_B\bowtie \g}$. Namely, the Lie algebra of $G_{\B}\bowtie G$ is $\g_B\bowtie \g$. It is obvious that $\Phi_{*e}=\phi$.
\end{proof}


\begin{thebibliography}{ab}

\bibitem{ARR}J. Avan,   E. Ragoucy and  V. Rubtsov,  Quantization and dynamisation of trace-Poisson brackets, \emph{Comm. Math. Phys.} {\bf 341} (2016), no. 1, 263-287.

\bibitem{Bai}
C. Bai, A unified algebraic approach to the classical Yang-Baxter equation, \emph{J. Phys. A: Math. Theor.} {\bf 40} (2007), 11073-11082.

\bibitem{BBGN} C. Bai, O. Bellier, L. Guo and X. Ni, Spliting of
operations, Manin products and Rota-Baxter operators,  {\em Int. Math. Res. Not.} {\bf 3} (2013), 485-524.

\bibitem{BGN2010}
C. Bai, L. Guo and  X. Ni, Nonabelian generalized Lax pairs, the classical Yang-Baxter equation and PostLie algebras, \emph{Comm. Math. Phys.} {\bf 297} (2010), 553-596.

\bibitem{BaG} V. G. Bardakov and  V. Gubarev, Rota-Baxter operators on groups, arXiv:2103.01848.

\bibitem{BaG2} V. G. Bardakov and  V. Gubarev, Rota-Baxter groups, skew left braces, and the Yang-Baxter equation, arXiv:2105.00428.

\bibitem{BD} A. A. Belavin and V. G. Drinfel'd, Solutions of the classical Yang-Baxter equation for simple Lie algebras, {\it Funct. Anal. Appl.} {\bf16} (1982), 159-180.

\bibitem{Bor}
M. Bordemann, Generalized Lax pairs, the modified classical Yang-Baxter equation, and affine geometry of Lie groups, \emph{Comm. Math. Phys.} {\bf 135} (1990), 201-216.

\bibitem{CK}
A. Connes and D. Kreimer, Renormalization in quantum field theory and the Riemann-Hilbert problem. I. The Hopf algebra structure of graphs and the main theorem, {\em  Comm. Math. Phys. } {\bf 210} (2000), 249-273.


\bibitem{D1} V. G. Drinfel'd, Quantum groups, {\it Proc. ICM}, Berkeley, {\bf1} (1986), 789-820.

\bibitem{Fard}
K Ebrahimi-Fard, D. Manchon and F. Patras, A noncommutative Bohnenblust-Spitzer identity for Rota-Baxter algebras solves Bogoliubov's counterterm recursion,  {\em J. Noncommut. Geom.} {\bf 3} (2009), 181-222.

\bibitem{EK} P.  Etingof and D. Kazhdan, Quantization of Lie bialgebras, I, {\it Selecta Math.} 2 (1996), no. 1, 1-41.

\bibitem{EV} P.  Etingof and A. Varchenko, Geometry and classification of solutions of the classical dynamical Yang-Baxter equation, {\it Comm. Math. Phys.} 192 (1998), 77-120.

\bibitem{F}G. Felder, Conformal field theory and integrable systems associated to elliptic curves, {\it Proc. ICM Zurich}, Birkhauser, Basel, (1994), 1247-1255.

\bibitem{G1} M. Goncharov, On Rota-Baxter operators of non-zero weight arisen from the solutions of the classical Yang-Baxter equation, {\it Sib. El. Math. Rep.} {\bf14} (2017), 1533-1544.

\bibitem{G2} M. Goncharov, Rota-Baxter operators and non-skew-symmetric solutions of the classical Yang-Baxter equation on quadratic Lie algebra, {\it Sib. El. Math. Rep.} {\bf16} (2019), 2098-2109.

\bibitem{G} M. Goncharov, Rota-Baxter operators on cocommutative Hopf algebras, {\it J. Algebra} {\bf582} (2021), 39-56.

\bibitem{G3} M. Goncharov and V. Gubarev, Double Lie algebras of nonzero weight, arXiv: 2104.13678v1.

\bibitem{Goncharov}
M. E. Goncharov and P. S. Kolesnikov, Simple finite-dimensional double algebras, {\em J. Algebra}  {\bf 500} (2018), 425-438.

\bibitem{Gub}
L. Guo,  An introduction to Rota-Baxter algebra. Surveys of Modern Mathematics, 4. International Press, Somerville, MA; Higher Education Press, Beijing, 2012. xii+226 pp.

\bibitem{JS}J. Jiang and Y. Sheng, Representations and cohomologies of relative Rota-Baxter Lie algebras and applications, arXiv:2108.08294.

\bibitem{GLS} L. Guo, H. Lang and Y. Sheng, Integration and geometrization of Rota-Baxter Lie algebras, {\it Adv. Math.} {\bf387} (2021), 107834.

\bibitem{GouLin} L. Guo and Z. Lin, Representations and modules of Rota-Baxter algebras, arXiv:1905.01531.

\bibitem{K} Y. Kosmann-Schwarzbach, Lie bialgebras, Poisson Lie groups and dressing transformation, In: Kosmann-Schwarzbach Y., Tamizhmani K.M., Grammaticos B. (eds) Integrability of Nonlinear Systems. Lecture Notes in Physics, vol 638. Springer, Berlin, Heidelberg, 107-173, 2004.

\bibitem{KM} Y. Kosmann-Schwarzbach and F. Magri, Poisson-Lie groups and complete integrability I: Drinfeld bialgebras, dual extensions and their canonical representations, {\it Ann. Inst. Henri Poincar\'e} {\bf49} (1988), no. 4, 433-460.

    \bibitem{Ku}
B. A. Kupershmidt, What a classical $r$-matrix really is, \emph{J. Nonlinear Math. Phys.} {\bf 6} (1999), 448-488.

\bibitem{Lu} J-H. Lu, Multiplicative and affine Poisson structures on Lie groups, Ph.D thesis, UC Berkeley, 1990.

\bibitem{Lu2} J-H. Lu and A. Weinstein, Poisson Lie groups, dressing transformations and Bruhat decompositions, {\it J. Diff. Geom.} {\bf31} (1990), 501-526.


\bibitem{Majid} S. H. Majid, Matched pairs of Lie groups associated to solutions of the Yang Baxter equations, {\it Pacific J. Math.} {\bf141} (1990), no. 2, 311-332.

\bibitem{PBG} J. Pei, C. Bai and L. Guo, Splitting of operads and Rota-Baxter operators on operads, {\em Appl. Cate. Stru.} {\bf 25} (2017), 505-538.

\bibitem{QP} L. Qiao and J. Pei, Representations of polynomial Rota-Baxter algebras, \emph{J. Pure Appl. Algebra} \textbf{222} (2018), 1738-1757.

\bibitem{RS} N. Reshetikhin and M. A. Semenov-Tian-Shansky, Quantum $R$-matrices and factorization problems, {\it J. Geom. Phys.} {\bf5} (1988), 533-550.

\bibitem{STS} M. A. Semenov-Tian-Shansky, What is a classical $r$-matrix? {\it Funct. Anal. Appl.} {\bf 17} (1983), 259-272.


\bibitem{S2} M. A. Semenov-Tian-Shansky, Integrable systems and factorization problems, {\it Operator Theory: Advances and Applications} {\bf141} (2003), 155-218, Birkhauser Verlag Basel.

\bibitem{Schedler}T. Schedler,  Poisson algebras and Yang-Baxter equations, {\it Advances in quantum computation}, 91-106, Contemp. Math., 482, Amer. Math. Soc., Providence, RI, 2009.

\bibitem {18} M. Takeuchi, Matched pairs of groups and bismash products of Hopf algebras, {\it Comm. Algebra} {\bf9} (1981), 841-882.

\bibitem{Uchino08}
K. Uchino,   Quantum analogy of Poisson geometry, related dendriform algebras and Rota-Baxter operators. \emph{Lett. Math. Phys.} {\bf85} (2008),  91-109.

\bibitem{Uch}
K. Uchino, Twisting on associative algebras and Rota-Baxter type operators, {\em J. Noncommut. Geom.} {\bf 4} (2010), 349-379.

\bibitem{WX} A. Weinstein and P. Xu, Classical solutions of the quantum Yang-Baxter equation, {\it Comm. Math. Phys.} {\bf148} (1992), 309-343.

\bibitem{Yu-Guo}
H. Yu, L. Guo and J.-Y. Thibon, Weak quasi-symmetric functions, Rota-Baxter algebras and Hopf algebras, \emph{Adv. Math.} {\bf 344} (2019), 1-34.


\end{thebibliography}
\end{document}